\def\thm@space@setup{\thm@preskip=0pt
\thm@postskip=3pt}
\newtheorem{theorem}{Theorem}[section]
\newtheorem{lemma}[theorem]{Lemma}
\newtheorem{proposition}[theorem]{Proposition}
\newtheorem{definition}[theorem]{Definition}
\newenvironment{proof-sketch}{{\bf Proof Sketch:}}{\hfill\rule{2mm}{2mm}}
\newenvironment{note}{\noindent\textbf{Note:}}
\newif\iffullresults
\newcommand\Mark[1]{\textsuperscript#1}
\title{Max-min Fair Rate Allocation and Routing in Energy Harvesting Networks: Algorithmic Analysis}   
\iffullresults\author{
Jelena Mara\v{s}evi\'{c}\Mark{1}, Cliff Stein\Mark{2}, Gil Zussman\Mark{1}\\
       \Mark{1}Department of Electrical Engineering,\\ \Mark{2}Department of Industrial Engineering and Operations Research\\
       {Columbia University}\\
       {\{jelena@ee, cliff@ieor, gil@ee\}.columbia.edu}
} 
\else\author{ACM MobiHoc Submission \#1569892459} 
\date{}
\newcommand{\littlesum}{\mathop{\textstyle\sum}}
\begin{document}

\maketitle

\begin{abstract}
This \iffullresults paper \else paper \fi considers max-min fair rate allocation and routing in energy harvesting networks where fairness is required among \emph{both the nodes and the time slots}. 
Unlike most previous work on fairness, we focus on \emph{multihop topologies} and consider \emph{different routing methods}.
We assume a predictable energy profile and focus on the design of efficient and optimal algorithms that can serve as benchmarks for distributed and approximate algorithms.
We first develop an algorithm that obtains a max-min fair rate assignment for any given (time-variable or time-invariable) \emph{unsplittable routing} or a \emph{routing tree}. For \emph{time-invariable unsplittable routing}, we also develop an algorithm that finds routes that maximize the minimum rate assigned to any node in any slot. For \emph{fractional routing}, we study the joint routing and rate assignment problem. We develop an algorithm for the \emph{time-invariable} case with constant rates. We show that the \emph{time-variable}  case is at least as hard as the 2-commodity feasible flow problem and design an FPTAS to combat the high running time. Finally, we show that finding an unsplittable routing or a routing tree that provides lexigographically maximum rate assignment (i.e., that is the best in the max-min fairness terms) is NP-hard, even for a time horizon of a single slot.  
Our analysis provides insights into the problem structure and can be applied to other related fairness problems.

\noindent {\bf Keywords:} Energy Harvesting, Energy Adaptive Networking, Sensor networks, Routing, Fairness
\end{abstract}

\section{Introduction}
\begin{wrapfigure}{r}{0.4\textwidth}
\centering
\vspace{-10pt}\includegraphics[width=0.4\textwidth]{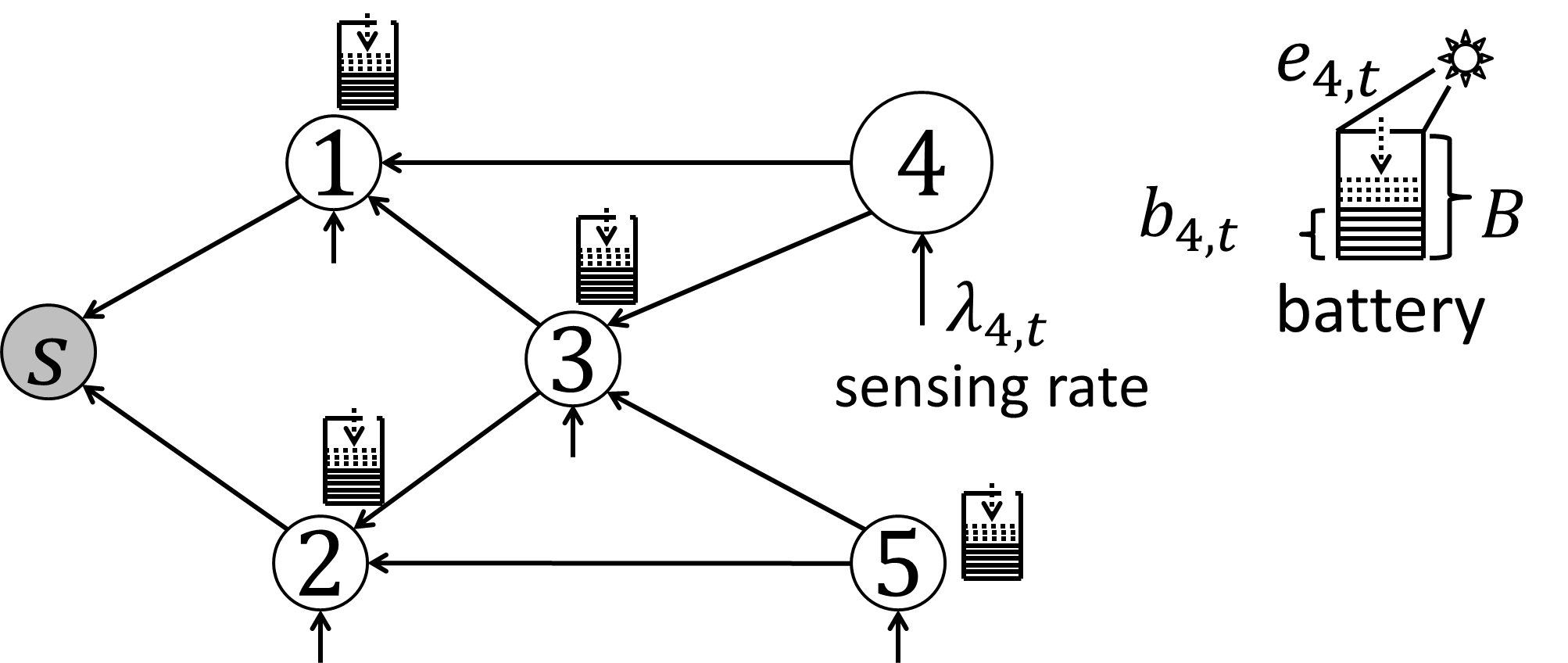} \vspace{-10pt}
\label{fig:example network}
\caption{A simple energy harvesting network: the nodes sense the environment and forward the data to a sink $s$. Each node has a battery of capacity $B$. At time $t$ a node $i$\rq{}s battery level is $b_{i, t}$, it harvests $e_{i, t}$ units of energy, and senses at data rate $\lambda_{i, t}$. }
\vspace{-10pt}
\end{wrapfigure}
Recent advances in the development of ultra-low-power transceivers and energy harvesting devices (e.g., solar cells) will enable self-sustainable and perpetual wireless networks \cite{gorlatova2009challenge, debruin2013monjolo, rob-experimental}. In contrast to legacy wireless sensor networks, where the available energy only decreases as the nodes sense and forward data, in energy harvesting networks the available energy can also increase through a replenishment process. This results in significantly more complex variations of the available energy, which pose challenges in the design of resource allocation and routing algorithms.

The problems of resource allocation, scheduling, and routing in energy harvesting networks have received considerable attention \cite{ozel2011transmission, node-koksal, chen2011finite, maria-wiopt2011, maria-infocom2011, uysal2013finite, dohler-link, gurakan2013energy, shroff-srikant, sarkar-kar, neely, gatzianas, OSU--lexicographic}. Most existing work considers simple networks consisting of a single node or a link {\cite{ozel2011transmission, node-koksal, maria-wiopt2011, maria-infocom2011, uysal2013finite, dohler-link}.} Moreover, fair rate assignment has not been thoroughly studied, and most of the work either focuses on maximizing the total (or average) throughput \cite{ozel2011transmission, node-koksal, chen2011finite, uysal2013finite, dohler-link, shroff-srikant, sarkar-kar, mao-koksal-shroff, neely, gatzianas}, or considers fairness either \emph{only over nodes} \cite{OSU--lexicographic} or \emph{only over time}\cite{maria-wiopt2011, maria-infocom2011}. An exception is \cite{gurakan2013energy}, which requires fairness over both the nodes and the time, but is limited to \emph{two nodes}.

\emph{In this \iffullresults paper\else paper\fi, we study the max-min fair rate assignment and routing problems, requiring fairness over both nodes and time slots, and with the goal
of designing optimal and efficient algorithms.}

Following \cite{chen2011finite, maria-wiopt2011, maria-infocom2011, gurakan2013energy, shroff-srikant, OSU--lexicographic}, we assume that the harvested energy is known for each node over a finite time horizon. Such a setting corresponds to a highly-predictable energy profile, and can also be used as a benchmark for evaluating algorithms designed for unpredictable energy profiles.
We consider an energy harvesting  sensor network with a single sink node, and network connectivity modeled by a directed graph (Fig.~\ref{fig:example network}). 
Each node senses some data from its surrounding (e.g., air pressure, temperature, radiation level), and sends it to the sink. The nodes spend their energy on sensing, sending, and receiving data.

\subsection{Fairness Motivation}

Two natural conditions that a network should satisfy are: 
\begin{enumerate}[label=(\roman*), topsep=3pt, itemsep = -2pt]
\item\label{item:fairness-nodes} balanced data acquisition across the entire network, and 
\item\label{item:fairness-time} persistent operation (i.e., even when the environmental energy is not available for harvesting). 
\end{enumerate}

 \begin{wrapfigure}{r}{0.4\textwidth}
\centering
\includegraphics[scale=0.15]{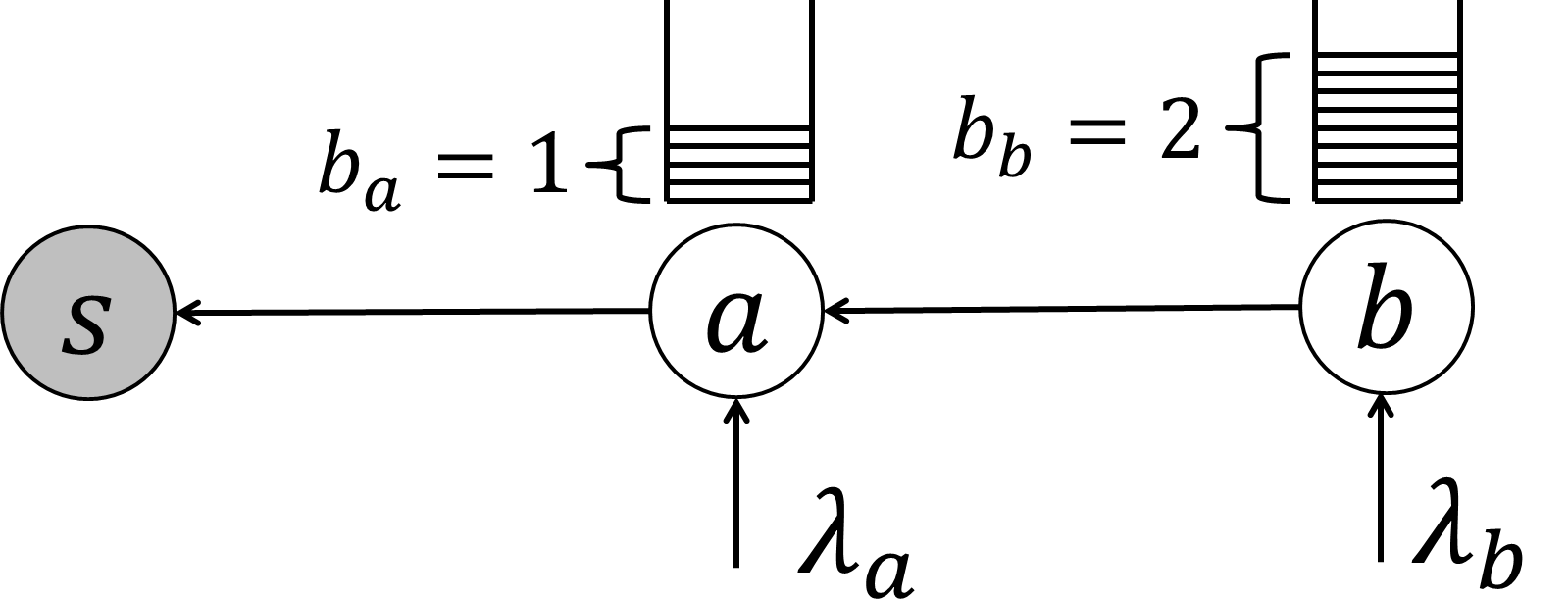}\vspace{-7pt}
\caption{An example of a network in which throughput maximization can result in a very unfair rate allocation among the nodes.}
\label{fig:unfairness}
\vspace{-10pt}
\end{wrapfigure}
The condition \ref{item:fairness-nodes} is commonly reinforced by requiring fairness of the sensing rates over network nodes. We note that in the network model we consider, due to these different energy costs for sending, sensing, and receiving data, throughput maximization can be inherently unfair even in the static case. Consider a simple network with two energy harvesting nodes $a$ and $b$ and a sink $s$ illustrated in Fig.~\ref{fig:unfairness}. Assume that $a$ has one unit of energy available, and $b$ has two units of energy. Let $c_{\text{st}}$ denote the joint cost of sensing and sending a unit flow, $c_{\text{rt}}$ denote the joint cost for receiving and sending a unit flow. Let $\lambda_a$ and $\lambda_{b}$ denote the sensing rates assigned to the nodes $a$ and $b$, respectively. Suppose that the objective is to maximize $\lambda_a+\lambda_b$. If $c_{\text{st}}=1$, $c_{\text{rt}}=2$, then in the optimal solution $\lambda_a=1$ and $\lambda_b=0$. Conversely, if $c_{\text{st}}=2$, $c_{\text{rt}}=1$, then in the optimal solution $\lambda_a=0$ and $\lambda_b=1$. This example easily extends to more general degenerate cases in which throughput-maximum solution assigns non-zero sensing rates only to one part of the network, whereas the remaining nodes do not send any data to the sink.

One approach to achieving \ref{item:fairness-time} is by assigning constant sensing rates to the nodes. However, this approach can result in underutilization of the available energy. As a simple example, consider a node that harvests outdoor light energy over a 24-hour time horizon. If the battery capacity is small, then the sensing rate must be low to prevent battery depletion during the nighttime. However, during the daytime, when the harvesting rates are high, a low sensing rate prevents full utilization of the energy that can be harvested. Therefore, it is advantageous to vary the sensing rates over time. However, fairness must be required over time slots to prevent the rate assignment algorithm from assigning high rates during periods of high energy availability, and zero rates when no energy is available for harvesting. 

\subsection{Routing Types}

We consider different routing types, which are illustrated in Fig.~\ref{fig:model}.

 \begin{figure}[h]
\centering
\hspace{\fill}
\subfigure[Routing tree.]{\label{fig:model--tree}\includegraphics[scale = 0.14]{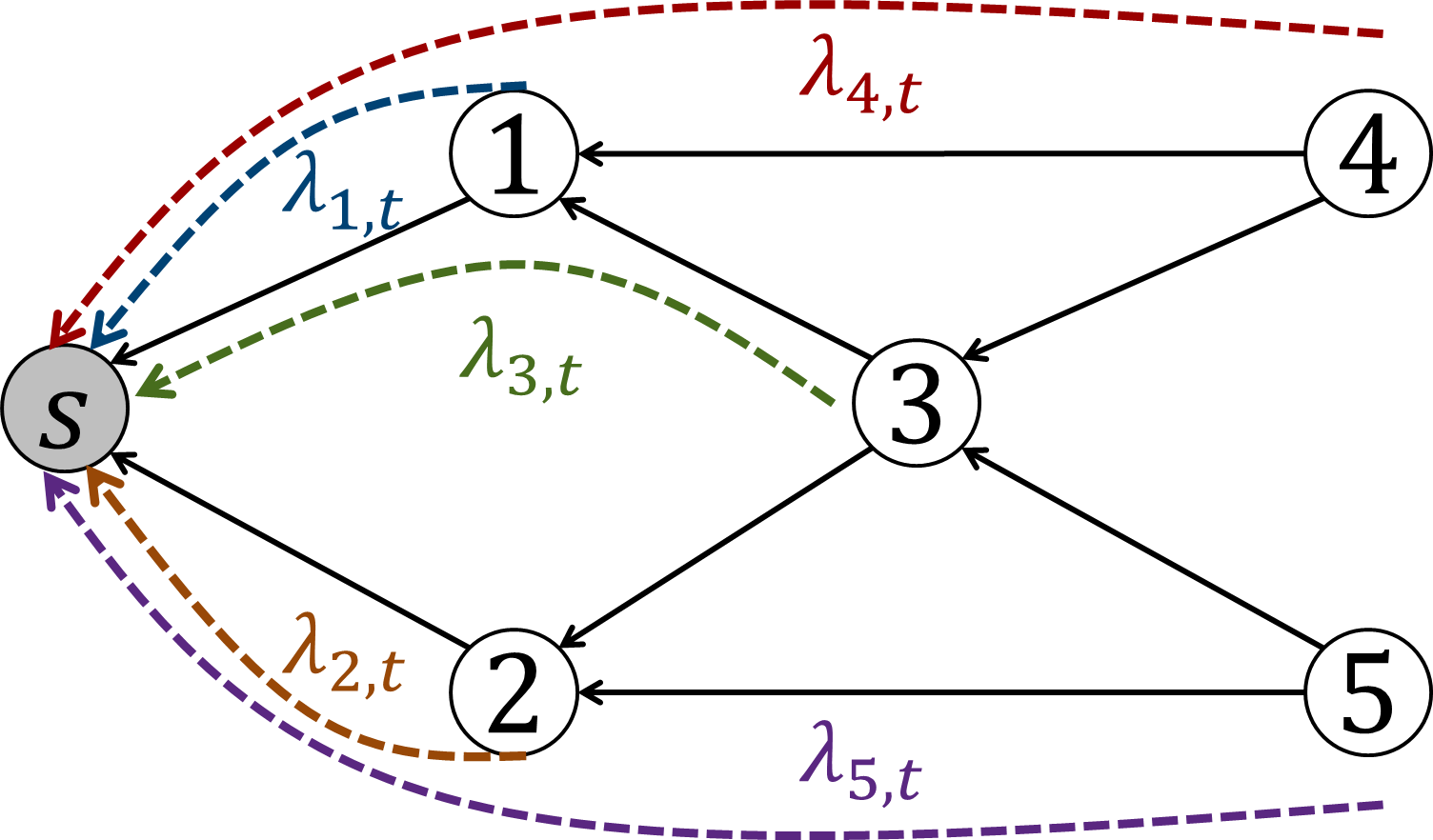}}\hspace{\fill}
\subfigure[Unsplittable routing.]{\label{fig:model--unsplittable}\includegraphics[scale = 0.14]{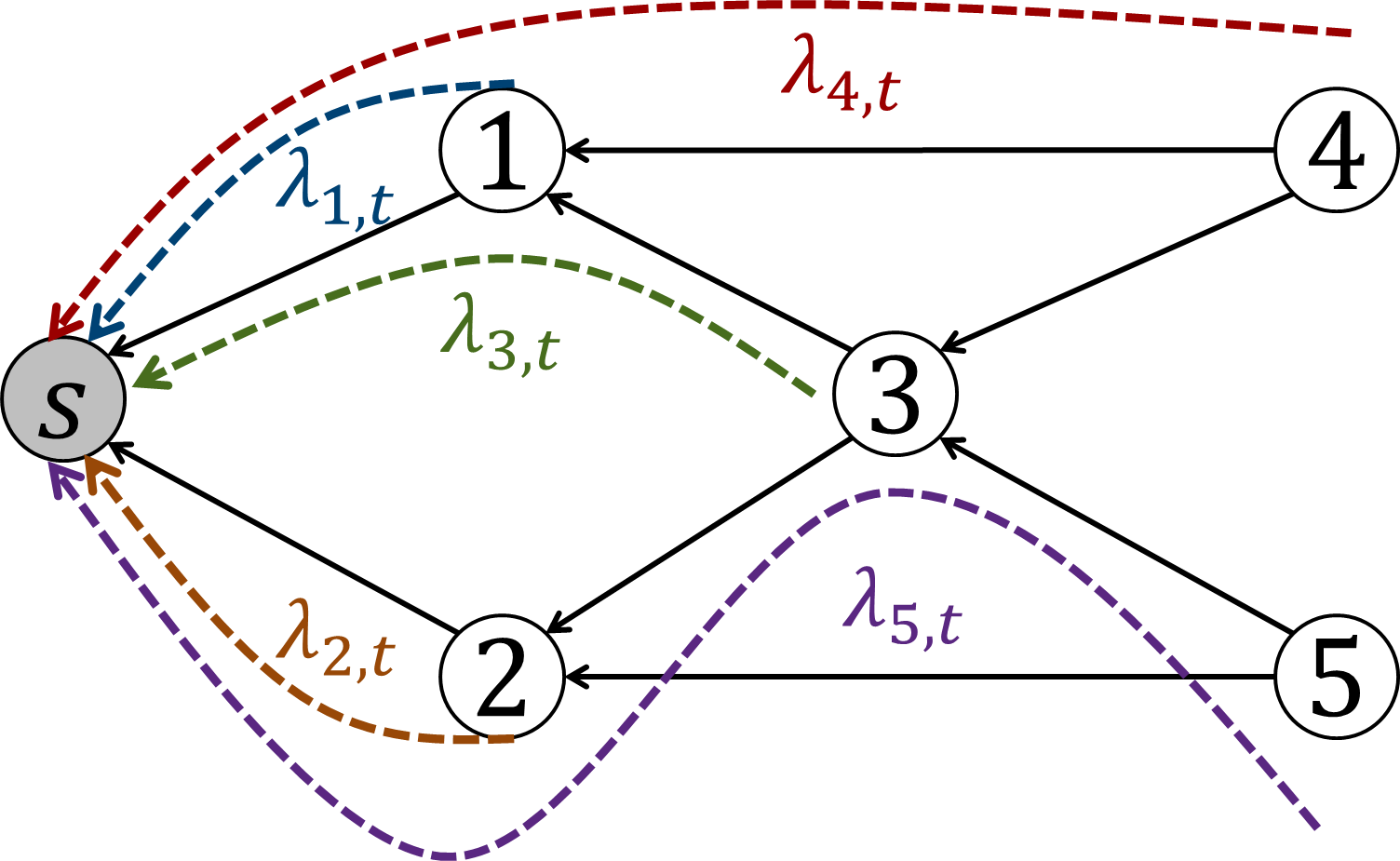}}\hspace{\fill}
\subfigure[Fractional routing.]{\label{fig:model--fractional}\includegraphics[scale = 0.14]{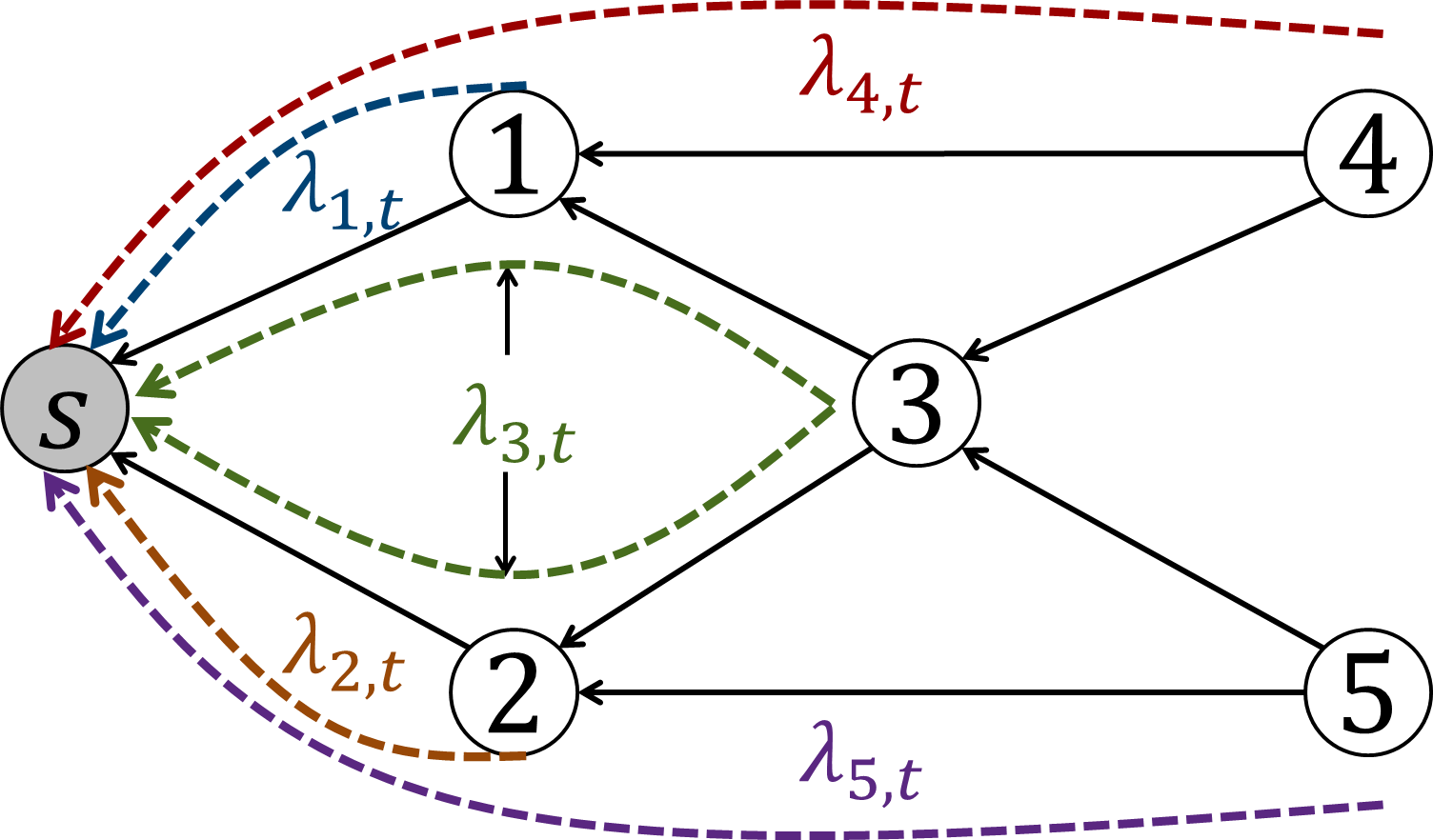}}\hspace*{\fill}
\caption{Routing types: \subref{fig:model--tree} a routing tree, \subref{fig:model--unsplittable} unsplittable routing: each node sends its data over one path, \subref{fig:model--fractional} fractional routing: nodes can send their data over multiple paths. Paths are represented by dashed lines.}
\label{fig:model}
\end{figure}
 
Each routing type incurs different trade-offs between the supported sensing rates\footnote{A metric of performance can be the minimum sensing rate that gets assigned to any node in any time slot.} and the required amount of control information. Routing types with higher number of active links require more control information to be exchanged between neighboring nodes (e.g., to maintain synchronization), and complicate the transmission and/or sleep-wake scheduling implementation. Moreover, energy consumed by the control messages can affect achievable rates significantly, due to limited energy budget, as confirmed via experiments in \cite{rob-experimental}. Below we outline the main characteristics of the routing types we consider.

\textbf{Routing Tree--}the simplest form of routing, in which every node sends all of the data it senses and receives to a single neighboring (parent) node. It requires minimum number of active links, yielding minimum energy consumption due to control messages. 
However, in general, it provides the lowest sensing rates (see more details below).

\textbf{Unsplittable Routing--}a single-path routing, in which every node sends all of its sensed data over a single path to the sink (a routing tree is a special case of the unsplittable routing, in which all the paths incoming into node $i$ outgo via the same edge). There are simple cases in which unsplitttable routing provides a rate assignment with the minimum sensing rate $\Omega(n)$ times higher than in a routing tree, where $n$ is the number of nodes\iffullresults{ (see Fig.~\ref{fig:tradeoffs--tree-unsplittable} for an example)}\else\cite{mobihoc-tech-rep}\fi. However, in general, it has higher number of active links than the routing tree, yielding higher energy consumption for control information.
\iffullresults
 \begin{figure*}[t]
\centering
\begin{minipage}[t]{.46\textwidth}
\centering
\includegraphics[scale=0.25]{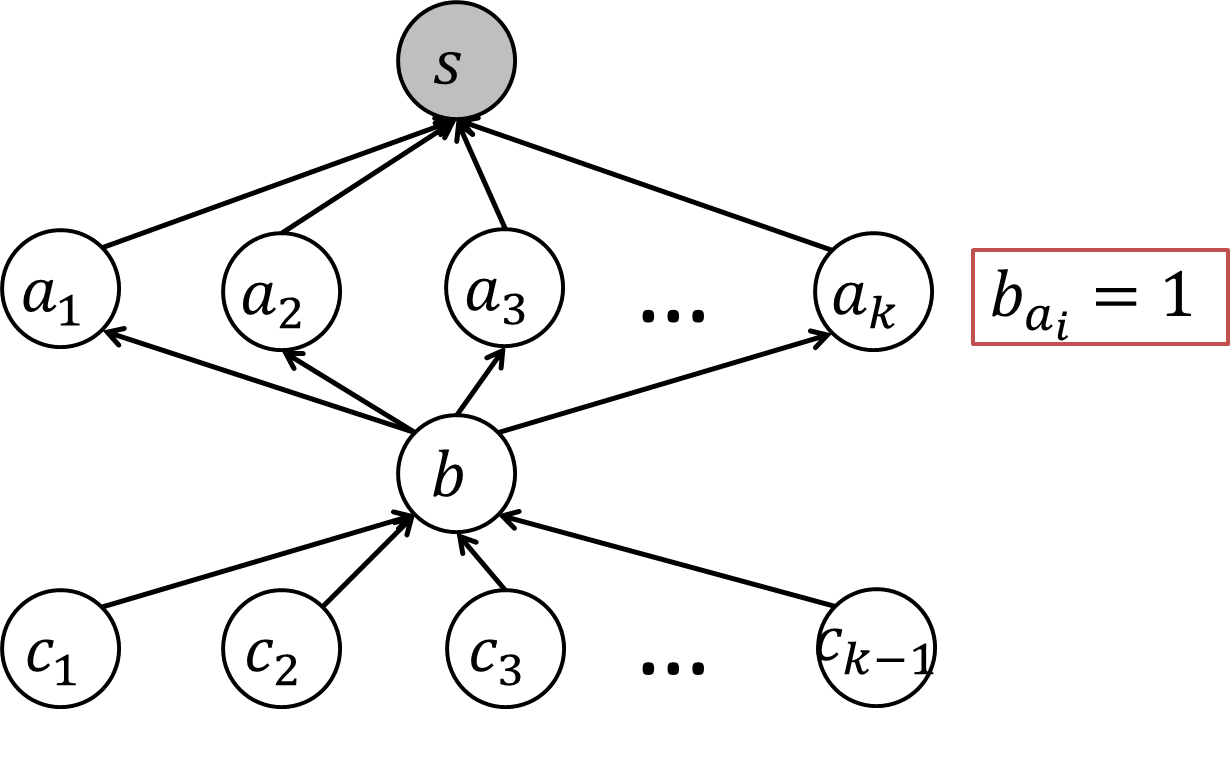}\vspace{-10pt}
\caption{A network example in which unsplittable routing provides minimum sensing rate that is $\Omega(n)$ higher than for any routing tree. Assume $c_{\text{st}}=c_{\text{rt}}=1$ and $T=1$. Available energies at all the nodes $a_i$, $i\in \{1,...,k\}$ are equal to 1, as shown in the box next to the nodes. Other nodes have energies that are high enough so that they are not constraining. In any routing tree, $b$ has some $a_i$ as its parent, so $\lambda_{a_i}=\lambda_b=\lambda_{c_1}=...=\lambda_{c_{k-1}}=1/(k+1)$ and $\lambda_j=1$ for $j\neq i$. In an unsplittable routing with paths $p_{a_i}=\{a_i, s\}$, $p_{c_i}=\{c_i, b, a_i, s\}$, and $p_{b}=\{b, a_k, s\}$, all the rates are equal to 1/2. As $k=\Theta(n)$, the minimum rate improves by $\Omega((k+1)/2)=\Omega(n)$.}\vspace{-10pt}
\label{fig:tradeoffs--tree-unsplittable}
\end{minipage}\hfill
\begin{minipage}[t]{.5\textwidth}
\centering
\includegraphics[scale=0.25]{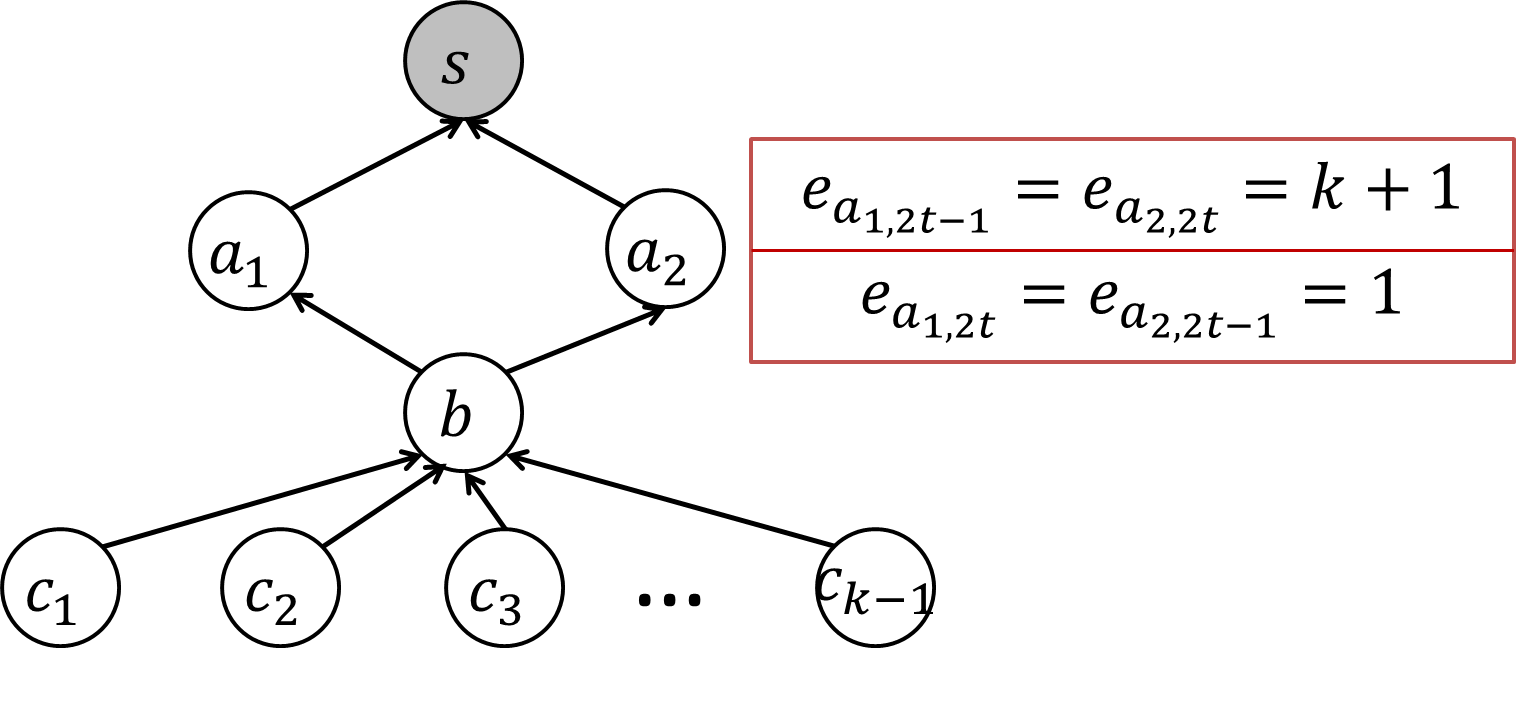}\vspace{-10pt}
\caption{A network example in which time-variable routing solution provides minimum sensing rate that is $\Omega(n)$ higher than for any time-invariable routing. The batteries of $a_1$ and $a_2$ are initially empty, and the battery capacity at all the nodes is $B=1$. Harvested energies over time slots for nodes $a_1$ and $a_2$ are shown in the box next to them. Other nodes are assumed not to be energy constraining. In any time-invariable routing, at least on of $a_1, a_2$ has $\Omega(k)=\Omega(n)$ descendants, forcing its rate to the value of $1/\Omega(n)$ in the slots in which the harvested energy is equal to 1. In a routing in which $b$ sends the data only through $a_1$ in odd slots and only through $a_2$ in even slots: $\lambda_b=\lambda_{c_1}=...=\lambda_{c_{k-1}}=1$.}\vspace{-10pt}
\label{fig:tradeoffs--time-variability}
\end{minipage}
\end{figure*}
\fi

\textbf{Fractional Routing--}a multi-path routing, in which each node can split its data over multiple paths to the sink (unsplittable routing is a special case of fractional routing in which every node has one path to the sink). It is the most general routing that subsumes both routing trees and unsplittable routings, and, therefore, provides the best sensing rates. However, it utilizes the highest number of links, yielding the highest energy consumption due to control messages.

\textbf{Time-invariable vs Time-variable Routing--} 
A routing is \emph{time-invariable} if every node uses the same (set of) path(s) in each time slot to send its data to the sink. If the paths change over time, the routing is \emph{time-variable}.\footnote{Whether the rates are constant or time-variable is independent of whether the routing is time-variable or not.} While there are cases in which time-variable routing provides a rate assignment with the minimum sensing rate $\Omega(n)$ times higher than in the time-invariable case\iffullresults{ (see, e.g., Fig.~\ref{fig:tradeoffs--time-variability})}\else\cite{mobihoc-tech-rep}\fi, it requires substantial control information exchange for routing reconfigurations, yielding high energy consumption.

\subsection{Our Contributions}

For the \emph{unsplittable routing} and \emph{routing tree}, we design a fully-combinatorial algorithm that solves the \emph{max-min fair rate assignment} problem, both in the time-variable and time-invariable settings, when the routing is provided at the input.
We then turn to \emph{fractional routing}, considering two settings: time-variable and time-invariable. We demonstrate that in the \emph{time-variable} setting verifying whether a given rate assignment is feasible is at least as hard as solving a feasible 2-commodity flow. This result implies that, to our current knowledge, it is unlikely that max-min fair time-variable fractional routing\footnote{We refer to a routing as max-min fair if it provides a lexicographically maximum rate assignment. The notions of max-min fairness and lexicographical ordering of vectors are defined in Section \ref{section:background}.} can be solved without the use of linear programming. To combat the high running time induced by the linear programming, we develop a fully polynomial time approximation scheme \hbox{(FPTAS)}. For the \emph{time-invariable setting}, we provide a fully-combinatorial algorithm that determines a max-min fair routing with constant rates.

Our rate assignment algorithms rely on the well-known water-filling framework, which is described in Section \ref{section:background}. It is important to note that water-filling is a framework--not an algorithm--and therefore it does not specify how to solve the maximization nor fixing of the rates steps (see Section \ref{section:background}). Even though general algorithms for implementing water-filling such as e.g., \cite{Radunovic2007, sensnet--lexicographic, OSU--lexicographic} can be adapted to solve the rate assignment problems studied in this paper, their implementation would involve solving $O(n^2T^2)$ linear programs (LPs) with $O(mT)$ variables and $O(nT)$ constraints, thus resulting in an unacceptably high running time. Moreover, such algorithms do not provide insights into the problem structure. Our algorithms are devised relying on the problem structure, and in most cases do not use linear programming. The only exception is the algorithm for time-variable fractional routing (Section \ref{section:fractional}), which solves $O(nT)$ LPs, and thus provides at least $O(nT)$-fold improvement as compared to an adaptation of \cite{Radunovic2007, sensnet--lexicographic, OSU--lexicographic}. Furthermore, each LP in our solution searches over much smaller space (in fact, only within the $\epsilon$-region of the starting point, for any $\epsilon$-approximation)

We show that \emph{determining} an unsplittable routing or a routing tree that supports lexicographically maximum rate assignment is NP-hard even for a single time slot. On the other hand, as a positive result, we develop an algorithm that determines a time-invariable unsplittable routing that maximizes the minimum sensing rate assigned to any node in any time slot.

The considered problems generalize classical max-min fair routing problems that have been studied outside the area of energy harvesting networks, such as: max-min fair fractional routing \cite{megiddo}, max-min fair unsplittable routing \cite{fairness-in-routing}, and bottleneck routing \cite{Bertsekas_Gallager}. In contrast to the problems studied in \cite{megiddo, fairness-in-routing, Bertsekas_Gallager}, our model allows different costs for flow generation and forwarding, and has time-variable node capacities determined by the available energies at the nodes. We note that studying networks with node capacities is as general as studying networks with capacitated edges, since there are standard methods for transforming one of these two problems into another (see, e.g., \cite{network_flows}). Therefore, we believe that the results can find applications in other related areas.

\subsection{Organization of the paper}
The rest of the \iffullresults paper \else paper \fi is organized as follows. Section \ref{section:system model} provides the model and problem formulations, which are placed in the context of related work in Section \ref{section:related work}. Section \ref{section:background} describes the connection between max-min fairness and lexicographic maximization. Section \ref{section:unsplittable} considers rate assignment in unsplittable routing, while Sections \ref{section:fractional} and \ref{section:fixed fractional} study fractional routing and rate assignment in time-variable and time-invariable settings, respectively. Section \ref{section:hardness} provides hardness results for determining unsplittable routing or a routing tree. \iffullresults Section \ref{section:conclusion} provides conclusions and outlines possible future directions. \else Section \ref{section:conclusion} concludes the paper.\fi
\iffullresults\else  Due to space constraints, some of the proofs are omitted, and can be found in an anonymous technical paper \cite{mobihoc-tech-rep}.\fi

\section{Model and Problem Formulation}\label{section:system model}
We consider a network that consists of $n$ energy harvesting nodes and one sink node (see Fig.~\ref{fig:example network}). The sink is the central point at which all the sensed data is collected, and is assumed not to be energy constrained. In the rest of the paper, the term ``sink" will be used for the sink node. The connectivity between the nodes is modeled by a directed graph $G = (V, E)$, where $|V|=n+1$ ($n$ nodes and the sink), and $|E|=m$. We assume without loss of generality that every node has a directed path to the sink, because otherwise it can be removed from the graph.
 The main notation is summarized in Table \ref{table:notation}.

Each node is equipped with a rechargeable battery of finite capacity $B$. The time horizon is $T$ time slots. The duration of a time slot is assumed to be much longer than the duration of a single data packet, but short enough so that the rate of energy harvesting does not change during a slot. For example, if outdoor light energy is harvested, one time slot can be at the order of a minute. In a time slot $t$, a node $i$ harvests $e_{i, t}$ units of energy. The battery level of a node $i$ at the beginning of a time slot $t$ is $b_{i, t}$. We follow a predictable energy profile \cite{chen2011finite, maria-wiopt2011, maria-infocom2011, gurakan2013energy, shroff-srikant, OSU--lexicographic}, and assume that all the values of harvested energy $e_{i, t}$, $i \in \{1,..., n\}$, $t \in \{1,...,T\}$, battery capacity $B$, and all the initial battery levels $b_{i, 1}$, $i \in \{1,..., n\}$ are known and finite.

A node $i$ in slot $t$ senses data at rate $\lambda_{i, t}$. A node forwards all the data it senses and receives towards the sink. The flow on a link $(i, j)$ in slot $t$ is denoted by $f_{ij, t}$. Each node spends $c_{\text{s}}$ energy units to sense a unit flow, and $c_{\text{tx}}$, respectively $c_{\text{rx}}$, energy units to transmit, respectively receive, a unit flow.

\renewcommand{\arraystretch}{1.5}
\begin{table}[t!]
\small
\caption{Nomenclature.}
\centering
\renewcommand{\arraystretch}{0.8}
\begin{tabular}{|c|m{0.1\linewidth}| m{0.7\linewidth}|}
\hline
& & \\[-5pt]
\parbox[t]{2mm}{\multirow{9}{*}{\rotatebox[origin=c]{90}{\textbf{inputs}}}} & $n$ & Number of energy harvesting nodes\\
& $T$ & Time horizon\\
& $i$ & Node index, $i \in \{1, 2,...n\}$\\
& $t$ & Time index, $t \in \{1,..., T\}$\\
& $B$ & Battery capacity\\
& $e_{i, t}$ & Harvested energy at node $i$ in time slot $t$\\
& $c_{\text{s}}$ & Energy spent for sensing a unit flow\\
& $c_{\text{tx}}$ & Energy spent for transmitting a unit flow\\
& $c_{\text{rx}}$ & Energy spent for receiving a unit flow\\\hline
& & \\[-5pt]
\parbox[t]{2mm}{\multirow{3}{*}{\rotatebox[origin=c]{90}{\textbf{variables}}}} & &\\[-2pt]
& $\lambda_{i, t}$ & Sensing rate of node $i$ in time slot $t$\\
& $f_{ij, t}$ & Flow on link $(i, j)$ in time slot $t$\\
& $b_{i, t}$ & Battery level at node $i$ at the beginning of time slot $t$\\
& & \\
\hline
& & \\[-3pt]
\parbox[t]{2mm}{\multirow{2}{*}{\rotatebox[origin=c]{90}{\textbf{notation}}}} & & \\[-2pt]
& $c_{\text{st}}$ &Energy spent for jointly sensing and transmitting a unit flow: $c_{\text{st}} = c_{\text{s}}+c_{\text{tx}}$ \\
& $c_{\text{rt}}$ & Energy spent for jointly receiving and transmitting a unit flow: $c_{\text{rt}}= c_{\text{rx}}+c_{\text{tx}}$\\
& $f^{\Sigma}_{i, t}$ & Total flow entering node $i$ in time slot $t$: ${f^{\Sigma}_{i, t}= \littlesum_{j:(j, i)\in E}f_{ji, t}}$\\[8pt]
\hline
\end{tabular}\vspace{-10pt}
\label{table:notation}
\end{table}

The feasible region $\mathcal{R}$ for the sensing rates and flows is determined by the following set of linear{\footnote{{Note that we treat eq. (\ref{battery_evolution}) as a linear constraint, since the problems we are solving are maximizing $\lambda_{i, t}$'s (under the max-min fairness criterion), and (\ref{battery_evolution}) can be replaced by $b_{i, t+1}\leq B$ and $b_{i, t+1}\leq b_{i, t}+e_{i,t}- (c_{\text{rt}}f^{\Sigma}_{i, t} + c_{\text{st}}\lambda_{i, t})$ while leading to the same solution for $\{\lambda_{i, t}\}$.}}} constraints:
\begin{align}
\forall i\in \{1, ..., n\}, &t \in \{1,..., T\}:\notag\\
\littlesum_{(j, i)\in E}f_{ji, t} + \lambda_{i, t} &= \littlesum_{(i, j)\in E}f_{ij, t}\label{flow_conserv}\\
b_{i, t+1}=\min\{B,\quad b_{i, t}+&e_{i,t}- (c_{\text{rt}}f^{\Sigma}_{i, t} + c_{\text{st}}\lambda_{i, t})\}&\label{battery_evolution}\\
b_{i, t+1}\geq0, \>\lambda_{i, t}\geq 0,& \> f_{ij, t}\geq 0, \forall(i, j)\in E&\label{nonneg_const}
\end{align}
where $f^{\Sigma}_{i, t}\equiv\littlesum_{(j, i)\in E}f_{ji, t}$, $c_{\text{st}} \equiv c_{\text{s}}+c_{\text{tx}}$, and $c_{\text{rt}}\equiv c_{\text{rx}}+c_{\text{tx}}$.
Eq. (\ref{flow_conserv}) is a classical flow conservation constraint, while (\ref{battery_evolution}) models battery evolution over time slots.

Similar to the definition of max-min fairness in \cite{Bertsekas_Gallager}, define a rate assignment ${\{\lambda_{i, t}\}}$, $i \in \{1,...,n\}$, $t\in \{1, ..., T\}$, to be max-min fair if no $\lambda_{i, t}$ can be increased without either losing feasibility or decreasing some other rate $\lambda_{j, \tau}\leq \lambda_{i,t}$. Max-min fairness is closely related to lexicographic maximization, as will be explained in Section \ref{section:background}.

In some of the problems, the routing is provided at the input as a set of paths $\mathcal{P}=\{p_{i, t}\}$, for $i\in\{1,...,n\}, t\in\{1,...,T\}$. In such a case, $\mathcal{R}$ should be interpreted with respect to $\mathcal{P}$, instead with respect to the input graph $G$.

\subsection{Considered problems}  
We examine different routing types, in time-variable and time-invariable settings, as described in the introduction.
  For the unsplittable routing and routing trees, we examine the problems of \emph{determining a rate assignment} and \emph{determining a routing} separately, as described below. Table~\ref{table:notation} summarizes inputs and variables.

  \textsc{P-Unsplittable-Rates}: For a given time-variable unsplittable routing $\mathcal{P}=\{p_{i, t}\}$, determine the max-min fair assignment of the rates $\{\lambda_{i, t}\}$. Note that this setting subsumes time-invariable unsplittable routing, time-invariable routing tree, and time-variable routing tree.

 \textsc{P-Unsplittable-Find}: Associate with each (time-invari\-able or time-variable) unsplittable routing $\mathcal{P}$, a set of sensing rates $\{\lambda_{i, t}^{\mathcal{P}}\}$ that optimally solves \textsc{P-Unsplittable-Rates}. Determine an unsplittable routing $\mathcal{P}$ that provides the lexicographically maximum assignment of rates $\{\lambda_{i, t}^{\mathcal{P}}\}$.

  \textsc{P-Tree-Find}: Let $\mathcal{T}$ denote a (time-invariable or time-variable) routing tree on the input graph $G$. Associate with each $\mathcal{T}$ a set of sensing rates $\{\lambda_{i, t}^{\mathcal{T}}\}$ that optimally solves \hbox{\textsc{P-Unsplittable-Rates}}. Determine $\mathcal{T}$ that provides the lexicographically maximum assignment of rates $\{\lambda_{i, t}^{\mathcal{T}}\}$.

For the fractional routing, we study the following two variants of max-min fair routing, where the routing and the rate assignment are determined jointly.

 \textsc{{P-Fractional}}: Determine a \emph{time-variable} fractional routing that provides the max-min fair rate assignment $\{\lambda_{i, t}\}$, together with the max-min fair rate assignment $\{\lambda_{i, t}\}$, considering all the (time-variable, fractional) routings.

  \textsc{P-Fixed-Fractional}: Determine a \emph{time-invariable} fractional routing that provides the max-min fair time-invariable rate assignment $\{\lambda_{i, t}\}=\{\lambda_i\}$, together with the max-min fair rate assignment $\{\lambda_{i}\}$, considering all the (time-invariable, fractional) routings with rates that are constant over time.

\section{Related Work}\label{section:related work}
\noindent\textbf{Energy-harvesting Networks.}Rate assignment in energy harvesting networks in the case of a single node or a link was studied in \cite{ozel2011transmission, node-koksal, chen2011finite, maria-wiopt2011, maria-infocom2011, uysal2013finite, dohler-link}. 
{In \cite{chen2011finite}, the solution for one node is extended to the network setting by (1) formulating a convex optimization problem, and solving it using the Lagrange duality theory, and (2) using a heuristic algorithm. Both of these two solutions focus on maximizing the sum network throughput, and do not consider fairness.} 

  Resource allocation and scheduling for network-wide scenarios using the Lyapunov optimization technique was studied in \cite{gatzianas, neely, sarkar-kar, mao-koksal-shroff}. While the work in \cite{gatzianas, neely, sarkar-kar, mao-koksal-shroff} can support unpredictable energy profiles, it focuses on the (sum-utility of) time-average rates, which is, in general, time-unfair.  The design of online algorithms for resource allocation and routing was studied in \cite{shroff-srikant, chen2012simple}.

Max-min time-fair rate assignment for a single node or a link was studied in \cite{maria-wiopt2011, maria-infocom2011}, while max-min fair energy allocation for single-hop and two-hop scenarios was studied in \cite{gurakan2013energy}. Similar to our work, \cite{gurakan2013energy} requires fairness over both the nodes and the time slots, but considers only two energy harvesting nodes. 
The work on max-min fairness in network-wide scenarios \cite{OSU--lexicographic} is explained in more detail below.

\noindent\textbf{Sensor Networks.}
Problems \textsc{P-Fractional}, \textsc{P-Fixed-Fractional}, and \textsc{P-Unsplittable-Rates} are related to the maximum lifetime routing problems (see, e.g., \cite{chang2004maximum, madan2006distributed} and the follow-up work) in the following sense. 
In our model, maximization of the minimum sensing rate is equivalent to the network lifetime maximization in sensor networks, \emph{but only if the system is observed for} $T=1$. Namely, the nodes have the initial energy, and no harvesting happens over time.

Determining a maximum lifetime tree in sensor networks as in \cite{infocom2005} is a special case of \textsc{P-Tree-Find}. 
We extend the NP-hardness result from \cite{infocom2005} and provide a lower bound of $\Omega(\log n)$ for the approximation ratio (for both \cite{infocom2005} and \textsc{P-Tree-Find}), where $n$ is the number of nodes in the network.

\noindent\textbf{Max-min Fair Unsplittable Routing.} Rate assignment in unsplittable routing was studied extensively (see \cite{Bertsekas_Gallager, charny1995congestion} and references therein). \textsc{P-Unsplittable-Rates} reduces to the problem studied in \cite{Bertsekas_Gallager, charny1995congestion} for $c_{\text{st}}=c_{\text{rt}}$ and $T=1$. In the energy harvesting network setting, this problem has been studied in \cite{OSU--lexicographic}, for rates that are constant over time and a time-invariable routing tree. We consider a more general case than in \cite{OSU--lexicographic}, where the rates are time-variable, fairness is required over both network nodes and time slots, and the routing can be time-variable and given in a form of an unsplittable routing or a routing tree.

Determining a max-min fair unsplittable routing as studied in \cite{fairness-in-routing} is a special case of \textsc{P-Unsplittable-Find} for $c_{\text{st}}=c_{\text{rt}}$ and $T=1$, and the NP-hardness results from \cite{fairness-in-routing} implies the NP-hardness of \textsc{P-Unsplittable-Find}.

\noindent\textbf{Max-min Fair Fractional Routing.} Max-min fair fractional routing was first studied in \cite{megiddo}. The algorithm from \cite{megiddo} relies on the property that the total values of a max-min fair flow and max flow are equal, 
which does not hold even in simple instances of \textsc{P-Fixed-Fractional} and \textsc{P-Fractional}. \textsc{P-Fixed-Fractional} and \textsc{P-Fractional} reduce to the problem of \cite{megiddo} for $T=1$ and $c_{\text{st}}=c_{\text{rt}}$.

Max-min fair fractional routing in energy harvesting networks has been considered in \cite{OSU--lexicographic}. The distributed algorithm from \cite{OSU--lexicographic} solves \textsc{P-Fixed-Fractional}, but only as a heuristic. We provide a combinatorial algorithm that solves \textsc{P-Fixed-Fractional} optimally in a centralized manner.

A general linear programming framework for max-min fair routing was provided in \cite{Radunovic2007}, and extended to the setting of sensor and energy harvesting networks in \cite{sensnet--lexicographic} and \cite{OSU--lexicographic}, respectively. This framework, when applied to \textsc{P-Fractional}, is highly inefficient. \textsc{P-Fractional} reduces to \cite{sensnet--lexicographic} for $T=1$, and to \cite{OSU--lexicographic} when the rates are constant over time.

\section{Max-min Fairness and Lexicographic Maximization}\label{section:background}
Recall that a rate assignment ${\{\lambda_{i, t}\}}$, $i \in \{1,...,n\}$, $t\in \{1, ..., T\}$, is max-min fair if no rate $\lambda_{i, t}$ can be increased without either losing feasibility or decreasing some other rate $\lambda_{j, \tau}\leq \lambda_{i,t}$.
Closely related to the max-min fairness is the notion of lexicographic maximization. The lexicographic ordering of vectors, with the relational operators denoted by $\overset{lex}{=}$, $\overset{lex}{>}$, and $\overset{lex}{<}$, is defined as follows:
\begin{definition}
Let $u$ and $v$ be two vectors of the same length $l$, and let $u_s$ and $v_s$ denote the vectors obtained from $u$ and $v$ respectively by sorting their elements in the non-decreasing order. Then: (i) $u\overset{lex}{=}v$ if $u_s=v_s$ element-wise; (ii) $u\overset{lex}{>}v$ if there exists $j\in \{1, 2,..., l\}$, such that $u_s(j)>v_s(j)$, and $u_s(1)=v_s(1),...,u_s(j-1)=v_s(j-1)$ if $j>1$; (iii) $u\overset{lex}{<}v$ if not $u\overset{lex}{=}v$ nor $u\overset{lex}{>}v$.
\end{definition}

It has been proved in \cite{Radunovic2007} that a max-min fair allocation vector exists on any convex and compact set. The results from \cite{Sarkar-Tassiulas} state that in a given optimization problem whenever a max-min fair vector exists, it is unique and equal to the lexicographically maximum one.

In the problems \textsc{P-Unsplittable-Find}, \textsc{P-Fractional}, and \textsc{P-Fixed-Fractional} the feasible region $\mathcal{R}$ is determined by linear constraints (\ref{flow_conserv})-(\ref{nonneg_const}), and it is therefore convex. As we are assuming that all the input values $B, e_{i, t}$, and $b_{i, 1}$ are finite, it follows that the feasible region is also bounded, and therefore compact. Therefore, for the aforementioned problems, lexicographic maximization produces the max-min fair assignment of the sensing rates $\{\lambda_{i, t}\}$.

Lexicographic maximization can be implemented using the water-filling framework (see, e.g., \cite{Bertsekas_Gallager}):
\begin{algorithm}
\caption{\textsc{Water-filling-Framework}($G, b, e$)}
 \begin{algorithmic}[1]
\State Set $\lambda_{i, t} = 0$ $\forall i, t$, and mark all the rates as not fixed.
\State  Increase all the rates $\lambda_{i, t}$ that are not fixed by the same maximum amount, subject to the constraints from $\mathcal{R}$. \label{maximizing}
\State  Fix all the $\lambda_{i, t}$\rq{}s that cannot be further increased.\label{fixing}
\State  If all the rates are fixed, terminate. Else, go to step 2.
\end{algorithmic}
\end{algorithm}

To solve the problems \textsc{P-Fractional}, \textsc{P-Fixed-Fractio\-nal}, and \textsc{P-Unsplittable-Rates}, the challenge is to implement the Steps \ref{maximizing} and \ref{fixing} efficiently, which we study in the following sections. We will refer to the algorithm that implements Step \ref{maximizing} as \textsc{Maximizing-the-Rates}, and to the algorithm that implements Step \ref{fixing} as \textsc{Fixing-the-Rates}.

\begin{note} A rate $\lambda_{i, t}$ can in general get fixed in any iteration of the \textsc{Water-filling-Framework}; there is no rule that relates an iteration $k$ to a node $i$ or a time slot $t$.
\end{note}

\section{Rates in Unsplittable Routing}\label{section:unsplittable}
This section studies \textsc{P-Unsplittable-Rates}, the problem of rate assignment for an unsplittable routing provided at the input. The analysis applies to any time-invariable or time-variable unsplittable routing or a routing tree.

We assume that the routing over time $t\in\{1,...,T\}$ is provided as a set of routing paths $\mathcal{P}=\{p_{i, t}\}$ from a node $i$ to the sink $s$, for each node $i\in V\backslash\{s\}$. We say that a node $j$ is a descendant of a node $i$ in a time slot $t$ if $i \in p_{j, t}$. \footnote{Notice that this is consistent with the definition of a descendant in a routing tree.} 

As observed in Section \ref{section:background}, to design an efficient rate assignment algorithm, we need to implement the Steps \ref{maximizing} and \ref{fixing} of \textsc{Water-filling-Framework} efficiently. 

Before describing the algorithms in detail, we need to introduce some notation. Let $F_{i, t}^k=1$ if the rate $\lambda_{i, t}$ is not fixed at the beginning of the $k^{\text{th}}$ iteration of \textsc{Water-filling-Framework}, $F_{i, t}^k=0$ otherwise. Initially, $F_{i, t}^1=1$, $\forall i, t$. If a rate $\lambda_{i, t}$ is not fixed, we will say that it is \lq\lq{}active\rq\rq{}. We will denote by $D_{i, t}^k$ the number of active descendants of the node $i$ in the time slot $t$, where $D_{i, t}^1=|\{j: i \in p_{j, t}\backslash\{j\}\}|$. Notice that $D_{i, t}^k=\littlesum_{j : i\in p_{j, t}\backslash\{j\}}F_{j, t}^k $. Finally, let $\lambda_{i, t}^k$ denote the value of $\lambda_{i, t}$ in the $k^{\text{th}}$ iteration of \textsc{Water-filling-Framework}, and let $\lambda_{i, t}^0=0$, $\forall i,t$. Under this notation, the rates can be expressed as
$
\lambda_{i, t}^k=\littlesum_{l=1}^k F_{i, t}^l\lambda^l,
$
where $\lambda^l$ denotes the common amount by which all the active rates get increased in the $l^{\text{th}}$ iteration.

\subsection{Maximizing the Rates}\label{sec-maximizing-rates}

Maximization of the common rate $\lambda^k$ in $k^{\text{th}}$ iteration of \textsc{Water-filling-Framework} can be formulated as follows:
\begin{align*}
\textbf{max } \>&\lambda^k\\ 
\textbf{s.t. } &\forall i\in \{1, ..., n\}, t \in \{1,..., T\}:\\
&\lambda_{i, t}^{k} = \lambda_{i, t}^{k-1} + F_{i, t}^k\cdot \lambda^k&\\
&f_{i, t}^{\Sigma} + \lambda_{i, t}^k = \littlesum_{(i, j)\in E}f_{ij, t}\\
&b_{i, t+1}=\min\{ B,\> b_{i, t}+e_{i, t}- (c_{\text{rt}}f_{i, t}^{\Sigma} +c_{\text{st}}\lambda_{i, t}^k )\}&\\
&b_{i, t}\geq0, \>\lambda^k \geq 0, \> f_{ij, t}\geq 0, \forall (i, j)\in E
\end{align*}

As in each slot $t$ every node $i$ sends all the flow it senses over a single path, we can compute the total inflow into a node $i$ as the sum of the flows coming from $i$'s descendants:
\begin{align*}
f_{i, t}^{\Sigma}&=\littlesum_{j : i\in p_{j, t}\backslash\{j\}}\littlesum_{l =1}^k F_{j, t}^l\cdot\lambda^l = \littlesum_{l =1}^k \lambda^l \littlesum_{j : i\in p_{j, t}\backslash\{j\}}F_{j, t}^l \\
&= \littlesum_{l =1}^k D_{i, t}^l\cdot \lambda^l
\end{align*}

Denoting the battery levels in the iteration $k$ as $b_{i, t}^k$, the problem can now be written more compactly as:
\begin{align*}
\textbf{max } \>&\lambda^k\\ 
\textbf{s.t. } &\forall i\in \{1, ..., n\}, t \in \{1,..., T\}:\\
&b_{i, t+1}^k = \min\{B, \; b_{i, t}^k+e_{i, t}- \littlesum_{l=1}^k \lambda^l(c_{\text{rt}}D_{i, t}^l + c_{\text{st}}F_{i, t}^l )\}\label{one-line-battery}\\
&b_{i, t}^k\geq0, \>\lambda^k \geq 0,
\end{align*}
where $\forall i \,\forall k:$ $b_{i, 1}^k=b_{i, 1}$.

\iffullresults
Instead of using all of the $\lambda^l$'s from previous iterations in the expression for $b_{i, t+1}^k$, we can define the battery drop in the iteration $k$, for node $i$ and time slot $t$ as: $\Delta b_{i, t}^k = \littlesum_{l=1}^k \lambda^l\left(c_{\text{rt}}D_{i, t}^l + c_{\text{st}}F_{i, t}^l \right)$ and only keep track of the battery drops from the previous iteration. The intuition is as follows: to determine the battery levels in all the time slots, we only need to know the initial battery level and how much energy ($\Delta b_{i, t}$) is spent per time slot. Setting $\Delta b_{i, t}^0 = 0$, the problem can be written as: 
\else
Define the battery drop for node $i$ in slot $t$ and iteration $k$ as: $\Delta b_{i, t}^k = \littlesum_{l=1}^k \lambda^l(c_{\text{rt}}D_{i, t}^l + c_{\text{st}}F_{i, t}^l )$, setting $\Delta b_{i, t}^0 = 0$. The intuition is: to determine the battery levels in all the time slots, we only need to know the initial battery level and how much energy ($\Delta b_{i, t}$) is spent per time slot. This results in:
\fi
\begin{align*}
\textbf{max } \>&\lambda^k\\ 
\textbf{s.t. } &\forall i\in \{1, ..., n\}, t \in \{1,..., T\}:\\
&\Delta b_{i, t}^k = \Delta b_{i, t}^{k-1} +\lambda^k(c_{\text{rt}}D_{i, t}^k + c_{\text{st}}F_{i, t}^k )\\
&b_{i, t+1}^k = \min\{B, \> b_{i, t}^k+e_{i, t}- \Delta b_{i, t}^{k}\}\\
&b_{i, t}^k\geq0, \>\lambda^k \geq 0
\end{align*}

Writing the problem for each node independently, we can solve the following subproblem:
\begin{flalign}
\textbf{max } \>&\lambda_i^k\\ 
\textbf{s.t. } &\forall t \in \{1,..., T\}:\notag\\
&\Delta b_{i, t}^k = \Delta b_{i, t}^{k-1} +\lambda_i^k(c_{\text{rt}}D_{i, t}^k + c_{\text{st}}F_{i, t}^k )\label{delta-b}\\
&b_{i, t+1}^k = \min\{B, \> b_{i, t}^k+e_{i, t}- \Delta b_{i, t}^{k}\}\label{battery-with-delta-b}\\
&b_{i, t}^k\geq0, \>\lambda_i^k \geq 0
\end{flalign}
for each $i$ with $\littlesum_{i, t}F^k_{i, t}>0$, 
and determine $\lambda^k=\min_{i}\lambda_i^k$. 
Notice that we can bound each $\lambda_i^k$ by the interval $[0, \lambda_{\max,i}^k]$, where $\lambda_{\max, i}^k$ is the rate for which node $i$ spends all its available energy in the first slot $\tau$ in which its rate is not fixed:
\begin{equation*}
\lambda_{\max, i}^k = \frac{b_{i, \tau}^{k-1}+ e_{i, \tau}}{c_{\text{rt}}D_{i, \tau}^k + c_{\text{st}}},\quad \tau = \min\{t: F_{i, t}^k = 1\}.
\end{equation*}

The subproblem of determining $\lambda_i^k$ can now be solved by performing a binary search in the interval $[0, \lambda_{\max,i}^k]$. 

Let $\delta$ denote the precision of the input variables. Note that however small, $\delta$ can usually be expressed as a constant. 
\begin{lemma}\label{lemma:unsplittable-max-rates-time} 
\textsc{Maximizing-the-Rates} in \textsc{P-Unsplittable-Find} can be implemented in time \begin{equation*}O\left(T\littlesum_{i}\log\left({\lambda_{\max, i}^k}/{\delta}\right)\right) = O\left(nT\log\left( {B+\max_{i, t} e_{i, t}}/({\delta c_{\text{st}}}) \right)\right).\end{equation*}
\end{lemma}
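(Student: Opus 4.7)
The plan is to analyze the per-node subproblem described in equations (\ref{delta-b})--(\ref{battery-with-delta-b}) and argue that each of the $n$ subproblems can be solved by binary search in $O(T\log(\lambda_{\max,i}^k/\delta))$ time, then combine the results.

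First I would argue that for each node $i$ with at least one active slot, the set of feasible $\lambda_i^k \in [0,\lambda_{\max,i}^k]$ forms an interval $[0,\lambda_i^{k,*}]$, which is what justifies using binary search. The point is monotonicity: increasing $\lambda_i^k$ increases every increment $\Delta b_{i,t}^k - \Delta b_{i,t}^{k-1} = \lambda_i^k(c_{\text{rt}}D_{i,t}^k + c_{\text{st}}F_{i,t}^k)$, and since battery levels evolve via the recurrence $b_{i,t+1}^k = \min\{B, b_{i,t}^k + e_{i,t} - \Delta b_{i,t}^k\}$ (a non-increasing function of each $\Delta b_{i,t}^k$), every $b_{i,t}^k$ is a non-increasing function of $\lambda_i^k$. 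Thus the nonnegativity constraints $b_{i,t}^k \ge 0$ cut out a down-closed interval, and the choice of $\lambda_{\max,i}^k$ as the value that exhausts the energy in the first active slot $\tau$ is an upper bound on any feasible $\lambda_i^k$ (since $b_{i,\tau+1}^k \ge 0$ already forces $\lambda_i^k \le \lambda_{\max,i}^k$).

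Next I would observe that for any fixed candidate $\lambda_i^k$, checking feasibility requires only computing $\Delta b_{i,t}^k$ and $b_{i,t+1}^k$ sequentially for $t=1,\dots,T$ via (\ref{delta-b})--(\ref{battery-with-delta-b}) and verifying $b_{i,t}^k \ge 0$; this is $O(T)$ arithmetic operations, given that $D_{i,t}^k$ and $F_{i,t}^k$ have been maintained across iterations. A binary search in $[0,\lambda_{\max,i}^k]$ down to precision $\delta$ takes $O(\log(\lambda_{\max,i}^k/\delta))$ iterations, so computing $\lambda_i^{k,*}$ for a single node costs $O(T\log(\lambda_{\max,i}^k/\delta))$. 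Summing over the at most $n$ nodes with an active rate, and taking the final $O(n)$ minimum to obtain $\lambda^k = \min_i \lambda_i^{k,*}$, yields the first bound $O(T\sum_i \log(\lambda_{\max,i}^k/\delta))$.

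Finally, to simplify to the closed form, I would bound $\lambda_{\max,i}^k$ uniformly. Since $b_{i,\tau}^{k-1} \le B$, $D_{i,\tau}^k \ge 0$, and $c_{\text{rt}},c_{\text{st}} \ge 0$, the expression $\lambda_{\max,i}^k = (b_{i,\tau}^{k-1}+e_{i,\tau})/(c_{\text{rt}}D_{i,\tau}^k+c_{\text{st}})$ is at most $(B+\max_{i,t}e_{i,t})/c_{\text{st}}$. Substituting into the sum over $n$ nodes gives $O(nT\log((B+\max_{i,t}e_{i,t})/(\delta c_{\text{st}})))$, as claimed. The only mildly delicate point is the monotonicity argument through the $\min\{B,\cdot\}$ clipping in (\ref{battery-with-delta-b}); I would note that the $\min$ preserves monotonic dependence on $\lambda_i^k$ because both branches are non-increasing (the constant $B$ trivially, and $b_{i,t}^k + e_{i,t} - \Delta b_{i,t}^k$ by induction together with the non-increasing dependence of $b_{i,t}^k$), so the feasibility test really is monotone and binary search is legitimate.
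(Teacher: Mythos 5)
Your proof is correct and follows the same approach as the paper: each per-node subproblem $\lambda_i^{k,*}$ is found by binary search over $[0,\lambda_{\max,i}^k]$, each feasibility check evaluates the battery recurrence in $O(T)$, and the final minimum over nodes gives $\lambda^k$. The paper states the lemma without an explicit proof, relying on the surrounding narrative; you have usefully filled in the one genuinely non-obvious piece, namely the inductive monotonicity argument showing each $b_{i,t}^k$ is non-increasing in $\lambda_i^k$ through the $\min\{B,\cdot\}$ clipping, which is exactly what legitimizes the binary search.
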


\subsection{Fixing the rates}\label{sec-fixing-rates}

Recall that the elements of the matrix $F^k$ are such that $F_{i, t}^k=0$ if the rate $\lambda_{i, t}$ is fixed for the iteration $k$, and $F_{i, t}^k=1$ otherwise. At the end of iteration $k\geq 1$, let $F^{k+1} = F^{k}$, and consider the following set of rules for fixing the rates:
\begin{enumerate}[itemsep=-3pt, label=(F\arabic*), topsep=3pt]
\item For all $(i, t)$ such that $b_{i, t+1}^k=0$ set $F_{i, t}^{k+1}=0$. \label{spent-all-energy}
\item \label{previous-energy}For all $(i, t)$ such that $b_{i, t+1}^k=0$ determine the longest sequence $(i, t), (i, t-1), (i, t-2),..., (i, \tau), \tau\geq 1,$ with the property that $b_{i, s}^k+e_{i, s}- \Delta b_{i, s}^{k}\leq B$ $\forall s\in\{t, t-1,..., \tau\}$, and set $F_{i, s}^{k+1}=0$ $\forall s$. 
\item For all $(i, t)$ for which the rules \ref{spent-all-energy} and \ref{previous-energy} have set $F_{i, t}^{k+1}=0$, and for all $j$ such that $i\in p_{j, t}$, set $F_{j, t}^{k+1}=0$. \label{descendants}
\end{enumerate}

\iffullresults
We will need to prove that these rules are necessary and sufficient for fixing the rates. Here, \lq\lq{}necessary\rq\rq{} means that no rate that gets fixed at the end of iteration $k$ can get increased in iteration $k+1$ without violating at least one of the constraints. \lq\lq{}Sufficient\rq\rq{} means that all the rates $\lambda_{i, t}$ with $F_{i, t}^{k+1}=1$ can be increased by a positive amount in iteration $k+1$ without violating feasibility.
\else
The correctness of the rules \ref{spent-all-energy}-\ref{descendants} is proved via the following two lemmas.
\fi
\begin{lemma}\label{necessity}
(Necessity) No rate fixed by the rules \ref{spent-all-energy}, \ref{previous-energy} and \ref{descendants} can be increased in the next iteration without violating feasibility constraints.
\end{lemma}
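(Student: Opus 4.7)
The plan is to prove the claim separately for each of the three rules, all resting on a monotonicity lemma that I would establish first: as iterations of \textsc{Water-filling-Framework} progress, the cumulative drops $\Delta b_{i,t}^{k}$ are non-decreasing in $k$ while the battery levels $b_{i,t}^{k}$ are non-increasing in $k$. The second fact follows from the first by induction on $t$, using that $\min\{B,\cdot\}$ preserves inequalities. This monotonicity is the backbone of the whole argument, since it guarantees that any constraint that was already tight at iteration $k$ remains tight (or becomes tighter) at iteration $k+1$.

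For rule \ref{spent-all-energy}, the condition $b_{i,t+1}^k=0$ together with the battery evolution equation and $B>0$ forces $b_{i,t}^k+e_{i,t}-\Delta b_{i,t}^k=0$. If we attempt to unfix $\lambda_{i,t}$ in iteration $k+1$, then $\Delta b_{i,t}^{k+1}$ grows by at least $c_{\text{st}}\lambda^{k+1}>0$ according to \eqref{delta-b}, and combined with $b_{i,t}^{k+1}\leq b_{i,t}^k$ from the monotonicity lemma this drives $b_{i,t+1}^{k+1}$ strictly below zero, which is infeasible.

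For rule \ref{previous-energy}, the chain condition $b_{i,s}^k+e_{i,s}-\Delta b_{i,s}^k\leq B$ for all $s\in\{\tau,\ldots,t\}$ says the cap $B$ in the $\min$ of \eqref{battery-with-delta-b} is inactive along the chain at iteration $k$. The monotonicity lemma (drops up, batteries down) then shows the cap remains inactive throughout the chain at iteration $k+1$ as well, so the battery evolution telescopes and $b_{i,t+1}^{k+1}=b_{i,\tau}^{k+1}+\sum_{s=\tau}^{t}(e_{i,s}-\Delta b_{i,s}^{k+1})$. Since the same telescoping identity holds at iteration $k$ with left-hand side $0$, any strict increase in even one of the drops $\Delta b_{i,s}^{k+1}$ (which occurs if we unfix some $\lambda_{i,s}$ in the chain) forces $b_{i,t+1}^{k+1}<0$.

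For rule \ref{descendants}, if $(i,t)$ is fixed by \ref{spent-all-energy} or \ref{previous-energy} and $j$ is a descendant with $i\in p_{j,t}$, then unfixing $\lambda_{j,t}$ contributes at least $c_{\text{rt}}\lambda^{k+1}$ to $\Delta b_{i,s}^{k+1}$ for each relevant $s$, which reproduces exactly the same infeasibility at $i$ via the arguments above. The main obstacle I expect is the care needed in rule \ref{previous-energy}: one must verify that cap-inactivity persists across iterations, and, in addition, that the chain was taken to be \emph{maximal}, so that no earlier slot $s<\tau$ could absorb the additional drop by allowing the battery to be clipped down at $B$ in iteration $k+1$. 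Once the monotonicity lemma is in hand, however, this maximality consideration becomes a clean boundary argument rather than a messy case analysis.
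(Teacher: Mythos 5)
Your proof proposal has a genuine gap: it only rules out increasing a fixed rate \emph{unilaterally}, i.e., while holding all other rates constant, but it does not address the trade-off scenario in which $\lambda_{i,t}$ is increased while some \emph{other} rate is simultaneously decreased to keep the batteries nonnegative. The monotonicity lemma you propose (drops $\Delta b_{i,t}^k$ non-decreasing in $k$, batteries $b_{i,t}^k$ non-increasing in $k$) is correct precisely because the water-filling framework only ever raises rates; but when we ask whether a fixed rate has reached its max-min fair value, we must also permit perturbations that lower other rates, and under such perturbations your monotonicity lemma no longer holds. Recall that the definition of max-min fairness used in the paper is that no $\lambda_{i,t}$ can be increased without losing feasibility \emph{or} decreasing some $\lambda_{j,\tau}\leq\lambda_{i,t}$. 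Your argument proves only the first branch of this disjunction.

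The paper closes this gap via its observation (o1): among the rates that appear in the tight battery constraint at $(i,t)$ — namely $\lambda_{i,s}$ and the descendant rates $\lambda_{j,s}$ with $i\in p_{j,s}\setminus\{j\}$, for $s$ in the relevant window — all are $\leq\lambda_{i,t}$, with equality whenever both are still active. This is because water-filling raises all active rates uniformly, and rule (F3) fixes all descendants in slot $t$ whenever $(i,t)$ is fixed, so a descendant's rate never exceeds the ancestor's. Therefore, any feasible perturbation that raises $\lambda_{i,t}$ must lower one of these smaller-or-equal rates, which is a lexicographic loss. You need to add this structural observation and the resulting fairness argument to each of your three cases; without it, your proof would not exclude the possibility that $\lambda_{i,t}$ could be raised at the expense of some rate strictly larger than $\lambda_{i,t}$, which would mean fixing was premature. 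Once you add (o1), the rest of your telescoping and cap-inactivity bookkeeping for rule \ref{previous-energy} is in line with the paper's proof (which derives the same telescoped battery identity), so the completed argument would be close to the paper's, just with the monotonicity lemma pulled out front as a reusable tool.
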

\iffullresults
\begin{proof}
We will prove the lemma by induction. 

\textbf{The base case.} Consider the first iteration and observe the pairs $(i, t)$ for which $F^1_{i, t}=0$.

Suppose that $b_{i, t+1}^1 = 0$. The first iteration starts with all the rates being active, so we get from the constraint (\ref{battery-with-delta-b}):  
\begin{align}
b_{i, t+1}^1&=\min\left\{B, \> b_{i, t}^1+ e_{i, t}-  \left(c_{\text{rt}}\littlesum_{i\in p_{j, t}\backslash\{j\}}\lambda_{j, t}^1 + c_{\text{st}}\lambda_{i, t}^1\right)\right\}\notag\\
&=  b_{i, t}^1+e_{i, t}- \left(c_{\text{rt}}\littlesum_{i\in p_{j, t}\backslash\{j\}}\lambda_{j, t}^1 + c_{\text{st}}\lambda_{i, t}^1\right)=0\label{th1-all-energy-spent}
\end{align} 
as $B>0$, where the third line comes from all the rates being equal in the first iteration and the fact that all the $i$\rq{}s descendants must send their flow through $i$.

As every iteration only increases the rates, if we allow $\lambda_{i, t}$ to be increased in the next iteration, then (from (\ref{th1-all-energy-spent})) we would get $b_{i, t+1}<0$, which is a contradiction. Alternatively, if we increase $\lambda_{i, t}^1$ at the expense of decreasing some $\lambda_{j, t}^1$, $i\in p_{j, t}\backslash\{j\}$, to keep $b_{i, t+1}\geq 0$, then the solution is not max-min fair, as $\lambda_{j, t}^1 = \lambda_{i, t}^1 = \lambda^1$. This proves the necessity of the rule \ref{spent-all-energy}. By the same observation, if we increase the rate $\lambda_{j, t}^1$ of any of the node $i$'s descendants $j$ at time $t$, we will necessarily get $b_{i, t+}<0$ (or would need to sacrifice the max-min fairness). This proves the rule \ref{descendants} for all the descendants of node $i$, such that $F_{i, t}^{2}$ is set to 0 by the rule \ref{spent-all-energy}. 

Now let $(i, t), (i, t-1), (i, t-2),..., (i, \tau), \tau\geq 1,$ be the longest sequence with the property that: $b_{i, t}=0$ and $b_{i, s}^1+e_{i, s}- \Delta b_{i, s}^{1}\leq B$ $\forall s\in\{t, t-1,..., \tau\}$. Observe that when this is the case, we have: 
\begin{align*}
\forall s \in&\{\tau, \tau+1,..., t-2, t-1\}:\\
b_{i, s+1}^1&=\min\left\{B,b_{i, s}^1+e_{i, s}- \Delta b_{i, s}^{1}\right\}=  b_{i, s}^1+e_{i, s}- \Delta b_{i, s}^{1}\notag\\
&= b_{i, s}^1+e_{i, s}- \left(c_{\text{rt}}\littlesum_{j: i\in p_{j, t}\backslash\{j\}}\lambda_{j, s}^1 + c_{\text{st}}\lambda_{i, s}^1\right)
\end{align*}
This gives a recursive relation, so $b_{i, t+1}$ can also be written as:
\begin{equation*}
b_{i, t+1}^1 = b_{i, \tau}^1 + \littlesum_{s=\tau}^t e_{i, s} - c_{\text{rt}}\littlesum_{s=\tau}^t\littlesum_{j: i\in p_{j, t}\backslash\{j\}}\lambda_{j, s}^1 - c_{\text{st}}\littlesum_{s=\tau}^t \lambda_{i, s}^1.
\end{equation*}
If we increase $\lambda_{i, s}$ or $\lambda_{j, s}$, for any $j, s$ such that $i\in p_{j, s}\backslash\{j\}$, $s \in \{\tau, \tau+1,..., t-2, t-1\}$, then either $b_{i, t+1}$ becomes negative, or we sacrifice the max-min fairness, as all the rates are equal to $\lambda^1$ in the first iteration. This proves both the rule \ref{previous-energy} and completes the proof for the necessity of the rule \ref{descendants}.

\textbf{The inductive step.} Suppose that all the rules are necessary for the iterations $1,2,...k-1$, and consider the iteration $k$.

Observe that:
\begin{enumerate}[itemsep=-3pt, label=(o\arabic*), , topsep=3pt]
\item $\lambda_{j, t}\leq \lambda_{i, t}$, $\forall j: i\in p_{j, t}$, as all the rates, until they are fixed, get increased by the same amount in each iteration, and once a rate gets fixed for some $(i, t)$, by the rule \ref{descendants}, it gets fixed for all the node $i$'s descendants in the same time slot. Notice that the inequality is strict only if $\lambda_{j, t}$ got fixed before $\lambda_{i, t}$; otherwise these two rates get fixed to the same value.\label{o2}
\item Once fixed, a rate never becomes active again.\label{o3}
\item If a rate $\lambda_{i, t}$ gets fixed in iteration $k$, then $\lambda_{i, t}= \lambda_{i, t}^k = \littlesum_{p=1}^k \lambda^p=\lambda_{i, t}^l $, $\forall l>k$.\label{o4}
\end{enumerate} 
Suppose that $b_{i, t+1}^k = 0$ for some $i\in\{1,..,n\}, t\in\{1,...,T\}$. If $F_{i, t}^{k}=0$, then by the inductive hypothesis $\lambda_{i, t}$ cannot be further increased in any of the iterations $k, k+1,...$. Assume $F_{i, t}^{k}=1$. 
Then:
\begin{align*}
b_{i, t+1}^k&=\min\left\{B, \> b_{i, t}^k+e_{i, t}- \left(c_{\text{rt}}\littlesum_{j: i\in p_{j, t}\backslash\{j\}}\lambda_{j, t}^k + c_{\text{st}}\lambda_{i, t}^k \right)\right\}\\
&= b_{i, t}^k+e_{i, t}- \left(c_{\text{rt}}\littlesum_{j: i\in p_{j, t}\backslash\{j\}}\lambda_{j, t}^k + c_{\text{st}}\lambda_{i, t}^k \right)=0
\end{align*}
By the observation \ref{o2}, $\lambda_{j, t}^k\leq \lambda_{i, t}^k$, $\forall j$ such that $i \in p_{j, t}\backslash\{j\}$, where the inequality holds with equality if $F_{j, t}^{k}=0$. Therefore, if we increase $\lambda_{i, t}$ in some of the future iterations, either $b_{i, t+1}<0$, or we need to decrease some $\lambda_{j, t}\leq \lambda_{i, t}$, violating the max-min fairness condition. This proves the necessity of the rule \ref{spent-all-energy}. For the rule \ref{descendants}, as for all $(j, t)$ with $F_{j, t}^{k}=1, \, i \in p_{j, t}\backslash\{j\}$, we have $\lambda_{j, t}= \lambda_{i, t}$, none of the $i$'s descendants can further increase its rate in the slot $t$.

Now for $(i, t)$ such that $b_{i, t+1}^k = 0$, let $(i, t), (i, t-1), (i, t-2),..., (i, \tau), \tau\geq 1,$ be the longest sequence with the property that: $b_{i, s}^k+e_{i, s}- \Delta b_{i, s}^{k}\leq B$ $\forall s\in\{t, t-1,..., \tau\}$. Similarly as for the base case:
\begin{align*}
\forall s \in&\{\tau, \tau+1,..., t-2, t-1\}:\\
b_{i, s+1}^k&=\min\left\{B, b_{i, s}^k+e_{i, s}- \Delta b_{i, s}^{k}\right\}\\
&= b_{i, s}^k+e_{i, s}- \left(c_{\text{rt}}\littlesum_{j: i\in p_{j, t}\backslash\{j\}}\lambda_{j, s}^k + c_{\text{st}}\lambda_{i, s}^k\right) 
\end{align*}
and we get that:
\begin{equation}
b_{i, t+1}^k = b_{i, \tau}^k + \littlesum_{s=\tau}^t e_{i, s} - c_{\text{rt}}\littlesum_{s=\tau}^t\littlesum_{j: i\in p_{j, t}\backslash\{j\}}\lambda_{j, s}^k - c_{\text{st}}\littlesum_{s=\tau}^t \lambda_{i, s}^k.\label{rec-battery}
\end{equation}
If any of the rates appearing in (\ref{rec-battery}), was fixed in some previous iteration, then it cannot be further increased by the inductive hypothesis. By the observation \ref{o2}, all the rates that are active are equal, and all the rates that are fixed are strictly lower than the active rates. Therefore, by increasing any of the active rates from (\ref{rec-battery}), we either violate battery nonnegativity constraint or the max-min fairness condition. Therefore, rule \ref{previous-energy} holds, and rule \ref{descendants} holds for all the descendants of nodes whose rates got fixed by the rule \ref{previous-energy}, in the corresponding time slots. 
\end{proof}
\else
\begin{proof}
We first make the following two observations. When $b^k_{i, t+1}=0$, then, from (\ref{battery-with-delta-b}):
\begin{align}
b_{i, t+1}^k&=\min\{B, \> b_{i, t}^k+e_{i, t}- (c_{\text{rt}}\littlesum_{j: i\in p_{j, t}\backslash\{j\}}\lambda_{j, t}^k + c_{\text{st}}\lambda_{i, t}^k )\}\notag\\
&= b_{i, t}^k+e_{i, t}- (c_{\text{rt}}\littlesum_{j: i\in p_{j, t}\backslash\{j\}}\lambda_{j, t}^k + c_{\text{st}}\lambda_{i, t}^k )=0\label{eq:zero-battery}
\end{align}
For $b_{i, t+1}^k = 0$, let $(i, t), (i, t-1), (i, t-2),..., (i, \tau), \tau\geq 1,$ be the longest sequence with the property that: $b_{i, s}^k+e_{i, s}- \Delta b_{i, s}^{k}\leq B$ $\forall s\in\{t, t-1,..., \tau\}$. Then $\forall s \in\{\tau, \tau+1,..., t-1\}$:
\begin{equation*}
b_{i, s+1}^k= b_{i, s}^k+e_{i, s}- (c_{\text{rt}}\littlesum_{j: i\in p_{j, t}\backslash\{j\}}\lambda_{j, s}^k + c_{\text{st}}\lambda_{i, s}^k)
\end{equation*}
This is a recursive relation on $s$, so we can write $b_{i, t+1}$ as:
\begin{equation}
b_{i, t+1}^k = b_{i, \tau}^k + \littlesum_{s=\tau}^t e_{i, s} -\littlesum_{s=\tau}^t  (c_{\text{rt}}\littlesum_{j: i\in p_{j, t}\backslash\{j\}}\lambda_{j, s}^k - c_{\text{st}} \lambda_{i, s}^k) \label{eq:rec-battery}
\end{equation}
The rest of the proof is by induction.

\textbf{The base case.} In the first iteration, $\lambda^1_{i, t}=\lambda^1$, $\forall i, t$. Suppose $b^1_{i, t+1}=0$ for some $i, t$. From (\ref{eq:zero-battery}), if $\lambda_{i, t}$ is increased, either $b^1_{i, t+1}<0$, or some $\lambda_{j, t}$, with $i\in p_{j, t}\backslash\{j\}$, must be decreased. In the former, feasibility is lost. In the latter, max-min fairness does not hold. Similarly if $\lambda_{j, t}$, with $i\in p_{j, t}\backslash\{j\}$ is increased. This proves rule \ref{spent-all-energy}, and rule \ref{descendants} for the descendants of node $i$ in time slot $t$.

Now, for $b^1_{i, t+1}=0$, let $(i, t), (i, t-1), (i, t-2),..., (i, \tau), \tau\geq 1,$ be the longest sequence with the property that: $b_{i, s}^k+e_{i, s}- \Delta b_{i, s}^{k}\leq B$ $\forall s\in\{t, t-1,..., \tau\}$. From (\ref{eq:rec-battery}), if any of the rates $\lambda_{i, s}$, or $\lambda_{j, s}$, with $i\in p_{j, s}\backslash\{j\}$ is increased, either $b^1_{i, t+1}<0$, or some other rate from (\ref{eq:rec-battery}) needs to be decreased, which violates the max-min fairness, since all the rates are equal. This proves rule \ref{previous-energy}, and completes the proof for rule \ref{descendants}.

\textbf{The inductive step.} Observe that:
\begin{enumerate}[itemsep=-3pt, label=(o\arabic*), , topsep=3pt]
\item $\lambda_{j, t}\leq \lambda_{i, t}$, $\forall j: i\in p_{j, t}$, as all the rates, until fixed, get increased by the same amount in each iteration, and once a rate gets fixed for some $(i, t)$, by the rule \ref{descendants}, it gets fixed for all the $(j, t)$ with $i\in p_{j, t}\backslash\{j\}$. The inequality is strict only if $\lambda_{j, t}$ got fixed before $\lambda_{i, t}$.\label{o2}
\item Once fixed, a rate never becomes active again.\label{o3}
\item If a rate $\lambda_{i, t}$ gets fixed in iteration $k$, then $\lambda_{i, t}= \lambda_{i, t}^k = \littlesum_{p=1}^k \lambda^p=\lambda_{i, t}^l $, $\forall l\geq k$.\label{o4}
\end{enumerate} 

Suppose that $b_{i, t+1}^k = 0$ for some $i\in\{1,..,n\}, t\in\{1,...,T\}$. If $F_{i, t}^{k}=0$, then by the inductive hypothesis $\lambda_{i, t}$ cannot be further increased in any future iteration. Assume $F_{i, t}^{k}=1$. 

By \ref{o2}, $\lambda_{j, t}^k\leq \lambda_{i, t}^k$, $\forall j$ such that $i \in p_{j, t}\backslash\{j\}$, where the inequality holds with equality if $F_{j, t}^{k}=0$. Therefore, from (\ref{eq:zero-battery}), if we increase $\lambda_{i, t}$ in some of the future iterations, either $b_{i, t+1}<0$, or we need to decrease some $\lambda_{j, t}\leq \lambda_{i, t}$, violating the max-min fairness condition. This proves the necessity of the rule \ref{spent-all-energy}. For the rule \ref{descendants}, as for all $(j, t)$ with $F_{j, t}^{k}=1, \, i \in p_{j, t}\backslash\{j\}$, we have $\lambda_{j, t}= \lambda_{i, t}$, none of the $i$'s descendants can further increase its rate in the slot $t$.

Now for $(i, t)$ such that $b_{i, t+1}^k = 0$, let $(i, t), (i, t-1), (i, t-2),..., (i, \tau), \tau\geq 1,$ be the longest sequence with the property that: $b_{i, s}^k+e_{i, s}- \Delta b_{i, s}^{k}\leq B$ $\forall s\in\{t, t-1,..., \tau\}$.
If any of the rates appearing in (\ref{eq:rec-battery}) was fixed in some previous iteration, then it cannot be further increased by the inductive hypothesis. By the observation \ref{o2}, all the rates that are active are equal, and all the rates that are fixed are strictly lower than the active rates. Therefore, by increasing any of the active rates from (\ref{eq:rec-battery}), we either violate battery nonnegativity constraint or the max-min fairness condition. Therefore, rule \ref{previous-energy} holds, and rule \ref{descendants} holds for all the descendants of nodes whose rates got fixed by the rule \ref{previous-energy}, in the corresponding time slots. 
\end{proof}
\fi

\begin{lemma}\label{sufficiency}
(Sufficiency) If $F_{i, t}^{k+1}=1$, then $\lambda_{i, t}$ can be further increased by a positive amount in the iteration $k+1$, $\forall i \in \{1,..., n\}, \, \forall t \in\{1,...,T\}$. 
\end{lemma}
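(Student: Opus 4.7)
The plan is to prove the slightly stronger statement that there exists $\delta > 0$ such that every rate with $F_{i,t}^{k+1}=1$ can be increased simultaneously by $\delta$ without losing feasibility. Since the water-filling framework increases all active rates in lockstep, this implies the lemma. I would proceed by identifying the active rates that can actually affect each battery-nonnegativity constraint and then showing that every binding constraint is insensitive to all currently active rates.

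First I would characterize, for each constraint $b_{j,\tau+1}^{k+1}\geq 0$, the set $S_{j,\tau}$ of rates whose perturbation changes $b_{j,\tau+1}^{k+1}$. A small perturbation $\lambda_{l,s}\mapsto \lambda_{l,s}+\delta$ alters $\Delta b_{j,s}$ only when $j\in p_{l,s}$ (contributing $c_{\text{st}}\delta$ if $l=j$, and $c_{\text{rt}}\delta$ if $l$ is a strict descendant of $j$ with $j\in p_{l,s}\setminus\{l\}$). The resulting decrease of $b_{j,s+1}$ then propagates forward through the recurrence $b_{j,s'+1}=\min\{B,\,b_{j,s'}+e_{j,s'}-\Delta b_{j,s'}\}$ precisely when the cap never resolves to $B$ along the chain, i.e.\ when $b_{j,s'}^k+e_{j,s'}-\Delta b_{j,s'}^k\leq B$ for every $s'\in\{s,\ldots,\tau\}$. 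Hence $S_{j,\tau}$ consists exactly of those $\lambda_{l,s}$ with $j\in p_{l,s}$ and $s$ lying in the maximal cap-free chain ending at $\tau$ (along with $s=\tau$ itself).

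The crux of the argument is then to show that for every binding constraint ($b_{j,\tau+1}^k=0$), every rate in $S_{j,\tau}$ has already been fixed by rules \ref{spent-all-energy}--\ref{descendants}. Rule \ref{spent-all-energy} immediately fixes $\lambda_{j,\tau}$. Rule \ref{previous-energy} extends the fixing backwards along the maximal chain $(j,\tau),(j,\tau-1),\ldots,(j,\sigma)$ satisfying $b_{j,s}^k+e_{j,s}-\Delta b_{j,s}^k\leq B$, which is by construction identical to the propagation chain used to define $S_{j,\tau}$. Rule \ref{descendants} then fixes $\lambda_{l,s}$ for every $l$ with $j\in p_{l,s}$ and every $s$ in that chain. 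Together these cover $S_{j,\tau}$ exactly, so when $b_{j,\tau+1}^k=0$ the linearized coefficient multiplying $\delta$ in the constraint is zero.

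Finally, for each non-binding constraint ($b_{j,\tau+1}^k>0$) there is strictly positive slack, so setting $\delta^*$ to be the minimum of $b_{j,\tau+1}^k/A_{j,\tau}$ over pairs $(j,\tau)$ whose total active-rate coefficient $A_{j,\tau}$ is positive yields a strictly positive increment that respects every constraint. In particular the given $\lambda_{i,t}$ with $F_{i,t}^{k+1}=1$ increases by $\delta^*>0$, proving the lemma. The step I expect to require the most care is verifying that the cap-free chain used in rule \ref{previous-energy} is genuinely the right notion of propagation support for $S_{j,\tau}$; specifically, the boundary case $b_{j,s'}^k+e_{j,s'}-\Delta b_{j,s'}^k=B$ must be handled carefully, since a positive perturbation pushes the linear expression strictly below $B$ and therefore does propagate forward, which is why rule \ref{previous-energy} is stated with a non-strict inequality.
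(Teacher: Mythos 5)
Your proof is correct, but it follows a genuinely different route from the paper's. The paper argues \emph{rate-centrically}: for each fixed active $\lambda_{i,t}$ (and holding all other rates fixed), it exhibits a positive individual increment by examining node $i$'s forward battery trajectory and, via rule~\ref{descendants}, the batteries of every node on $p_{i,t}$; it then passes from individual increments to a simultaneous one by dividing by the coarse factor $T(c_{\text{st}}+nc_{\text{rt}})$, bounding how much energy the collective increase can pull from any one node-slot pair. You instead argue \emph{constraint-centrically}: you identify, for each battery constraint $b_{j,\tau+1}\geq 0$, the exact support set $S_{j,\tau}$ of rates whose perturbation propagates through the $\min$-recursion to that constraint, and you show that rules \ref{spent-all-energy}--\ref{descendants} fix precisely $S_{j,\tau}$ whenever the constraint is binding (this is the same structural observation the paper uses for necessity, so you are re-using it on the sufficiency side). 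Binding constraints therefore have zero sensitivity to the remaining active rates, non-binding ones have strictly positive slack, and a uniform positive $\delta^*$ exists. Your approach is arguably cleaner in that it proves simultaneous increase directly, without the somewhat ad hoc normalization constant; and your careful treatment of the boundary case $b_{j,s}^k+e_{j,s}-\Delta b_{j,s}^k=B$ exactly explains why rule~\ref{previous-energy} needs the non-strict inequality, a point the paper leaves implicit. One small imprecision: the explicit value $\delta^*=\min_{(j,\tau):A_{j,\tau}>0} b_{j,\tau+1}^k/A_{j,\tau}$ is only valid on the first linear piece of $b_{j,\tau+1}^{k+1}(\cdot)$; for larger $\delta$, previously $B$-capped slots can uncap (as you note), extending $S_{j,\tau}$ and steepening the piecewise-linear function. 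You should therefore take the minimum of that quantity with all the cap-switch thresholds (each strictly positive, since equality-to-$B$ cases are already in the chain), which still yields a positive $\delta^*$ and suffices for the lemma's existence claim.
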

\begin{proof}
Suppose that $F_{i, t}^{k+1}=1$. Notice that by increasing $\lambda_{i, t}$ by some $\Delta \lambda_{i, t}$ node $i$ spends additional $\Delta b_{i, t}=c_{\text{st}}\Delta \lambda_{i, t}$ energy \emph{only in the time slot t}. As $F_{i, t}^{k+1}=1$, by the rules \ref{spent-all-energy} and \ref{previous-energy}, either $b_{i, t'}>0$ $\forall t'>t$, or there is a time slot $t''>t$ such that $b_{i, t''}^k+e_{i, t''}- \Delta b_{i, t''}^{k}>B$ and $t''<t'''$, where $t'''=\arg\min\left\{\tau>t: b_{i, \tau}=0\right\}$. 

If $b_{i, t'}>0$ $\forall t'>t$, then the node $i$ can spend $\Delta b_{i, t} = \min_{t+1\leq t'\leq T+1} b_{i, t'}^k$ energy, and keep $b_{i, t'}\geq 0$, $\forall t'$, which follows from the battery evolution equation (\ref{battery-with-delta-b}). 

If there is a slot $t'''>t$ in which $b_{i, t'''}^k=0$, then let $t''$ be the minimum time slot between $t$ and $t'''$, such that $b_{i, t''}^k+e_{i, t''}- \Delta b_{i, t''}^{k}>B$. Decreasing the battery level at $t''$ by $(b_{i, t''}^k+e_{i, t''}- \Delta b_{i, t''}^{k})-B$ does not influence any other battery levels, as in either case $b_{i, t''+1} = B$. As all the battery levels are positive in all the time slots between $t$ and $t''$, $i$ can spend at least $\min\{(b_{i, t''}^k+e_{i, t''}- \Delta b_{i, t''}^{k})-B, \quad\min_{t+1\leq t'\leq t''} b_{i, t'}^k\}$ energy at the time $t$ and have $b_{i, t'}\geq0$ $\forall t'$.

By the rule \ref{descendants}, $\forall j$ such that $j\in p_{i, t}$ we have that $b_{j, t}>0$, and, furthermore, if $\exists t'''>t$ with $b_{j, t'''}=0$ then $\exists t''\in\{t, t'''\}$ such that $b_{i, t''}^k+e_{i, t''}- \Delta b_{i, t''}^{k}>B$. By the same observations as for the node $i$, each $j\in p_{i, t}$ can spend some extra energy $\Delta b_{j, t}>0$ in the time slot $t$ and keep all the battery levels nonnegative. In other words, there is a directed path from the node $i$ to the sink on which every node can spend some extra energy in time slot $t$ and keep its battery levels nonnegative. Therefore, if we keep all other rates fixed, the rate $\lambda_{i, t}$ can be increased by $\Delta \lambda_{i, t} =\min\{ {\Delta b_{i, t}}/{c_{\text{st}}}, \min_{j\in p_{i, t}} {\Delta b_{j, t}}/{c_{\text{rt}}}\}>0$. 

As each active rate $\lambda_{i, t}$ can (alone) get increased in the iteration $k+1$ by some $\Delta \lambda_{i, t}>0$, it follows that all the active rates can be increased simultaneously by at least\\ $\min_{i, t}\Delta\lambda_{i, t}/(T(c_{\text{st}}+nc_{\text{rt}}))>0$.
\end{proof}

\begin{theorem}\label{thm:unsplittable-fixing}
Fixing rules \ref{spent-all-energy}, \ref{previous-energy} and \ref{descendants} provide necessary and sufficient conditions for fixing the sensing rates in \textsc{Water-filling-Framework}.
\end{theorem}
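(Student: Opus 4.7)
The plan is to observe that this theorem is essentially a packaging of the two lemmas immediately preceding it, so the proposal only needs to explain why those two lemmas together cover both directions of the required equivalence. Recall that in the \textsc{Water-filling-Framework}, at the end of each iteration one must identify the set of rates that (i) cannot be further increased without losing either feasibility or max-min fairness, and (ii) those that can still be increased by a positive amount in the next iteration. Necessity and sufficiency of the rules (F1)--(F3) correspond exactly to the conjunction of these two requirements.

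First, I would invoke Lemma \ref{necessity} to argue necessity: whenever $F^{k+1}_{i,t}=0$ under the rules (F1)--(F3), the rate $\lambda_{i,t}$ is indeed stuck at its current value for all subsequent iterations, in the sense that any increase would either violate a battery nonnegativity constraint or force a decrease in some $\lambda_{j,\tau}\le \lambda_{i,t}$ (contradicting max-min fairness). This handles the ``only if'' direction: no rate that the rules fix could have been left active.

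Next, I would invoke Lemma \ref{sufficiency} to argue sufficiency in the opposite direction: whenever $F^{k+1}_{i,t}=1$, there is a strictly positive $\Delta\lambda_{i,t}$ by which $\lambda_{i,t}$ can be increased in iteration $k+1$, and moreover all such active rates can be simultaneously increased by at least $\min_{i,t}\Delta\lambda_{i,t}/(T(c_{\text{st}}+nc_{\text{rt}}))>0$. This confirms that the rules do not prematurely fix any rate: every rate left active truly has headroom.

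The main (already-completed) obstacle is proving the two supporting lemmas; for the theorem itself, the only remaining subtlety to spell out is that these two properties are precisely what the water-filling framework demands at step~\ref{fixing}. I would therefore close the proof by remarking that combining Lemmas \ref{necessity} and \ref{sufficiency} yields the claimed equivalence: the set of $(i,t)$ pairs with $F^{k+1}_{i,t}=0$ under (F1)--(F3) coincides exactly with the set of rates that must be fixed at the end of iteration $k$.
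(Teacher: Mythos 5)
Your proposal is correct and matches the paper's proof, which simply states that the theorem follows directly from Lemmas~\ref{necessity} and~\ref{sufficiency}. You elaborate the same observation in more detail, but the decomposition into the necessity and sufficiency lemmas is exactly the paper's argument.
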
\iffullresults
\begin{proof}
Follows directly from Lemmas \ref{necessity} and \ref{sufficiency}.
\end{proof}
\fi

\begin{lemma}\label{lemma:unsplittable-fixing-time}
The total running time of \textsc{Fixing-the-Rates} in \textsc{P-Unsplittable-Find} is $O(mT)$.
\end{lemma}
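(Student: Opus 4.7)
The plan is an amortized analysis across all iterations of the water-filling framework. The crucial invariant, already implicit in observation (o3) from the proof of Lemma \ref{sufficiency}, is that each pair $(i,t)$ can transition from active to fixed at most once: once $F_{i,t}^{k+1}=0$, the entry stays $0$ in all subsequent iterations. Consequently, summed over all iterations, the total number of such transitions is at most $nT$. My goal is to argue that each of the rules \ref{spent-all-energy}, \ref{previous-energy}, \ref{descendants} performs $O(1)$ amortized work per newly fixed pair, plus a routing-structure traversal cost that is still absorbed into $O(mT)$.

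Rule \ref{spent-all-energy} is immediate: given the battery levels $b_{i,t+1}^k$ already maintained by \textsc{Maximizing-the-Rates}, we check $b_{i,t+1}^k = 0$ and set $F_{i,t}^{k+1}=0$. Only nodes $i$ achieving $\lambda_i^k = \lambda^k$ can have a freshly zero-battery slot, and the critical slot where such a node's battery first hits zero is identified as a byproduct of the binary search used in the maximizing step, so the detection cost can be charged against the newly fixed pairs. Rule \ref{previous-energy} walks backwards in time from each \ref{spent-all-energy}-fixed pair $(i,t_0)$ to the earliest $(i,\tau)$ for which $b_{i,s}^k + e_{i,s} - \Delta b_{i,s}^k \le B$ holds for all $s \in \{\tau,\ldots,t_0\}$; every step of this backward walk fixes one new pair $(i,s)$, so the cumulative cost of \ref{previous-energy} across all iterations is $O(nT)$.

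For rule \ref{descendants}, when $(i,t)$ becomes fixed we propagate to all $j$ with $i \in p_{j,t}$ by traversing the routing structure at time $t$ starting from $i$ and stopping whenever an already-fixed node is reached. Because each node sends its aggregated flow along a single outgoing edge in an unsplittable routing, the union of the paths to the sink in any fixed slot $t$ forms a tree rooted at the sink, and the subtree hanging below $i$ is precisely the set of descendants to be fixed. Each edge of the time-$t$ tree is traversed at most once across all iterations, namely at the moment its child endpoint first becomes fixed; since the tree has at most $n$ edges, summing over $T$ slots yields $O(nT)$ for \ref{descendants}. Combining the three rules gives $O(nT)$ total fixing work, which is $O(mT)$ because every node has a directed path to the sink and so $m \ge n$. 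The main obstacle I foresee is justifying cleanly that the \ref{descendants} traversal visits each tree edge only once (it requires both the tree-structure observation and the stopping rule at already-fixed nodes) and that the \ref{spent-all-energy} detection does not incur $\Omega(nT)$ overhead per iteration; once these implementation details are pinned down, the amortized bound follows directly from the at-most-$nT$ fixing events.
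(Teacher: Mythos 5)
Your proof argues for a genuinely different (and stronger) claim than the one the paper intends. The paper's Lemma~\ref{lemma:unsplittable-fixing-time} gives the cost of a \emph{single} invocation of \textsc{Fixing-the-Rates}, which is then multiplied by the $O(nT)$ iterations in Lemma~\ref{lemma:unsplittable-total-time}; the paper's proof is simply $O(nT)$ to scan every $(i,t)$ for rules~\ref{spent-all-energy}--\ref{previous-energy} plus a per-slot BFS costing $O(m)$ each for rule~\ref{descendants}. You instead aim for $O(mT)$ \emph{summed over all iterations} via amortization, which would materially improve the overall bound, but two steps do not go through.

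First, your rule~\ref{descendants} analysis rests on the claim that \textquotedblleft each node sends its aggregated flow along a single outgoing edge,\textquotedblright{} so the routing paths at each slot form a tree. That characterizes a \emph{routing tree}, not general unsplittable routing: in an unsplittable routing a node $i$ sends its \emph{own} data on a single path, but data arriving at $i$ from different descendants may be forwarded over different outgoing edges. Figure~\ref{fig:tradeoffs--tree-unsplittable} is exactly such an example: node $b$ forwards $c_i$'s flow to $a_i$ but its own flow to $a_k$, so $b$ has several outgoing edges and the routing structure at a slot is not a tree. The correct per-slot traversal cost is therefore $O(m)$ (as the paper charges), and your edge-once amortization yields $O(mT)$ rather than $O(nT)$ for rule~\ref{descendants} — so the end bound still happens to match $O(mT)$, but the tree reasoning you give for it is wrong.

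Second, the accounting for rule~\ref{spent-all-energy} does not close. You correctly observe that only bottleneck nodes can produce a freshly zero battery slot, but detecting \emph{which} of a bottleneck node's $T$ slots have $b_{i,t+1}^k = 0$ still requires examining up to $T$ levels, and that scan costs $O(T)$ even when it uncovers only a single newly fixed pair. Since a node can be a bottleneck in many distinct iterations (fixing a different slot each time), charging the $O(T)$ scan against newly fixed pairs gives $O(T)$ per fixed pair, i.e.\ $O(nT\cdot T) = O(nT^2)$ total, not $O(nT)$. Pushing the scan into \textsc{Maximizing-the-Rates} (via the binary search) relocates rather than removes the cost, since the battery profiles must in any case be re-evaluated at the final $\lambda^k$. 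The paper sidesteps all of this by accepting a plain $O(nT)$ scan per iteration; if you want a true cross-iteration amortized bound, rule~\ref{spent-all-energy} detection is the part that needs a new idea.
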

\iffullresults 
\begin{proof}
Rules \ref{spent-all-energy} and \ref{previous-energy} can be implemented for each node independently in time $O(T)$ by examining the battery levels from slot $T+1$ to slot $2$. 

For the rule \ref{descendants}, in each time slot $t\in\{1,...,T\}$ enqueue all the nodes $i$ whose rates got fixed in time slot $t$ by either of the rules \ref{spent-all-energy}, \ref{previous-energy} and perform a breadth-first search. Fix the rates of all the nodes discovered by a breadth-first search. This gives $O(m)$ time per slot, for a total time of $O(mT)$. Combining with the time for rules \ref{spent-all-energy} and \ref{previous-energy}, the result follows.
\end{proof}\fi

Combining Lemmas \ref{lemma:unsplittable-max-rates-time} and \ref{lemma:unsplittable-fixing-time}, we can compute the total running time of \textsc{Water-filling-Framework} for \textsc{P-Unsplit\-table-Find}, as stated in the following lemma.

 \begin{lemma}\label{lemma:unsplittable-total-time}
 \textsc{Water-filling-Framework} with Steps \ref{maximizing} \textsc{Maximizing-the-Rates} and \ref{fixing} \textsc{Fixing-the-Rates} implemented as described in Section \ref{section:unsplittable} runs in time: \begin{equation}O(nT(mT+nT\log({B+\max_{i, t} e_{i, t}}/{(\delta c_{\text{st}})} ))).\notag\end{equation}
 \end{lemma}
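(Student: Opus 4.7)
The plan is to combine the per-iteration running-time bounds from Lemmas~\ref{lemma:unsplittable-max-rates-time} and~\ref{lemma:unsplittable-fixing-time} with an upper bound on the total number of iterations of \textsc{Water-filling-Framework}. Since each iteration executes one call to \textsc{Maximizing-the-Rates} followed by one call to \textsc{Fixing-the-Rates}, the per-iteration cost is $O(mT) + O(nT\log((B+\max_{i,t} e_{i,t})/(\delta c_{\text{st}})))$. It therefore suffices to show that the algorithm terminates after at most $O(nT)$ iterations, at which point the claimed bound follows by multiplication.

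To bound the number of iterations, I would argue that \emph{at least one previously active rate becomes fixed at the end of every iteration}. In iteration $k$, \textsc{Maximizing-the-Rates} increases the common value $\lambda^k$ until some constraint in $\mathcal{R}$ becomes tight. The only constraints that can stop this increase are the battery non-negativity constraints $b_{i,t+1}^k \geq 0$, since (\ref{delta-b})--(\ref{battery-with-delta-b}) ensure that $b_{i,t+1}^k$ is non-increasing in $\lambda^k$ while the $\min\{B,\cdot\}$ cap never obstructs further growth of $\lambda^k$. Hence there must be some pair $(i,t)$ with $b_{i,t+1}^k = 0$ at the end of the maximization step, and rule \ref{spent-all-energy} then sets $F_{i,t}^{k+1}=0$; by Lemma~\ref{sufficiency}, this rate was still active at the beginning of iteration $k$ (otherwise it would contradict feasibility of the previous iteration). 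Because there are only $nT$ rates $\lambda_{i,t}$ in total and, by observation \ref{o3} in the proof of Lemma~\ref{necessity}, a fixed rate can never become active again, the algorithm terminates in at most $nT$ iterations.

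Putting the two pieces together yields
\begin{equation*}
O\bigl(nT\bigr) \cdot \Bigl( O(mT) + O\bigl(nT\log((B+\max_{i,t} e_{i,t})/(\delta c_{\text{st}}))\bigr) \Bigr) \;=\; O\bigl(nT(mT + nT\log((B+\max_{i,t} e_{i,t})/(\delta c_{\text{st}})))\bigr),
\end{equation*}
which is exactly the bound in the statement of the lemma.

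The only non-routine step is the claim that each iteration fixes at least one rate, and even this is immediate once one observes that the maximization is stopped precisely by a battery constraint turning tight, which is then converted into a fixed rate by rule \ref{spent-all-energy}. All other aspects of the proof are simply bookkeeping, invoking the two earlier lemmas as black boxes.
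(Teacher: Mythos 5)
Your proposal matches the paper's own proof exactly: bound the number of iterations by $O(nT)$ via the observation that each iteration fixes at least one new rate (and, by observation (o2) in the proof of Lemma~\ref{necessity}, a fixed rate stays fixed), then multiply by the per-iteration costs from Lemmas~\ref{lemma:unsplittable-max-rates-time} and~\ref{lemma:unsplittable-fixing-time}. One small imprecision: the claim that the pair $(i,t)$ with $b_{i,t+1}^k=0$ ``was still active at the beginning of iteration $k$'' does not follow directly from Lemma~\ref{sufficiency} as stated; the cleaner argument is the contrapositive — if no new rate were fixed at the end of iteration $k$, then $F^{k+1}=F^k$ and Lemma~\ref{sufficiency} would let every still-active rate increase by a common positive amount, contradicting the maximality of $\lambda^k$ — but the conclusion and the overall structure are the same as the paper's (which, for what it's worth, simply asserts the one-fixed-per-iteration fact without justification).
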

 \iffullresults
 \begin{proof}
 To bound the running time of the overall algorithm that performs lexicographic maximization, we need to first bound the number of iterations that the algorithm performs. 
As in each iteration at least one sensing rate $\lambda_{i, t}$, $i \in \{1,..., n\}$, $t\in\{1,...,T\}$, gets fixed, and once fixed remains fixed, the total number of iterations is $O(nT)$. The running time of each iteration is determined by the running times of the steps \ref{maximizing} (\textsc{Maximizing-the-Rates}) and \ref{fixing} (\textsc{Fixing-the-Rates}) of the \textsc{Water-filling-Framework}. \textsc{Maximizing-the-Rates} runs in $O\left(nT\log\left( \frac{B+\max_{i, t} e_{i, t}}{\delta c_{\text{st}}} \right)\right)$ (Lemma \ref{lemma:unsplittable-max-rates-time}), whereas \textsc{Fixing-the-Rates} runs in $O(mT)$ time (Lemma \ref{lemma:unsplittable-fixing-time}). Therefore, the total running time is: $O\left(nmT^2+n^2T^2\log\left({B+\max_{i, t} e_{i, t}}/{(\delta c_{\text{st}})} \right)\right)$.
 \end{proof}
 \fi

\section{Fractional Routing}\label{section:fractional}
The feasible region $\mathcal{R}$ for the rates and flows in \textsc{P-Fractio\-nal} can be described by the following constraints:
\begin{align*}
&\forall i\in \{1, ..., n\}, t \in \{1,..., T\}:\\
&f^{\Sigma}_{i, t} + \lambda_{i, t} = \littlesum_{(i, j)\in E}f_{ij, t}\\
&b_{i, t+1}=\min\{ B,\> b_{i, t}+e_{i, t}
- (c_{\text{rt}}f^{\Sigma}_{i, t}+ c_{\text{st}}\lambda_{i, t} )\}&\\
&b_{i, t}\geq0, \>\lambda_{i, t} \geq 0, \> f_{ij, t}\geq 0, \forall (i, j)\in E,
\end{align*}
where $f^{\Sigma}_{i, t}\equiv \littlesum_{(j, i)\in E}f_{ji, t}$.

Observe that we can avoid computing the values of battery levels $b_{i, t+1}$, and instead explicitly write the non-negativity constraints for each of the terms inside the $\min$. Reordering the terms, we get the following formulation:
\begin{flalign}
&\forall i\in \{1, ..., n\}, t \in \{1,..., T\}:\notag\\
&f^{\Sigma}_{i, t} + \lambda_{i, t} = \littlesum_{(i, j)\in E}f_{ij, t}\label{packing-flow-balance}\\
&\littlesum_{\tau=1}^t (c_{\text{rt}}f^{\Sigma}_{i, \tau} + c_{\text{st}}\lambda_{i, t}) \leq b_{i, 1}+ \littlesum_{\tau=1}^t e_{i, \tau}\label{packing-battery-not-full}\\
&\littlesum_{\tau=s}^t (c_{\text{rt}}f^{\Sigma}_{i, \tau} + c_{\text{st}}\lambda_{i, t}) \leq B+\littlesum_{\tau=s}^t e_{i, \tau}, \quad2\leq s\leq t \label{packing-battery-full}\\
&\lambda_{i, t} \geq 0, \> f_{ij, t}\geq 0, \forall (i, j)\in E\label{packing-nonnegativity}
\end{flalign}

In the $k^{\text{th}}$ iteration of \textsc{Water-filling-Framework} we have that $\lambda_{i, t}^k=\lambda_{i, t}^{k-1}+F_{i, t}^k\cdot \lambda^k=\littlesum_{l=1}^k F_{i, t}^l\cdot\lambda^l$, where $\lambda_{i, t}^0=0$. Let:
\begin{equation*}
u_{i, t}^b=b_{i, 1}+ \littlesum_{\tau=1}^t(e_{i, \tau}- c_{\text{st}}\lambda_{i, \tau}^{k-1} ),\;
u_{i, t, s}^B=B+\littlesum_{\tau=s}^t(e_{i, \tau}- c_{\text{st}}\lambda_{i, \tau}^{k-1} )
\end{equation*}
Since in the iteration $k$ all $\lambda_{i, t}^{k-1}$\rq{}s are constants, the rate maximization subproblem can be written as:
\begin{flalign}
\textbf{max } \>&\lambda^k\label{packing-single}\\
\textbf{s.t. }  &\forall i\in \{1, ..., n\}, t \in \{1,..., T\}:\notag\\
&-f^{\Sigma}_{i, t} - F_{i, t}^k\cdot \lambda^k  + \littlesum_{(i, j)\in E}f_{ij, t} = \lambda_{i, t}^{k-1}\label{packing-single-flow-balance}\\
&\littlesum_{\tau=1}^t (c_{\text{rt}}f^{\Sigma}_{i, \tau} + F_{i, \tau}^k\cdot c_{\text{st}}\lambda^k) \leq u_{i, t}^b\label{packing-single-battery-not-full}\\
&\littlesum_{\tau=s}^t (c_{\text{rt}}f^{\Sigma}_{i, \tau} + F_{i, \tau}^k\cdot c_{\text{st}}\lambda^k) \leq u_{i, t, s}^B, \quad2\leq s\leq t\label{packing-single-battery-full}\\
&\lambda^k \geq 0, \> f_{ij, t}\geq 0, \forall (i, j)\in E\label{packing-single-nonnegativity} 
\end{flalign}
Notice that in this formulation all the variables are on the left-hand side of the constraints, whereas all the right-hand sides are constant.

\subsection{Relation to Multi-commodity Flow}

Let $T=2$, and observe the constraints in (\ref{packing-flow-balance})--(\ref{packing-nonnegativity}). We claim that verifying whether any set of sensing rates $\lambda_{i, t}$ is feasible is at least as hard as solving a 2-commodity feasible flow problem with capacitated nodes and a single sink. To prove the claim, we first rewrite the constraints in (\ref{packing-flow-balance})--(\ref{packing-nonnegativity}) as:
\begin{align*}
&\littlesum_{(j, i)\in E}f_{ji, t} + \lambda_{i, t} = \littlesum_{(i, j)\in E}f_{ij, t}, \hspace{48pt} t\in\{1, 2\}\\
&c_{\text{rt}}\littlesum_{(j, i)\in E}f_{ji, 1} \leq b_{i, 1}+e_{i, 1}-c_{\text{st}}\lambda_{i, 1}\\
&c_{\text{rt}}\littlesum_{\tau=1}^2 \littlesum_{(j, i)\in E}f_{ji, \tau} \leq b_{i, 1}+ \littlesum_{\tau=1}^2\left(e_{i, \tau}-c_{\text{st}}\lambda_{i, \tau}\right)\\
&c_{\text{rt}}\littlesum_{(j, i)\in E}f_{ji, 2} \leq B+ e_{i, 2}-c_{\text{st}}\lambda_{i, 2}\\
&\lambda_{i, t} \geq 0, f_{ij, t}\geq 0,\; \forall i\in \{1, ..., n\}, (i, j)\in E, t\in\{1, 2\}
\end{align*}

Suppose that we are given any 2-commodity flow problem with capacitated nodes, and let:
\begin{itemize}[itemsep=-3pt, topsep=3pt, leftmargin=10pt]
\item $\lambda_{i, t}$ denote the supply of commodity $t$ at node $i$;
\item $u_{i, t}$ denote the per-commodity capacity constraint at node $i$ for commodity $t$;
\item $u_i$ denote the bundle capacity constraint at node $i$.
\end{itemize}

Choose values of $c_{\text{s}}, c_{\text{rt}}, B, b_{i, 1}, b_{i, 2}, e_{i, 1}, e_{i, 2}$ so that the following equalities are satisfied:
\begin{flalign*}
u_{i, 1 }&= \frac{1}{c_{\text{rt}}}\cdot\left( b_{i, 1}+e_{i, 1}-c_{\text{st}}\lambda_{i, 1} \right)\\
u_{i, 2} &= \frac{1}{c_{\text{rt}}}\cdot\left( B+ e_{i, 2}-c_{\text{st}}\lambda_{i, 2} \right)\\
u_i &= \frac{1}{c_{\text{rt}}}( b_{i, 1}+\littlesum_{\tau=1}^2\left(e_{i, \tau}-c_{\text{st}}\lambda_{i, \tau})\right)
\end{flalign*}
Then feasibility of the given 2-commodity flow problem is equivalent to the feasibility of (\ref{packing-flow-balance})--(\ref{packing-nonnegativity}). Therefore, any 2-commodity feasible flow problem can be stated as an equivalent problem of verifying feasibility of sensing rates $\lambda_{i, t}$ in an energy harvesting network for $T=2$.

For $T>2$, (\ref{packing-battery-not-full}) and (\ref{packing-battery-full}) are general packing constraints. If a flow graph $G_t$ in time slot $t$ is observed as a flow of a commodity indexed by $t$, then for each node $i$ the constraints (\ref{packing-battery-not-full}) and (\ref{packing-battery-full}) define capacity constraints for every sequence of consecutive commodities $s, s+1,..., t$, $1\leq s\leq t\leq T$.

Therefore, to our current knowledge, it is unlikely that the general rate assignment problem can be solved exactly in polynomial time without the use of linear programming, as there have not been any combinatorial algorithms that solve feasible 2-commodity flow exactly.

\subsection{Fractional Packing Approach}

The fractional packing problem is defined as follows \cite{Plotkin1995}:\\
\textsc{Packing: }\emph{Given a convex set $P$ for which $Ax\geq 0$ $\forall x\in P$, is there a vector $x$ such that $Ax\leq b$?} Here, $A$ is a $p\times q$ matrix, and $x$ is a $q$-length vector.

A given vector $x$ is an $\epsilon$-approximate solution to the \textsc{Packing} problem if $x\in P$ and $Ax\leq(1+\epsilon)b$. Alternatively, scaling all the constraints by $\frac{1}{1+\epsilon}$, we obtain a solution $x'=\frac{1}{1+\epsilon}x\in(\frac{1}{1+\epsilon}x_{\text{OPT}}, x_{\text{OPT}}]\subset ((1-\epsilon)x_{\text{OPT}}, x_{\text{OPT}}]$, for $\epsilon<1$, where $x_{\text{OPT}}$ is an optimal solution to the packing problem. The algorithm in \cite{Plotkin1995} either provides an $\epsilon$-approximate solution to the \textsc{Packing} problem, or it proves that no such solution exists. It\rq{}s running time depends on:
\begin{itemize}[itemsep=-3pt, topsep=3pt, leftmargin=10pt]
\item The running time required to solve $\min \{cx: x\in P\}$, where $c=y^TA$, $y$ is a given $p$-length vector, and $(.)^T$ denotes the transpose of a vector.
\item The width of $P$ relative to $Ax\leq b$, which is defined by $\rho = \max_i \max_{x\in P}\frac{a_ix}{b_i}$, where $a_i$ is the $i^{\text{th}}$ row of $A$, and $b_i$ is the $i^{\text{th}}$ element of $b$.
\end{itemize}

For a given error parameter $\epsilon>0$, a feasible solution to the problem $\min\{\beta: Ax\leq \beta b, x\in P\}$, its dual solution $y$, and $C_{\mathcal{P}}(y)=\min\{cx: c=y^TA, x\in P\}$, \cite{Plotkin1995} defines the following relaxed optimality conditions:
\begin{flalign*}
&(1-\epsilon)\beta y^Tb\leq y^TAx &(\mathcal{P}1)\\
& y^TAx - C_{\mathcal{P}}(y) \leq \epsilon(y^TAx+\beta y^Tb) &(\mathcal{P}2)
\end{flalign*}

The packing algorithm \cite{Plotkin1995} is implemented through subsequent calls to the procedure \textsc{Improve-Packing}:
\begin{algorithm}
\caption{\textsc{Improve-Packing}($x, \epsilon$)\cite{Plotkin1995}}
\begin{algorithmic}[1]
\State Initialize $\beta_0 = \max_i a_ix/b_i; \, \alpha = 4\beta_0^{-1}\epsilon^{-1}\ln(2p\epsilon^{-1});\, \sigma=\epsilon/(4\alpha \rho)$.
\While{$\max_i a_ix/b_i\geq \beta_0/2$ and $x, y$ do not satisfy ($\mathcal{P}2$)}
	\State  For each $i=1,2,..., p$: set $y_i=(1/b_i)e^{\alpha a_ix/b_i}$.
	\State  Find a min-cost point $\tilde{x}\in P$ for costs $c=y^TA$.
	\State  Update $x=(1-\sigma)x+\sigma\tilde{x}$.
\EndWhile
\State \textbf{return} $x$.
\end{algorithmic}
\end{algorithm}

The running time of the $\epsilon$-approximation algorithm provided in \cite{Plotkin1995}, for $\epsilon \in (0, 1]$, equals $O(\epsilon^{-2}\rho\log(m\epsilon^{-1}))$ multiplied by the time needed to solve $\min \{cx: c=y^TA, x\in P\}$ and compute $Ax$ (Theorem 2.5 in \cite{Plotkin1995}).

\subsubsection{Maximizing the Rates as Fractional Packing}

We demonstrated at the beginning of this section that for the $k^{\text{th}}$ iteration \textsc{Maximize-the-Rates} can be stated as (\ref{packing-single})-(\ref{packing-single-nonnegativity}). Observe the constraints (\ref{packing-single-battery-not-full}) and (\ref{packing-single-battery-full}). Since $\lambda^k$, $f_{ij, t}$ and all the right-hand sides in (\ref{packing-single-battery-not-full}) and (\ref{packing-single-battery-full}) are nonnegative, (\ref{packing-single-battery-not-full}) and (\ref{packing-single-battery-full}) imply the following inequalities:
\begin{align*}
&\forall i\in \{1, ..., n\}, t \in \{1,..., T\}:\\
& F_{i, \theta}^k\cdot c_{\text{st}}\lambda^k \leq u_{i,t}^b,
\hspace{130pt} 1\leq\theta\leq t\\
&F_{i, \theta}^k\cdot c_{\text{st}}\lambda^k \leq u_{i, t, s}^B, 
\hspace{80pt} 2\leq s\leq t,\; s\leq \theta\leq t\\
&c_{\text{rt}}\littlesum_{(j, i)\in E}f_{ji, \theta} \leq u_{i, t}^b -c_{\text{st}}\littlesum_{\tau=1}^t F_{i, \tau}^k\lambda^k,
\hspace{50pt} 1\leq \theta\leq t\\
&c_{\text{rt}}\littlesum_{(j, i)\in E}f_{ji, \theta}\leq u_{i, t, s}^B-c_{\text{st}}\littlesum_{\tau=s}^t F_{i, \tau}^k\lambda^k,
 \;2\leq s\leq t,\; s\leq \theta \leq t
\end{align*}

Therefore, we can yield an upper bound $\lambda_{\max}^k$ for $\lambda^k$:
\begin{align}
&\lambda^k\leq \lambda_{\max}^k\equiv \notag\\
& \frac{1}{c_{\text{st}}}\min_{i, t, s\geq 2}\{u_{i, t}^b:\littlesum_{\tau=1}^t F_{i, \tau}^k>0, u_{i,t,s}^B: \littlesum_{\tau=s}^t F_{i, \tau}^k>0 \}\label{packing-lambda-max}
 \end{align}
 For a fixed $\lambda^k$, the flow entering a node $i$ at time slot $t$ can be bounded as:
 \begin{align}
 &\littlesum_{(j, i)\in E}f_{ji, t}\leq u_{i, t}\equiv\notag\\
 & \frac{1}{c_{\text{rt}}}\min_{\substack{i, t_1\geq t\\ s\geq 2}}\{ u_{i, t_1}^b- c_{\text{st}}\littlesum_{\tau=1}^{t_1}F_{i, \tau}^k\lambda^k,
 u_{i, t, s}^B- c_{\text{st}}\littlesum_{\tau=s}^{t_1}F_{i, \tau}^k\lambda^k \}\label{packing-node-capacity}
\end{align}

We choose to keep only the flows $f_{ij, t}$ as variables in the \textsc{Packing} problem. Given a $\lambda^k\in [0, \lambda_{\max}^k]$, we define the convex set $P$\footnote{$P$ is determined by linear equalities and inequalities, which implies that it is convex.} via the following set of constrains:
\begin{align}
&\forall i\in \{1, ..., n\}, t \in \{1,..., T\}:\notag\\
&-\littlesum_{(j, i)\in E}f_{ji, t}  + \littlesum_{(i, j)\in E}f_{ij, t} = \lambda_{i, t}^{k-1}+ F_{i, t}^k\cdot \lambda^k \label{P-flow-balance}\\
&\littlesum_{(j, i)\in E}f_{ji, t}\leq u_{i, t}\label{P-capacity}\\
& f_{ij, t}\geq 0, \hspace{14em}\forall (i, j)\in E \label{P-nonnegativity}
\end{align}
\begin{proposition}\label{prop:P-min-cost-flow}
For $P$ described by $(\ref{P-flow-balance})-(\ref{P-nonnegativity})$ and a given vector $y$, problem $\min \{cf: c=y^TAf,\, f\in P\}$ reduces to $T$ min-cost flow problems.
\end{proposition}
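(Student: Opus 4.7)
\begin{proof-sketch}
The plan is to expose the block-diagonal structure of $P$ with respect to the time index and then argue that the objective $cf = y^T A f$ inherits the same time-separability, so that the optimization decomposes into $T$ independent single-slot problems, each of which is a node-capacitated min-cost flow.

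First, I would label the rows of the packing matrix $A$ by the packing constraints from (\ref{packing-single-battery-not-full})--(\ref{packing-single-battery-full}), i.e., pairs $(i,t)$ for the ``not full'' type and triples $(i,t,s)$ with $2\leq s\leq t$ for the ``full'' type, and the columns by the flow variables $f_{ji,\tau}$ for $(j,i)\in E$, $\tau\in\{1,\dots,T\}$. A direct inspection shows that the coefficient of $f_{ji,\tau}$ in row $(i',t)$ equals $c_{\text{rt}}$ if $i'=i$ and $1\leq \tau \leq t$, and is $0$ otherwise; analogously for rows $(i',t,s)$ the coefficient is $c_{\text{rt}}$ iff $i'=i$ and $s\leq \tau\leq t$. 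Consequently, the cost coefficient $c_{ji,\tau}=(y^T A)_{ji,\tau}$ depends only on $i$ and $\tau$, namely
\begin{equation*}
c_{ji,\tau} \;=\; c_{\text{rt}}\Bigl(\littlesum_{t\geq\tau} y^b_{i,t} \;+\; \littlesum_{\substack{2\leq s\leq \tau\\ t\geq \tau}} y^B_{i,t,s}\Bigr),
\end{equation*}
so the objective $\littlesum_{\tau}\littlesum_{(j,i)\in E} c_{ji,\tau}\,f_{ji,\tau}$ separates cleanly by $\tau$.

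Next, I would observe that the defining constraints of $P$ are themselves time-separable: (\ref{P-flow-balance}) only couples variables $f_{ij,t}$ for a fixed $t$ through the node $i$, (\ref{P-capacity}) bounds only variables at a fixed $t$, and (\ref{P-nonnegativity}) is per-variable. Therefore $P=\prod_{\tau=1}^T P_\tau$, and the min-cost problem on $P$ breaks into $T$ independent subproblems, one per time slot $\tau$:
\begin{align*}
\textbf{min } \;& \littlesum_{(j,i)\in E} c_{ji,\tau}\, f_{ji,\tau}\\
\textbf{s.t. } \; & \littlesum_{(i,j)\in E} f_{ij,\tau} - \littlesum_{(j,i)\in E} f_{ji,\tau} = \lambda_{i,\tau}^{k-1} + F_{i,\tau}^{k}\lambda^k, \quad \forall i,\\
& \littlesum_{(j,i)\in E} f_{ji,\tau} \leq u_{i,\tau}, \quad \forall i,\\
& f_{ij,\tau}\geq 0, \quad \forall (i,j)\in E.
\end{align*}

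Finally, the subproblem for slot $\tau$ is a min-cost flow with supplies $\lambda_{i,\tau}^{k-1}+F_{i,\tau}^k\lambda^k$ at each node and with node (rather than edge) capacities $u_{i,\tau}$; all excess supply flows to the sink $s$, which absorbs $\littlesum_{i}(\lambda_{i,\tau}^{k-1}+F_{i,\tau}^k\lambda^k)$ in slot $\tau$. A node-splitting reduction (replace each node $i$ by $i^{\text{in}},i^{\text{out}}$ joined by an edge of capacity $u_{i,\tau}$ and zero cost, redirect incoming edges to $i^{\text{in}}$ and outgoing edges from $i^{\text{out}}$) turns this into a standard edge-capacitated min-cost flow, completing the reduction to $T$ min-cost flow problems. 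The only subtle point is checking the structure of $A$ to confirm that the cost of $f_{ji,\tau}$ depends only on $(i,\tau)$ and not on the tail $j$, which is what guarantees that the per-slot costs are well-defined edge costs; the rest of the argument is routine.
\end{proof-sketch}
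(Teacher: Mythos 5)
Your proof is correct and takes essentially the same approach as the paper's: decompose by time slot using the observation that both the constraints of $P$ and the cost vector $c = y^T A$ separate across $\tau$, yielding $T$ independent node-capacitated min-cost flow problems. You supply more detail than the paper does (explicitly computing the cost coefficients $c_{ji,\tau}$ to verify separability, and spelling out the node-splitting reduction), but the underlying argument is identical.
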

\iffullresults
\begin{proof}
Constraint (\ref{P-flow-balance}) is a standard flow balance constraint at a node $i$ in a time slot $t$, whereas constraint (\ref{P-capacity}) corresponds to a node capacity constraint at the time $t$, given by (\ref{packing-node-capacity}). As there is no interdependence of flows over time slots, we get that the problem can be decomposed into subproblems corresponding to individual time slots. Therefore, to solve the problem $\min \{cf: c=y^TAf,\, f\in P\}$ for a given vector $y$, it suffices to solve $T$ min-cost flow problems, one for each time slot $t\in \{1,2,...,T\}$.
\end{proof}
\fi
The remaining packing constraints of the form $Ax\leq b$ are given by (\ref{packing-single-battery-not-full}) and (\ref{packing-single-battery-full}), where $x\equiv f$.
\begin{proposition}\label{prop:Ax-nonnegative}
$Ax\geq 0$ $\forall f\in P$.
\end{proposition}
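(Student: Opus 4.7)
The plan is to unpack what the matrix $A$ looks like after we fix $\lambda^k \in [0, \lambda_{\max}^k]$ and declare the flow variables $f_{ij,t}$ to be the decision vector $x$, and then observe that every entry of $A$ is nonnegative, which together with constraint (\ref{P-nonnegativity}) immediately yields $Ax \ge 0$.

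Concretely, the packing constraints $Ax \le b$ are precisely (\ref{packing-single-battery-not-full}) and (\ref{packing-single-battery-full}) after moving the (now constant) terms $c_{\text{st}}\sum_\tau F_{i,\tau}^k \lambda^k$ to the right-hand side. For the constraints of type (\ref{packing-single-battery-not-full}), the row associated to node $i$ and time $t$ acts on $f$ as
\begin{equation*}
c_{\text{rt}} \sum_{\tau=1}^{t} \sum_{(j,i)\in E} f_{ji,\tau},
\end{equation*}
and for constraints of type (\ref{packing-single-battery-full}) the row associated to node $i$ and interval $[s,t]$ acts as
\begin{equation*}
c_{\text{rt}} \sum_{\tau=s}^{t} \sum_{(j,i)\in E} f_{ji,\tau}.
\end{equation*}
In both cases every coefficient is either $c_{\text{rt}}\ge 0$ or $0$; that is, $A \ge 0$ entrywise.

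Since every $f \in P$ satisfies (\ref{P-nonnegativity}), i.e.\ $f_{ij,t} \ge 0$ for all $(i,j)\in E$ and all $t$, the product $Af$ is a sum of nonnegative terms in each coordinate, so $Af \ge 0$. This establishes the proposition. There is no real obstacle here: the statement is a direct consequence of the fact that the packing constraints count energy consumed by incoming flow over contiguous time intervals, and both the unit cost $c_{\text{rt}}$ and the flows themselves are nonnegative. The only thing to be careful about is to separate the role of $\lambda^k$ (already absorbed into $b$) from the role of $f$ (the actual variable of the packing LP), so that $A$ really does only see the flow coefficients.
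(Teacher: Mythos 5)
Your proof is correct and follows essentially the same reasoning as the paper's: the coefficients of $f_{ij,t}$ in the packing constraints (\ref{packing-single-battery-not-full}) and (\ref{packing-single-battery-full}) are all $c_{\text{rt}}$ or $0$, hence nonnegative, and $f \ge 0$ over $P$ by (\ref{P-nonnegativity}), so $Af \ge 0$. Your write-up simply spells out the absorption of the constant $\lambda^k$-terms into $b$, which the paper leaves implicit.
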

\iffullresults
\begin{proof}
  As $f_{ij, t}\geq 0$ $\forall(i, j)\in E, t\in\{1,..., T\}$, and all the coefficients multiplying $f_{ij, t}$'s in (\ref{packing-single-battery-not-full}) and (\ref{packing-single-battery-full}) are nonnegative, the result follows immediately.
  \end{proof}
  \fi
\begin{lemma}\label{lemma:improve-packing-time}
One iteration of \textsc{Improve-Packing} for \textsc{P-Fractional} can be implemented in time
\begin{equation*}
O\left(nT^2+T\cdot MCF(n, m)\right),
\end{equation*}
where $MCF(n, m)$ denotes the running time of a min-cost flow algorithm on a graph with $n$ nodes and $m$ edges.
\end{lemma}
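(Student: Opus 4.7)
The plan is to walk through the three operations performed inside one iteration of the \textsc{while}-loop of \textsc{Improve-Packing}---evaluating every $a_i x$ (to set the $y_i$'s), computing a min-cost point $\tilde f \in P$ for costs $c = y^T A$, and performing the convex combination $x \leftarrow (1-\sigma)x + \sigma \tilde f$---and to show that the first and third together take $O(nT^2)$ time, while the second is dominated by solving $T$ single-commodity min-cost flows.

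First I would count constraints. Constraint~(\ref{packing-single-battery-not-full}) contributes $nT$ inequalities and (\ref{packing-single-battery-full}) contributes one inequality per triple $(i,s,t)$ with $2 \le s \le t \le T$, so $p = O(nT^2)$. For the $y_i$ update I must evaluate every left-hand side $a_i x$ at the current flow $f$. I would first aggregate inflows $f^\Sigma_{i,\tau} = \littlesum_{(j,i)\in E} f_{ji,\tau}$; one edge scan per slot yields all of these in $O(mT)$. For a fixed node $i$, every left-hand side in (\ref{packing-single-battery-not-full}) and (\ref{packing-single-battery-full}) is the window-sum $\littlesum_{\tau=s}^{t}(c_{\text{rt}} f^\Sigma_{i,\tau} + F_{i,\tau}^k c_{\text{st}} \lambda^k)$ for some $s \le t$. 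Precomputing prefix sums of $c_{\text{rt}} f^\Sigma_{i,\tau} + F_{i,\tau}^k c_{\text{st}} \lambda^k$ in $\tau$ reduces every window-sum to a single subtraction, so all $O(T^2)$ of them for node $i$ are produced in $O(T^2)$ time. Summed over $i$ this is $O(nT^2)$, which absorbs the $O(mT)$ inflow-aggregation cost together with the min-cost flow term below.

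For the min-cost step I first need the cost $c_{ji,t} = (y^T A)_{ji,t}$ on every edge-slot variable. The key observation is that $f_{ji,t}$ appears (with coefficient $c_{\text{rt}}$) in constraint~(\ref{packing-single-battery-not-full}) for every $t' \ge t$ and in constraint~(\ref{packing-single-battery-full}) for every pair $(s,t')$ with $2 \le s \le t \le t'$, so the cost of $f_{ji,t}$ equals $c_{\text{rt}}$ times a scalar $C_{i,t}$ that depends only on node $i$ and slot $t$, namely
\begin{equation*}
C_{i,t} = \littlesum_{t' \ge t} y^b_{i,t'} + \littlesum_{s=2}^{t}\,\littlesum_{t'=t}^{T} y^B_{i,t',s}.
\end{equation*}
For a fixed node $i$, all $C_{i,t}$'s can be obtained via a right-to-left prefix sum over $t'$ of $y^b_{i,t'}$ (for the first term) and, for each $s$, a right-to-left prefix sum over $t'$ of $y^B_{i,t',s}$ followed by an accumulation over $s$ (for the second term); this costs $O(T^2)$ per node, hence $O(nT^2)$ in total. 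Once the $C_{i,t}$'s are known, Proposition~\ref{prop:P-min-cost-flow} splits the minimization into $T$ independent min-cost flows on $G$ with node-capacities $u_{i,t}$ and supplies $\lambda^{k-1}_{i,t} + F^k_{i,t}\lambda^k$; after the standard node-splitting reduction each runs in $MCF(n,m)$ time, totaling $T \cdot MCF(n,m)$. Finally, the update $x \leftarrow (1-\sigma)x + \sigma \tilde f$ touches $O(mT)$ flow entries and is absorbed.

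The main obstacle I anticipate is the cost-aggregation step: each variable $f_{ji,t}$ lies in $\Theta(T^2)$ packing constraints, so reading $y^T A$ off constraint by constraint would cost $\Omega(mT^2)$, overshooting the target. The structural fact I exploit---that the cost on every in-edge of node $i$ in slot $t$ is the same aggregated scalar $c_{\text{rt}} C_{i,t}$, expressible via two layers of one-dimensional prefix sums of the $y$'s---is exactly what trims this to $O(nT^2)$ and, combined with the $T$ min-cost flows, yields the stated bound.
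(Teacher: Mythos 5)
Your proposal is correct and follows essentially the same approach as the paper's proof: the paper reformulates the packing constraints in terms of the aggregated inflows $f^{\Sigma}_{i,t}$, observes that the constraint matrix decomposes per node into blocks of lower-triangular $0$-$1$ matrices, and exploits that structure to compute $y_i$ and $c_i$ in $O(T^2)$ per node via incremental updates (the exact same computation you express through one-dimensional prefix/window sums), then invokes Proposition~\ref{prop:P-min-cost-flow} for the $T$ min-cost flows. The only cosmetic difference is that you keep the per-edge flows $f_{ij,t}$ as variables and argue that every in-edge of node $i$ in slot $t$ picks up the same aggregated cost $c_{\text{rt}}C_{i,t}$, whereas the paper works directly with the $f^{\Sigma}_{i,t}$ variables and draws the block-triangular matrix explicitly.
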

\iffullresults
\begin{proof}
  Since the flows over edges appear in the packing constraints only as the sum-terms of the total incoming flow of a node $i$ in a time slot $t$, we can use the total incoming flow $f_{i, t}^{\Sigma}=\littlesum_{(j, i)\in E}f_{ji, t}$ for each $(i, t)$ as variables. Reordering the terms, the packing constraints can be stated as:
\begin{flalign}
&\littlesum_{\tau=1}^t  f_{i, \tau}^{\Sigma} \leq \frac{1}{c_{\text{rt}}}(u^b_{i, t}- c_{\text{st}}\littlesum_{\tau=1}^t F_{i, \tau}^k\lambda^k), \hspace{30pt}1\leq t \leq T\label{packing-1}\\
&\littlesum_{\tau=s}^t f_{i, \tau}^{\Sigma} \leq\frac{1}{c_{\text{rt}}}( u^B_{i, t, s}- c_{\text{st}}\littlesum_{\tau=s}^t F_{i, \tau}^k\lambda^k),\; 2\leq s\leq t,\; 2\leq t\leq T\label{packing-2}
\end{flalign}
With this formulation on hand, the matrix $A$ of the packing constraints $Af^{\Sigma}\leq b$ is a $0-1$ matrix that can be decomposed into blocks of triangular matrices. To see this, first notice that for each node $i$ constraints given by (\ref{packing-1}) correspond to a lower-triangular 0-1 matrix of size $T$. Each sequence of constraints of type (\ref{packing-2}) for fixed $i$ and fixed $s\in\{2,...,T\}$, and $t \in\{s, s+1,..., T\}$ corresponds to a lower-triangular 0-1 matrix of size $T-s+1$. This special structure of the packing constraints matrix allows an efficient computation of the dual vector $y$ and the corresponding cost vector $c$. Moreover, as constraints (\ref{packing-1}, \ref{packing-2}) can be decomposed into independent blocks of constraints of the type $A_if_i^{\Sigma}\leq b_i$ for nodes $i\in\{1,...,n\}$, the dual vector $y$ and the corresponding cost vector $c$ can be decomposed into vectors $y_i, c_i$  for $i\in\{1,...,n\}$. Cost $c_{i, t}$ can be interpreted as the cost of sending 1 unit of flow through node $i$ in time slot $t$.

Observe the block of constraints $A_if_i^{\Sigma}\leq b_i$ corresponding to the node $i$. The structure of $A_i$ is as follows:
\[ \quad\quad T\left\{\begin{matrix}
1 & 0 & 0 & \cdots &0&0\\
1 & 1 & 0 & \cdots &0&0\\
\vdots & \vdots & \vdots & \ddots & \vdots&\vdots\\
1 & 1 & 1 &\cdots &1&1
\end{matrix}\right.\]
\[ T-1 \left\{\begin{matrix}
0 & 1 & 0 & \cdots &0&0\\
0 & 1 & 1 & \cdots & 0&0\\
\vdots & \vdots & \vdots & \ddots & \vdots&\vdots\\
0 & 1 & 1 & \cdots &1&1
\end{matrix}
\right. 
\]
\[\quad\quad\quad\vdots\]
\[\quad\quad 2 \left\{\begin{matrix}
0 & 0 &  0 & \cdots &1 & 0\\
0 & 0 & 0 & \cdots &1 & 1
\end{matrix}
\right. 
\]
\[\quad\quad  1\left\{\begin{matrix}
0 & 0 &  0 & \cdots &0 & 1
\end{matrix}
\right.
\] 
  As $A_i$ can be decomposed into blocks of triangular matrices, each $y_{i, j}$ in the \textsc{Improve-Packing} procedure can be computed in constant time, yielding $O\left(\frac{T(T-1)}{2}\right)=O\left(T^2\right)$ time for computing $y_i$. This special structure of $A_i$ also allows a fast computation of the cost vector $c_i$. Observe that each $c_{i, t}, t \in \{1,...,T\}$ can be computed by summing $O(T)$ terms. For example, $c_{i, 1}=\littlesum_{j=1}^T y_{i, j}$, $c_{i, 2} = c_{i, 1}-y_{i, 1}+\littlesum_{j=T+1}^{2T-1}y_{i, j}$, $c_{i, 3} = c_{i, 2}-y_{i, 2}-y_{i, T+1}+\littlesum_{j=2T}^{3T-2}y_{i, j}$, etc. Therefore, computing the costs for node $i$ takes $O(T^2)$ time. This further implies that one iteration of \textsc{Improve-Packing} takes $O\left(nT^2+T\cdot MCF(n, m)\right)$ time, where $MCF(n, m)$ denotes the running time of a min-cost flow algorithm on a graph with $n$ nodes and $m$ edges.
  \end{proof}
\fi
\begin{lemma}\label{lemma:width}
Width $\rho$ of $P$ relative to the packing constraints $(\ref{packing-single-battery-not-full})$ and $(\ref{packing-single-battery-full})$ is $O(T)$.
\end{lemma}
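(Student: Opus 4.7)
The plan is to bound $\rho = \max_i \max_{f \in P} (a_i f)/b_i$ directly, using the node-capacity constraint (\ref{P-capacity}) that appears in the definition of $P$. First I would rewrite the packing constraints (\ref{packing-single-battery-not-full}) and (\ref{packing-single-battery-full}) in terms of the node inflows $f^{\Sigma}_{i, \tau} = \littlesum_{(j,i) \in E} f_{ji, \tau}$, obtaining constraints of the form $\littlesum_{\tau \in I} f^{\Sigma}_{i, \tau} \leq R_{i, I}$, where $I$ is either an interval $[1, t]$ or $[s, t]$ with $|I| \leq T$, and $R_{i, I}$ is the corresponding right-hand side after division by $c_{\text{rt}}$. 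The essential combinatorial observation is simply that $|I| \leq T$ for every such packing row.

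The key step is to recognize that $u_{i, \tau}$, as given in (\ref{packing-node-capacity}), is precisely the minimum of $R_{i, I}/c_{\text{rt}}^{-1}$ over all intervals $I \ni \tau$ that index packing constraints at node $i$. This design is not accidental: $u_{i, \tau}$ was chosen exactly so that each single inflow, taken alone, already respects every packing constraint containing it. Consequently, the constraint (\ref{P-capacity}) in the definition of $P$ guarantees $f^{\Sigma}_{i, \tau} \leq u_{i, \tau} \leq R_{i, I}$ for every packing constraint $(i, I)$ with $\tau \in I$.

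With this per-term bound available, for any packing constraint and any $f \in P$ one obtains
\begin{equation*}
\littlesum_{\tau \in I} f^{\Sigma}_{i, \tau} \;\leq\; \littlesum_{\tau \in I} u_{i, \tau} \;\leq\; |I| \cdot R_{i, I} \;\leq\; T \cdot R_{i, I},
\end{equation*}
so $(a_i f)/b_i \leq T$ uniformly in $i$ and $f$. Maximizing over the packing-row index and over $f \in P$ then yields $\rho \leq T = O(T)$, which is the claimed bound.

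I do not expect a real obstacle here: the only point requiring care is parsing (\ref{packing-node-capacity}) correctly to confirm that $u_{i, \tau}$ is the infimum of $R_{i, I}$ over every packing row in which $f^{\Sigma}_{i, \tau}$ participates. Once that is verified, the $O(T)$ width follows immediately from the two facts that each packing row sums at most $T$ inflows and that each individual inflow is bounded by the row's right-hand side through the node-capacity constraint of $P$.
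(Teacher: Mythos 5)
Your argument is correct and matches the paper's proof in approach: both rest on the two observations that (\ref{P-capacity}) together with the definition of $u_{i,t}$ in (\ref{packing-node-capacity}) bounds each individual inflow $f^{\Sigma}_{i,\theta}$ by the right-hand side of every packing row in which it participates, and that each such row sums at most $T$ inflows. (The only slip is the notation ``$R_{i,I}/c_{\text{rt}}^{-1}$'' -- since you defined $R_{i,I}$ as the right-hand side already divided by $c_{\text{rt}}$, the claim should read that $u_{i,\tau}$ is the minimum of $R_{i,I}$ over intervals $I\ni\tau$; this does not affect the argument.)
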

\iffullresults
\begin{proof}
As $u_{i, t}$ is determined by the tightest constraint in which $\littlesum_{(j, i)\in E}f_{ji, t}\equiv f^{\Sigma}_{i, t}$ appears, we have that in every constraint given by (\ref{packing-1}, \ref{packing-2}):

\begin{flalign*}
& f_{i, \theta}^{\Sigma} \leq  \frac{1}{c_{\text{rt}}}(u^b_{i, t}- c_{\text{st}}\littlesum_{\tau=1}^t F_{i, \tau}^k\lambda^k),\hspace{50pt} 1\leq \theta\leq t\\
&f_{i, \theta}^{\Sigma}\leq\frac{1}{c_{\text{rt}}}( u^B_{i, t, s}- c_{\text{st}}\littlesum_{\tau=s}^t F_{i, \tau}^k\lambda^k),\;
2\leq s\leq t, s\leq \theta\leq t
\end{flalign*}
As the sum of $f_{ij, \theta}^{\Sigma}$ over $\theta$ in any constraint from (\ref{packing-1}, \ref{packing-2}) can include at most $T$ terms, it follows that $\rho\leq\frac{T\cdot b_i}{b_i}=T$.
\end{proof}
\fi
\begin{lemma}\label{fractional-packing-max-time}
\textsc{Maximizing-the-Rates} that uses packing algorithm from \emph{\cite{Plotkin1995}} can be implemented in time: 
$\tilde{O}({T^2}\epsilon^{-2}\cdot(nT+MCF(n, m)))$, 
where $\tilde{O}$-notation ignores poly-log terms.
\end{lemma}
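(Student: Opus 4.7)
The plan is to combine three of the preceding results with the generic bound from \cite{Plotkin1995}, treating the proof as a routine substitution. The three inputs are: (i) the width bound $\rho=O(T)$ from Lemma \ref{lemma:width}; (ii) the per-iteration cost bound for \textsc{Improve-Packing}, namely $O(nT^2+T\cdot MCF(n,m))$, from Lemma \ref{lemma:improve-packing-time}; and (iii) the observation of Proposition \ref{prop:P-min-cost-flow} that the linear oracle $\min\{cf : c=y^T A,\, f\in P\}$ decomposes into $T$ independent min-cost flow problems and hence is already folded into (ii). Together with Proposition \ref{prop:Ax-nonnegative}, which certifies the hypothesis $Af\geq 0$ on $P$ that the Plotkin-Shmoys-Tardos framework requires, these are all the problem-specific ingredients needed.

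First, I would invoke the generic guarantee stated just above the lemma: for $\epsilon\in(0,1]$ the $\epsilon$-approximation procedure performs $O\bigl(\epsilon^{-2}\rho\log(p\epsilon^{-1})\bigr)$ calls to \textsc{Improve-Packing}, where $p$ is the number of packing rows in $Af\le b$. In our formulation (\ref{packing-single-battery-not-full})-(\ref{packing-single-battery-full}) we have $p=O(nT^2)$, so $\log(p\epsilon^{-1})$ is absorbed into the $\tilde{O}(\cdot)$. Substituting $\rho=O(T)$ reduces the number of outer iterations to $\tilde{O}(\epsilon^{-2}T)$. Multiplying by the per-iteration cost from Lemma \ref{lemma:improve-packing-time} yields
\[
\tilde{O}\!\bigl(\epsilon^{-2}T\bigr)\cdot O\!\bigl(nT^2+T\cdot MCF(n,m)\bigr)
\;=\;\tilde{O}\!\bigl(\epsilon^{-2}T^2\bigl(nT+MCF(n,m)\bigr)\bigr),
\]
which is the claimed bound.

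The one subtlety I would flag—and which I expect to be the only mildly non-routine point—is that the Plotkin-Shmoys-Tardos procedure is stated for feasibility testing of the form $\min\{\beta:Af\le\beta b,\,f\in P\}$, whereas the quantity we wish to maximize is the scalar $\lambda^k$ that enters the right-hand sides of (\ref{packing-single-battery-not-full})-(\ref{packing-single-battery-full}) through $u_{i,t}^b$ and $u_{i,t,s}^B$. To bridge this, I would fix $\lambda^k$ at the a priori upper bound $\lambda_{\max}^k$ of (\ref{packing-lambda-max}), regard the resulting system as a pure packing instance, and run the $\epsilon$-approximation algorithm to obtain a value $\beta^\star\le 1+\epsilon$ together with a feasible flow; rescaling by $1/(1+\epsilon)$ then certifies a $(1-\epsilon)$-approximate maximizer $\lambda^k$. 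This rescaling adds only constants to the analysis and does not affect the asymptotic bound above.
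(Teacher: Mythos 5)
Your assembly of the ingredients (width $\rho=O(T)$ from Lemma~\ref{lemma:width}, per-iteration cost $O(nT^2 + T\cdot MCF(n,m))$ from Lemma~\ref{lemma:improve-packing-time}, the certification $Af\geq 0$ from Proposition~\ref{prop:Ax-nonnegative}) is exactly what the paper uses, and multiplying the number of iterations $O(\epsilon^{-2}\rho\log(p\epsilon^{-1}))$ by the per-iteration cost gives the stated bound. However, the ``bridge'' you flagged as the only nonroutine point is precisely where you go wrong, and where your argument departs from what the paper actually does.

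The paper does not fix $\lambda^k = \lambda_{\max}^k$ and rescale; it performs a \emph{binary search} for $\lambda^k$ over $[0,\lambda_{\max}^k]$, invoking the packing feasibility test at each probe. The reason your one-shot-and-rescale step does not work is that $\lambda^k$ does not appear only on the right-hand side $b$ of the packing constraints (\ref{packing-single-battery-not-full})--(\ref{packing-single-battery-full}): it also enters the polytope $P$ itself, through the flow-balance equalities (\ref{P-flow-balance}) and the node capacities $u_{i,t}$ of (\ref{P-capacity}), both of which are functions of $\lambda^k$. Flow-balance is an \emph{equality} constraint, so scaling a flow $f$ by $1/(1+\epsilon)$ takes it out of $P(\lambda^k)$; moreover, the dependence of $P(\lambda^k)$ and $b(\lambda^k)$ on $\lambda^k$ goes in opposite directions (supplies grow, right-hand sides shrink) and is not a common scalar scaling, so knowing $\beta^\star = \min\{\beta : Af\le\beta b,\ f\in P(\lambda_{\max}^k)\}$ does not translate into a $(1-\epsilon)$-approximate $\lambda^k$. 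There is also no guarantee that $\beta^\star\le 1+\epsilon$ at $\lambda^k=\lambda_{\max}^k$; if it is larger, the single call yields only a proof of infeasibility and no estimate of the true optimum. The binary search adds a $O(\log(\lambda_{\max}^k/\delta))$ factor, which the paper absorbs into the $\tilde{O}(\cdot)$ --- so your final bound happens to coincide with the correct one, but the route you propose to reach it has a genuine hole.
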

\iffullresults
\begin{proof}
We have from (\ref{packing-lambda-max}) that $\lambda^k\in [0, \lambda_{\max}^k]$, therefore, we can perform a binary search to find the maximum $\lambda^k$ for which both $\min \{y^TAf| f\in P\}$ is feasible and \textsc{Packing} outputs an $\epsilon$-approximate solution. Multiplying the running time of the binary search by the running time of the packing algorithm \cite{Plotkin1995}, the total running time becomes:
\begin{align*}
&O\left(\log\left( \frac{\lambda_{\max}^k}{\delta}\right)\epsilon^{-2}\rho\log(m\epsilon^{-1})\left(nT^2 + T\cdot MCF(n, m)\right) \right)\notag\\
&=
 \tilde{O}\left( \frac{T^2}{\epsilon^2}\cdot \left(nT+MCF(n, m)\right) \right).
\end{align*}
\end{proof}
\fi

\subsubsection{Fixing the Rates}

As \textsc{Maximizing-the-Rates} described in previous subsection outputs an $\epsilon$-approximate solution in each iteration, the objective of the algorithm is not to output a max-min fair solution anymore, but an $\epsilon$-approximation. 
We consider the following notion of approximation, as in \cite{fairness-in-routing}:

\begin{definition}For a problem of lexicographic maximization, say that a feasible solution given as a vector $v$ is an element-wise $\epsilon$-approximate solution, if for vectors $v$ and $v_{\text{OPT}}$ sorted in nondecreasing order $v\geq(1-\epsilon)v_{\text{OPT}}$ component-wise, where $v_{\text{OPT}}$ is an optimal solution to the given lexicographic maximization problem.
\end{definition}

Let $\Delta$ be the smallest real number that can be represented in a computer, and consider the algorithm that implements \textsc{Fixing-the-Rates} as stated below.
 \begin{algorithm}
\caption{\textsc{Fixing-the-Rates}}
\begin{algorithmic}[1]
\State Solve the following linear program:
\State$\textbf{max } \littlesum_{i=1}^n F^k_{i, t}\lambda^k_{i, t}$
\State$\hspace{1em}\textbf{s.t. } \forall i\in \{1, ..., n\}, t \in \{1,..., T\}:$
\State$\hspace{3em}\lambda_{i, t}^{k} \geq \lambda_{i, t}^{k-1} + F_{i, t}^k\cdot \lambda^k $
\State$\hspace{3em}\lambda_{i, t}^{k} \leq \lambda_{i, t}^{k-1} + F_{i, t}^k\cdot \left(\epsilon\lambda_{i, t}^{k-1}+(1+\epsilon)\lambda^k +\Delta \right)$
\State $\hspace{3em}f^{\Sigma}_{i, t}+\lambda_{i, t}^k=\sum_{(i, j)\in E}f_{ij, t}$
\State$\hspace{3em}b_{i, t+1}=\min\Big\{ B,\; b_{i, t}+e_{i, t}
- \Big(c_{\text{rt}}f_{i, t}^{\Sigma} + c_{\text{st}}\lambda^k_{i, t} \Big)\Big\}$
\State$\hspace{3em}b_{i, t}\geq0,\quad \lambda^k_{i, t} \geq 0, \quad f_{ij, t}\geq 0$
\State Let $F^{k+1}_{i, t}=F^k_{i, t}$, $\forall i, t$.
\State If $\lambda^k_{i, t}<(1+\epsilon)(\lambda_{i, t}^{k-1} + F_{i, t}^k\cdot \lambda^k)+\Delta$, set $F^{k+1}_{i, t}=0$.
\end{algorithmic}
\end{algorithm}

Assume that \textsc{Fixing-the-Rates} does not change any of the rates, but only determines what rates should be fixed in the next iteration, i.e., it only makes (global) changes to $F^{k+1}_{i, t}$. Then:
\begin{lemma}\label{lemma:fixing-the-rates-packing}
If the Steps \ref{maximizing} and \ref{fixing} in the \textsc{Water-filling-Framework} are implemented as \textsc{Maximizing-the-Rates} and \textsc{Fixing-the-Rates} from this section, then the solution output by the algorithm is an element-wise $\epsilon$-approximate solution to the lexicographic maximization of $\lambda_{i, t}\in\mathcal{R}$.
\end{lemma}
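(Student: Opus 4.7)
The plan is to proceed by induction on the iteration index $k$ of \textsc{Water-filling-Framework}, showing that after each iteration the vector of rates element-wise $(1-O(\epsilon))$-approximates the corresponding vector produced by an idealized, exact implementation of the same water-filling process (which, by the discussion in Section \ref{section:background}, yields the lexicographically maximum assignment). The argument breaks into three parts that I would address in order.

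First, I would bound the accuracy of \textsc{Maximizing-the-Rates}. Using the $\epsilon$-approximation guarantee of the packing algorithm of \cite{Plotkin1995} and the scaling reduction that converts an $\epsilon$-approximate packing certificate into a $(1-\epsilon)$-multiplicative approximation of the maximum feasible objective, I would show that the value $\lambda^k$ produced at iteration $k$ satisfies $\lambda^k \geq (1-\epsilon)\lambda^k_{*}$, where $\lambda^k_{*}$ is the exact maximum common increment compatible with the rates currently marked active and the fixed rates from previous iterations.

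Second, I would analyze the LP in \textsc{Fixing-the-Rates}. Its role is purely to decide which rates to fix; the upper-bound constraint in line 5 gives every currently active rate a multiplicative/additive slack of $(1+\epsilon)(\lambda_{i,t}^{k-1}+F_{i,t}^{k}\lambda^{k}) + \Delta$ over the lower bound. I would establish two properties of the rule in line 10. (i) \emph{Soundness of fixing}: if the LP cannot raise $\lambda_{i,t}^k$ to this upper bound, then in any feasible continuation of the water-filling process $\lambda_{i,t}$ can only be increased further by decreasing some other active rate $\lambda_{j,\tau}$ that lies below $\lambda_{i,t}$ up to the $\epsilon$ slack, so fixing $\lambda_{i,t}$ is consistent with max-min fairness modulo a $(1-\epsilon)$ factor. (ii) \emph{Progress}: if the LP does reach the upper bound for $(i,t)$, then an analogue of Lemma \ref{sufficiency} (adapted to the fractional setting via the path from $i$ to the sink along which the LP delivers slack) shows that $\lambda_{i,t}$ can be further increased in iteration $k+1$, so the algorithm terminates after at most $O(nT)$ iterations.

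Third, I would combine the per-iteration guarantees into an element-wise approximation bound on the sorted output. Because rates get fixed in the same relative order under exact and approximate execution (up to the $\epsilon$ slack tolerated by the Fixing LP), and because each fixed rate deviates from its exact max-min fair counterpart by at most a factor of $(1-\epsilon)$ by parts (i) and the bound of Part 1, sorting both vectors in nondecreasing order yields $v \geq (1-\epsilon) v_{\mathrm{OPT}}$ componentwise. The main obstacle is controlling the interplay between the two sources of slack — the $(1-\epsilon)$ multiplicative error from the packing procedure in \textsc{Maximizing-the-Rates} and the mixed $\epsilon$-multiplicative / $\Delta$-additive slack in \textsc{Fixing-the-Rates} — so that (a) no rate is prematurely marked fixed in the approximate execution when its exact counterpart is still active, and (b) the error does not compound across the $O(nT)$ iterations but remains bounded by a single $(1-\epsilon)$ factor in the final element-wise comparison. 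The role of $\Delta$ is precisely to absorb representation/round-off effects in this matching of execution orders, while the $(1+\epsilon)$ factor absorbs the error from the packing algorithm.
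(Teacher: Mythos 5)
Your high-level outline (induction on $k$, using the packing $\epsilon$-approximation for \textsc{Maximizing-the-Rates}, then analyzing the fixing LP) is the right scaffolding, but the soundness step — the heart of the proof — is left as a hand-wave and, where it is made concrete, it drifts in the wrong direction. The paper's argument does \emph{not} compare the approximate execution to an "idealized, exact implementation" iteration by iteration; it bounds each fixed rate directly against the lex-max solution. Concretely: the fixing LP maximizes $\littlesum_i F^k_{i,t}\lambda^k_{i,t}$ subject to every active rate lying in $[\littlesum_{l\le k}\lambda^l,\ (1+\epsilon)\littlesum_{l\le k}\lambda^l+\Delta]$. Because the LP value is a maximum, increasing any $\lambda_{i,t}$ that did \emph{not} hit its upper bound forces some other rate down, and for the result to be lexicographically larger only rates that are \emph{strictly greater than} $\lambda_{i,t}$ may be lowered. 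Averaging $\lambda_{i,t}$ with the set $S^k_{i,t}$ of those strictly larger rates (all of which are $\le(1+\epsilon)\littlesum_{l\le k}\lambda^l+\Delta$) shows the lex-max value of $\lambda_{i,t}$ cannot exceed $(1+\epsilon)\littlesum_{l\le k}\lambda^l$, so fixing it at $\littlesum_{l\le k}\lambda^l$ is within the $\epsilon$-range. Your soundness claim instead speaks of "decreasing some other active rate $\lambda_{j,\tau}$ that lies below $\lambda_{i,t}$," which is the opposite direction and would not certify lex-optimality, and you never produce the quantitative averaging bound that makes the argument go through.

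The second genuine gap is in your combining step. You assert that "rates get fixed in the same relative order under exact and approximate execution (up to the $\epsilon$ slack)," and lean on this to get the sorted element-wise bound. That matching claim is unproved and is precisely what the paper's structure is designed to avoid: the paper never matches fixing events against a reference run. It instead argues that every rate, at the moment it is fixed, is already within a $(1+\epsilon)$-factor of the value it gets in the true lex-max solution, with no cross-iteration bookkeeping required. Related to this, you correctly flag "error does not compound across the $O(nT)$ iterations" as an obstacle, but your proposed resolution (let $\Delta$ absorb it) does not address the multiplicative compounding. The compounding is avoided by the \emph{form} of the LP's upper bound: the slack $(1+\epsilon)$ is applied to the accumulated value $\littlesum_{l\le k}\lambda^l$, not to the increment $\lambda^k$ alone, so the per-rate approximation factor stays at $(1+\epsilon)$ regardless of the iteration at which the rate is fixed. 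Without both the averaging argument and this cumulative-slack observation, the proof does not close.
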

\iffullresults
\begin{proof}
The proof is by induction.

\textbf{The base case.} Observe the first iteration of the algorithm. After rate maximization, $\forall i, t: \lambda_{i, t}=\lambda^{1}\geq \dfrac{1}{1+\epsilon}\lambda^{1}_{\text{OPT}}$ and $F^1_{i, t}=1$.

Observe that in the output of the linear program of \textsc{Fixing-the-Rates}, all the rates must belong to the interval $[\lambda^1, (1+\epsilon)\lambda^1+\Delta]$. Choose any $(i, t)$ with $\lambda_{i, t}^1<(1+\epsilon)(\lambda_{i, t}^{k-1} + F_{i, t}^1\cdot \lambda^1)+\Delta=(1+\epsilon)\lambda^1+\Delta$. There must be at least one such rate, otherwise the rate maximization did not return an $\epsilon$-approximate solution. As $\littlesum_{i=1}^n F^1_{i, t}\lambda^1_{i, t}= \littlesum_{i=1}^n \lambda^1_{i, t}$ is maximum, if $\lambda_{i, t}^1$ is increased, then at least one other rate needs to be decreased to maintain feasibility. To get a lexicographically greater solution
$
\lambda_{i, t}^1
$
can only be increased by lowering the rates with the value greater than $\lambda_{i, t}^1$. Denote by $S^1_{i, t}$ the set of all the rates $\lambda^1_{j, \tau}$ such that $\lambda^1_{j, \tau}>\lambda^1_{i, t}$. In the lexicographically maximum solution, the highest value to which $\lambda^1_{i, t}$ can be increased is at most 
$\frac{1}{|S^1_{i, t}|}\left(\lambda^1_{i, t} +\littlesum_{\lambda_{j, \tau}\in S^1_{i, t}}\lambda^1_{j, \tau} \right)<(1+\epsilon)\lambda^1+\Delta,
$ which implies $\lambda_{i, t, \max}\leq (1+\epsilon)\lambda^1$. Therefore, if $\lambda_{i, t}$ is fixed to the value of $\lambda^1$, it is guaranteed to be in the $\epsilon$-range of its optimal value.

Now consider all the $(i, t)$\rq{}s with $\lambda^1_{i, t}=(1+\epsilon)\lambda^1+\Delta$. As all the rates that get fixed are fixed to a value $\lambda_{i, t}=\lambda^1\leq\lambda^1_{i, t}$, it follows that in the next iteration all the rates that did not get fixed can be increased by at least $\epsilon \lambda^1+\Delta$, which \textsc{Fixing-the-Rates} properly determines.

\textbf{The inductive step.} Suppose that up to iteration $k\geq 2$ all the rates that get fixed are in the $\epsilon$-optimal range, and observe the iteration $k$. All the rates that got fixed prior to iteration $k$ satisfy:
\begin{align*}
\lambda_{i, t}^{k} &\geq \lambda_{i, t}^{k-1} + F_{i, t}^k\cdot \lambda^k=\lambda^{k-1}_{i, t} \text{ , and}\\
\lambda_{i, t}^{k} &\leq \lambda_{i, t}^{k-1} + F_{i, t}^k\cdot \left(\epsilon\lambda_{i, t}^{k-1}+(1+\epsilon)\lambda^k +\Delta \right)=\lambda^{k-1}_{i, t}
\end{align*}
and, therefore, they remain fixed for the next iteration, as $\lambda^{k}_{i, t}=\lambda^{k-1}_{i, t}<(1+\epsilon)\lambda^{k-1}_{i, t}$.

 Now consider all the $(i, t)$\rq{}s with $F^k_{i, t}=1$. We have that:
 \begin{align*}
 \lambda^k_{i, t}&\geq \lambda^{k-1}+1\cdot \lambda^k=\littlesum_{l=1}^k \lambda^l\\
  \lambda^k_{i, t}&\leq (1+\epsilon)\left( \lambda^{k-1}+1\cdot \lambda^k \right)+\Delta=(1+\epsilon)\littlesum_{l=1}^k \lambda^l+\Delta
 \end{align*}
 Similarly as in the base case, if $\lambda^k_{i, t}<(1+\epsilon)\littlesum_{l=1}^k \lambda^l+\Delta$, let $S^k_{i, t}=\{\lambda^k_{j, \tau}:\lambda^k_{j, \tau}>\lambda^k_{i, t}\}$. There must be at least one such $(i, t)$, otherwise the rate maximization did not output an $\epsilon$-approximate solution. In any lexicographically greater solution:
 \begin{align*}\lambda^k_{i, t, \max}\leq& \dfrac{1}{|S^k_{i, t}|}\left( \lambda^k_{i, t} +\littlesum_{\lambda^k_{j, \tau}\in S^k_{i, t}}\lambda_{j, \tau} \right)\\
 <&(1+\epsilon)\littlesum_{l=1}^k \lambda^l+\Delta,
 \end{align*}
 which implies $\lambda^k_{i, t, \max}\leq (1+\epsilon)\littlesum_{l=1}^k \lambda^l$. Therefore, if  we fix $\lambda_{i, t}$ to the value $\littlesum_{l=1}^k \lambda^l$, it is guaranteed to be at least as high as $(1-\epsilon)$ times the value it gets in the lexicographically maximum solution.

 Finally, all the $(i, t)$'s with $\lambda^k_{i, t}=(1+\epsilon)\littlesum_{l=1}^k \lambda^l+\Delta$ can simultaneously increase their rates by at least $\epsilon \littlesum_{l=1}^k \lambda^l+\Delta$ in the next iteration, so it should be $F^{k+1}_{i, t}=1$, which agrees with \textsc{Fixing-the-Rates}.
\end{proof}
\fi

\begin{lemma}\label{lemma:packing-total-time}
An FPTAS for \textsc{P-Fractional} can be implemented in time:
\begin{equation*} \tilde{O}(nT(T^2\epsilon^{-2}\cdot (nT+MCF(n, m) + LP(mT, nT))),
\end{equation*}
where $LP(mT, nT)$ denotes the running time of a linear program with $mT$ variables and $nT$ constraints, and $MCF(n, m)$ denotes the running time of a min-cost flow algorithm run on a graph with $n$ nodes and $m$ edges.
\end{lemma}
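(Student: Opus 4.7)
The plan is to multiply the per-iteration cost of \textsc{Water-filling-Framework} by a bound on the total number of outer iterations, and then invoke Lemma~\ref{lemma:fixing-the-rates-packing} to confirm that the combined procedure is an FPTAS. First I would bound the number of iterations by $O(nT)$: by construction, \textsc{Fixing-the-Rates} only flips $F^{k+1}_{i, t}$ from $1$ to $0$, so once a rate is marked fixed it stays fixed, and since there are only $nT$ pairs $(i, t)$, the algorithm terminates after at most $nT$ rounds provided that at least one active rate is fixed per round.

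Second I would argue that progress (one rate fixed per round) is indeed guaranteed. After \textsc{Maximizing-the-Rates} returns an $\epsilon$-approximate common increment $\lambda^k$, at least one active $(i, t)$ must satisfy $\lambda^k_{i, t}<(1+\epsilon)(\lambda^{k-1}_{i, t}+F^k_{i, t}\lambda^k)+\Delta$ in the LP output of \textsc{Fixing-the-Rates}; otherwise every active rate could be simultaneously scaled up by the same $(1+\epsilon)$ factor while preserving feasibility, contradicting the $\epsilon$-approximation guarantee of \textsc{Improve-Packing} on $\lambda^k$. Lemma~\ref{lemma:fixing-the-rates-packing} then certifies that the rate fixed this way lies within a multiplicative $(1-\epsilon)$ of its lexicographically optimal value, so correctness (element-wise $\epsilon$-approximation) is preserved across all $O(nT)$ iterations.

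Third I would assemble the per-iteration cost. By Lemma~\ref{fractional-packing-max-time}, \textsc{Maximizing-the-Rates} runs in time $\tilde{O}(T^2\epsilon^{-2}\cdot(nT+MCF(n, m)))$. \textsc{Fixing-the-Rates} solves one linear program whose variables are the $O(mT)$ edge flows $f_{ij, t}$ together with the $O(nT)$ rate variables $\lambda^k_{i, t}$, and whose constraints are the $O(nT)$ flow-conservation, battery-evolution, and two-sided rate-bound inequalities; this costs $LP(mT, nT)$. Adding these and multiplying by the $O(nT)$ iterations gives the advertised total running time.

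The main obstacle is the progress argument: the packing subroutine only returns $\lambda^k$ up to a multiplicative $(1+\epsilon)$ slack, so one must be careful that the slack $\epsilon\lambda^{k-1}_{i, t}+(1+\epsilon)\lambda^k+\Delta$ in the \textsc{Fixing-the-Rates} LP is exactly wide enough to detect rates that still admit genuine further increase, yet narrow enough to force fixation of at least one rate each round. Once this calibration is verified, the remaining bookkeeping is routine and the bound $\tilde{O}(nT(T^2\epsilon^{-2}(nT+MCF(n, m))+LP(mT, nT)))$ follows immediately.
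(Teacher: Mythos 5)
Your proposal is correct and matches the paper's proof: the paper likewise bounds the number of iterations by $O(nT)$ via the observation (made in the proof of Lemma~\ref{lemma:fixing-the-rates-packing}) that at least one rate is fixed per iteration, and then multiplies this by the per-iteration cost, taking the \textsc{Maximizing-the-Rates} bound from Lemma~\ref{fractional-packing-max-time} and charging $LP(mT, nT)$ for \textsc{Fixing-the-Rates}. The only difference is that you re-derive the ``at least one rate fixed per round'' argument inline, whereas the paper simply cites it from Lemma~\ref{lemma:fixing-the-rates-packing}.
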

\iffullresults
\begin{proof}
It was demonstrated in the proof of Lemma \ref{lemma:fixing-the-rates-packing} that in every iteration at least one rate gets fixed. Therefore, there can be at most $O(nT)$ iterations. From Lemma \ref{fractional-packing-max-time}, \textsc{Maximizing-the-Rates} can be implemented in time $\tilde{O}({T^2}{\epsilon^{-2}}\cdot (nT+MCF(n, m)))$. The time required for running \textsc{Fixing-the-Rates} is $LP(mT, nT)$, where $LP(mT, nT)$ denotes the running time of a linear program with $mT$ variables and $nT$ constraints. 
\end{proof}
\fi
\begin{note}
A linear programming framework as in \cite{Radunovic2007, sensnet--lexicographic, OSU--lexicographic} when applied to \textsc{P-Fractional} would yield a running time equal to $O(n^2T^2\cdot LP(mT, nT))$. As the running time of an iteration in our approach is dominated by $LP(mT, nT)$, the improvement in running time is at least $O(nT)$-fold, at the expense of providing an $\epsilon$-approximation.
\end{note}

\section{Fixed Fractional Routing}\label{section:fixed fractional}
Suppose that we want to solve lexicographic maximization of the rates keeping both the routing and the rates constant over time. 
Observe that, as both the routing and the rates do not change over time, the energy consumption per time slot of each node $i$ is also constant over time and equal to $\Delta b_i = c_{\text{st}} \lambda_i + c_{\text{rt}}\littlesum_{(j, i)\in E}f_{ji}$.
\begin{proposition}\label{prop:determining-delta-bi}
Maximum constant energy consumption $\Delta b_i$ can be determined in time $O( T\log(\frac{b_{i, 1}+e_{i, 1}}{\delta}) )$ for each node $i\in V\backslash\{s\}$, for the total time of $O( nT\log(\frac{b_{i, 1}+e_{i, 1}}{\delta}) )$.
\end{proposition}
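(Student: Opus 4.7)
\begin{proof-sketch}
The plan is to combine a monotonicity argument with binary search. Fix a node $i \in V \setminus \{s\}$, and for a candidate value $x \geq 0$ of the constant per-slot consumption $\Delta b_i = x$, define the recurrence $b_{i,1}^{(x)} = b_{i,1}$ and $b_{i,t+1}^{(x)} = \min\{B, \; b_{i,t}^{(x)} + e_{i,t} - x\}$. Call $x$ \emph{feasible} if $b_{i,t+1}^{(x)} \geq 0$ for every $t \in \{1,\dots,T\}$. Given $x$, simulating the recurrence and checking nonnegativity at each slot takes $O(T)$ time.

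The first step is to establish monotonicity: if $0 \leq x \leq x'$ and $x'$ is feasible, then so is $x$. This follows by a straightforward induction on $t$ showing $b_{i,t}^{(x')} \leq b_{i,t}^{(x)}$: the base case is $b_{i,1}^{(x')} = b_{i,1}^{(x)} = b_{i,1}$, and the inductive step uses the fact that the map $y \mapsto \min\{B, y\}$ is non-decreasing, together with $b_{i,t}^{(x')} + e_{i,t} - x' \leq b_{i,t}^{(x)} + e_{i,t} - x$. Hence feasibility is preserved when $x$ decreases, and we are looking for the maximum feasible $x$.

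Next, bound the search interval. Clearly $\Delta b_i \geq 0$ is feasible. For the upper bound, note that any feasible $x$ must satisfy $b_{i,2}^{(x)} \geq 0$, which forces $x \leq b_{i,1} + e_{i,1}$. Therefore it suffices to binary search on the interval $[0,\, b_{i,1}+e_{i,1}]$ at precision $\delta$, producing $O(\log((b_{i,1}+e_{i,1})/\delta))$ simulations, each costing $O(T)$, for a per-node running time of $O(T\log((b_{i,1}+e_{i,1})/\delta))$. Summing over the $n$ nodes yields the claimed $O(nT\log((b_{i,1}+e_{i,1})/\delta))$ bound.

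The only subtle point is the monotonicity argument in the presence of the $\min\{B,\cdot\}$ cap, since capping at $B$ can erase information about how much was harvested in earlier slots. The inductive step above handles this cleanly because $\min\{B,\cdot\}$ is monotone non-decreasing in its argument; I expect this to be the main (though mild) obstacle and would state it explicitly as a lemma before invoking binary search.
\end{proof-sketch}
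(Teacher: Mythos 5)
Your proposal is correct and follows the same approach as the paper: binary search for the maximum feasible $\Delta b_i$ over $[0,\, b_{i,1}+e_{i,1}]$, with each candidate checked by simulating the battery recurrence in $O(T)$ time. The paper's own proof is a one-liner that omits the monotonicity justification you supply; your explicit induction (exploiting that $y \mapsto \min\{B,y\}$ is non-decreasing) is exactly the detail that makes the binary search valid, so it is a welcome addition rather than a departure.
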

\iffullresults
\begin{proof}
Since the battery evolution can be stated as:
\begin{equation*}
b_{i, t+1}=\min\left\{B,\quad b_{i, t}+e_{i, t}-\Delta b_i \right\},
\end{equation*}
maximum $\Delta b_i$ for which $b_{i, t+1}\geq 0$ $\forall t\in \{1,...,T\}$ can be determined via a binary search from the interval $[0, b_{i, 1}+e_{i, 1}]$, for each node $i$.
\end{proof}
\fi

Similarly as in previous sections, let $F^k_i=0$ if the rate $i$ is fixed at the beginning of iteration $k$, and $F^k_i=1$ if it is not. Initially: $F^1_i=1$, $\forall i$. Rate maximization can then be implemented as follows:
\begin{algorithm}
\caption{\textsc{Maximizing-the-Rates}($G, F^k, b, e, k$)}
\begin{algorithmic}[1]
\State $\lambda_{\max}^k=\frac{1}{c_{\text{st}}}\min_i\{\Delta b_i -c_{\text{st}}\lambda_i^{k-1}: F^k_i = 1\}$
\Repeat{ for $\lambda^k\in [0, \lambda_{\max}^k]$, via binary search}
	\State Set the supply of node $i$ to $d_i=\lambda^{k-1}+F_i^k\lambda^k$, capacity of node $i$ to $u_i=\frac{1}{c_{\text{rt}}}(\Delta b_i-c_{\text{st}}\lambda^k)$, for each $i$
	\State Set the demand of the sink to $\littlesum_i d_i$
	\State Solve feasible flow problem on $G$
\Until{$\lambda^k$ takes maximum value for which the flow problem is feasible on $G$}
\end{algorithmic}
\end{algorithm}

The remaining part of the algorithm is to determine which rates should be fixed at the end of iteration $k$. We note that in each iteration $k$, the maximization of the rates produces a flow $f$ in the graph $G^k$ with the supply rates $\lambda_{i}^k$. Instead of having capacitated nodes, we can modify the input graph by a standard procedure of splitting each node $i$ into two nodes $i'$ and $i''$, and assigning the capacity of $i$ to the edge $(i', i'')$. This allows us to obtain a residual graph $G^{r, k}$ for the given flow.
We claim the following:
\begin{lemma}\label{lemma:fixing-fixed-fractional}
The rate $\lambda_i$ of a node $i\in G$ can be further increased in the iteration $k+1$ if and only if there is a directed path from $i$ to the sink node in $G^{r, k}$.
\end{lemma}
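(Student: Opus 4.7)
My plan is to prove the equivalence by a standard residual-graph / augmenting-path argument, working in the node-split graph $\tilde{G}$ referenced in the paragraph preceding the lemma. In $\tilde G$ each node $i$ is replaced by $i',i''$ joined by the capacitated edge $(i',i'')$ of capacity $u_i=(\Delta b_i - c_{\text{st}}\lambda_i^k)/c_{\text{rt}}$; original edges $(j,i)$ become $(j'',i')$; and the flow produced by \textsc{Maximizing-the-Rates} in iteration $k$, together with the supplies $d_i=\lambda_i^k$ at the $i''$-nodes, constitutes a feasible flow on $\tilde G$ routed to the sink. The residual graph $G^{r,k}$ is then the usual one: forward edges with slack $\text{cap} - f$ and backward edges with value $f$. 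Throughout I will identify ``a path from $i$'' with a path starting at $i''$.

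For the $(\Leftarrow)$ direction, suppose $P$ is a directed path from $i$ to the sink in $G^{r,k}$. Every edge of $P$ has strictly positive residual capacity, so I can choose $\delta>0$ not exceeding the minimum residual capacity along $P$. Pushing $\delta$ units of flow along $P$ (augmenting on forward edges, cancelling on backward edges) preserves flow conservation at every intermediate node and delivers $\delta$ additional units to the sink. The only change at $i$ itself is that $d_i$ increases by $\delta$, which means $\lambda_i$ has been increased by $\delta$ while all other rates remain at $\lambda_j^k$. Feasibility is preserved by construction.

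For the $(\Rightarrow)$ direction, suppose $\lambda_i$ can be increased in iteration $k+1$. Then there exists a feasible solution with $\lambda_i^\star=\lambda_i^k+\Delta\lambda$ for some $\Delta\lambda>0$ and $\lambda_j^\star\ge\lambda_j^k$ for all $j\ne i$; without loss of generality we may take $\lambda_j^\star=\lambda_j^k$ for $j\ne i$ since each rate can be lowered freely while keeping others feasible. Let $f^\star$ be its corresponding flow in $\tilde G$. The difference $f^\star-f^k$, where positive, uses only forward-residual capacity, and where negative, uses backward-residual capacity, so it is supported on edges of $G^{r,k}$. By flow conservation the net excess at every $j''\ne i''$ and every $j'$ is zero, the net excess at $i''$ is exactly $-\Delta\lambda$, and the net excess at the sink is $+\Delta\lambda$. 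Applying the standard flow-decomposition theorem to this $i$-to-sink flow of value $\Delta\lambda$ in $G^{r,k}$ yields at least one simple directed path from $i$ to the sink.

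The main subtlety, and the reason the argument is not purely syntactic, is the coupling between $\lambda_i$ and the capacity $u_i$: raising $\lambda_i$ by $\delta$ shrinks $u_i$ by $(c_{\text{st}}/c_{\text{rt}})\delta$, because sensing and forwarding draw from the same energy budget with different per-unit costs. In the $(\Leftarrow)$ direction this forces me to shrink the chosen $\delta$ further so that the flow currently crossing $(i',i'')$ still fits inside the tightened capacity. This is always possible when $u_i$ has strictly positive residual, i.e.\ when $i$'s own energy budget is not already the binding constraint; if it is binding, then $\lambda_i$ is at $\lambda_{\max}^k$, no residual out-arc from $i'$ remains, and we are in the degenerate case already handled by the fixing step. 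This bookkeeping is the only real obstacle; the rest of the proof is the classical augmenting-path calculus.
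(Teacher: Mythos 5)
The approach is essentially the paper's: work in the node-split graph, argue necessity via an augmenting-path/decomposition argument and sufficiency by pushing flow along a residual path. Your $(\Rightarrow)$ direction using $f^\star - f^k$ plus flow decomposition is a clean variant of the paper's appeal to the flow-augmentation theorem, and you are right that WLOG one can set $\lambda_j^\star = \lambda_j^k$ for $j\ne i$ (lowering a rate only relaxes the corresponding capacity). You also correctly identify the one genuinely nontrivial point, which the paper's proof passes over without comment: the capacity of $(i',i'')$ is $u_i = (\Delta b_i - c_{\text{st}}\lambda_i)/c_{\text{rt}}$, so it \emph{shrinks} as $\lambda_i$ grows.

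Your handling of that coupling is where the argument breaks. You claim that if $(i',i'')$ is saturated then ``$\lambda_i$ is at $\lambda_{\max}^k$, no residual out-arc from $i'$ remains, and we are in the degenerate case already handled by the fixing step.'' Neither clause is correct. Saturation of $(i',i'')$ means $c_{\text{st}}\lambda_i + c_{\text{rt}}f_i^\Sigma = \Delta b_i$; since $f_i^\Sigma$ can in general be positive, this is strictly below the single-node bound $\lambda_{\max,i}$. More importantly, the supply $d_i$ is injected at $i''$, so the residual path in question starts at $i''$, and the residual out-arcs of $i'$ are irrelevant to whether such a path exists. Concretely, take the chain $a \to b \to s$ with $c_{\text{st}}=c_{\text{rt}}=1$, $\Delta b_a=\Delta b_b=2$, and the water-filling iterate $\lambda_a=\lambda_b=1$ with $a$'s unit of flow routed through $b$: here $(b',b'')$ is saturated ($u_b=1=f_{b'b''}$), yet the residual graph still contains the forward arc $(b'',s)$, so a ``path from $b''$'' exists. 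Nevertheless $\lambda_b$ cannot be increased, because $a$'s flow has no alternate route and $b$'s budget is exhausted. This shows that reading ``a directed path from $i$'' as ``from $i''$'' and then pushing along it, while shrinking $\delta$, is not enough; the residual path must also account for the residual on $(i',i'')$, i.e. the correct start vertex is $i'$. Your proof as written does not establish that, and dismissing the saturated case as a degeneracy absorbed by the fixing step is circular, since this lemma \emph{is} the fixing step.

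A correct treatment of the $(\Leftarrow)$ direction has to make the energy bookkeeping explicit: a positive residual on an intermediate split edge $(j',j'')$ is exactly a positive energy slack $c_{\text{rt}}u^r_j$ at node $j$, and the chosen $\delta$ must be small relative to \emph{all} of $u^r_j$ on the path, including $u^r_i$ on $(i',i'')$ itself (which absorbs the extra $c_{\text{st}}\delta$ sensing cost). Starting the path at $i'$ rather than $i''$ is what forces $(i',i'')$ into the minimization and repairs the argument.
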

\iffullresults
\begin{proof}
First, observe that the only capacitated edges in $G^k$ are those corresponding to the nodes that were split. The capacity of an edge $(i', i'')$ corresponds to the maximum per-slot energy the node $i$ can spend without violating the battery non-negativity constraint. If an edge $(i', i'')$ has residual capacity of $u_{(i', i'')}^r>0$, then the node $i$ can spend additional $c_{\text{rt}}u_{(i', i'')}^r$ amount of energy keeping the battery level non-negative in all the time slots. If $(i', i'')$ has no residual capacity ($u_{(i', i'')}^r=0$), then the battery level of node $i$ reaches zero in at least one time slot, and increasing the energy consumption per time slot leads to $b_{i, t}<0$ for some $t$, which is infeasible.

$(\Leftarrow)$ Suppose that the residual graph contains no directed path from the node $i$ to the sink. By the flow augmentation theorem \cite{network_flows}, the flow from the node $i$ cannot be increased even when the flows from all the remaining nodes are kept constant. As the capacities correspond to the battery levels at the nodes, sending more flow from $i$ causes at least one node's battery level to become negative.

 $(\Rightarrow)$ Suppose that there is a directed path from $i$ to the sink, and let $u_i^r>0$ denote the minimum residual capacity of the edges (split nodes) on that path. Then each node on the path can spend at least $c_{\text{rt}}u_{i}^r$ amount of energy maintaining feasibility. Let $U$ denote the set of all the nodes that have a directed path to the sink in $G^{r, k}$. Then increasing the rate of each node $i\in U$ by $\Delta \lambda = \dfrac{\min_i u_i^rc_{\text{rt}}}{c_{\text{st}}+nc_{\text{rt}}}>0$ and augmenting the flows of $i\in U$ over their augmenting paths in $G^{r, k}$ each node on any augmenting path spends at most $\min_i u_i^rc_{\text{rt}}$ amount of energy, which is at most equal to the energy the node is allowed to spend maintaining feasibility.
\end{proof}
\fi
\begin{lemma}\label{lemma:fixed-fractional-time}
\textsc{Water-filling-Framework} for \textsc{P-Fixed-Fractional} can be implemented in time
\begin{equation*}O( n\log(\frac{b_{i, 1}+e_{i, 1}}{\delta}) (T + MF(n, m))),
\end{equation*}
where $MF(n, m)$ denotes the running time of a max-flow algorithm for a graph with $n$ nodes and $m$ edges.
\end{lemma}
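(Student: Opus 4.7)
The plan is to bound the running time by (i) the one-time preprocessing to determine the per-slot battery drops $\Delta b_i$, and then (ii) the cost of one iteration of the outer loop, multiplied by an upper bound on the number of iterations.

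First, invoke Proposition~\ref{prop:determining-delta-bi} to compute every $\Delta b_i$ in total time $O(nT\log((b_{i,1}+e_{i,1})/\delta))$. This is pure preprocessing, done before any iteration of \textsc{Water-filling-Framework}, and it reduces the per-iteration feasibility test to a flow problem on $G$ with node capacities $u_i=\frac{1}{c_{\text{rt}}}(\Delta b_i - c_{\text{st}}\lambda^k)$ and node supplies $d_i$.

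Second, I would bound the number of outer iterations by $n$. The key observation is that in the \textsc{P-Fixed-Fractional} setting there are only $n$ rates (one per node), so it suffices to argue that at least one previously active rate becomes fixed in every iteration. This follows from Lemma~\ref{lemma:fixing-fixed-fractional}: after \textsc{Maximizing-the-Rates} returns the largest $\lambda^k$ for which the flow problem is feasible on $G$, there must exist at least one node whose edge on the node-split residual graph $G^{r,k}$ has zero residual capacity on every path to the sink (otherwise $\lambda^k$ could be increased further, contradicting maximality). By Lemma~\ref{lemma:fixing-fixed-fractional} all such nodes get fixed, and at least one exists, so the iteration count is $O(n)$.

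Third, I would bound the cost of one iteration. The \textsc{Maximizing-the-Rates} step performs a binary search for $\lambda^k$ over the interval $[0,\lambda_{\max}^k]\subseteq[0,(b_{i,1}+e_{i,1})/(c_{\text{st}}\delta)]$, each step of which solves one feasible-flow problem and thus costs $O(MF(n,m))$; the search therefore costs $O(\log((b_{i,1}+e_{i,1})/\delta)\cdot MF(n,m))$. For \textsc{Fixing-the-Rates}, I only need to construct the node-split residual graph of the flow already computed and run a single reverse reachability search from the sink, marking as fixed exactly the nodes from which no directed path to the sink remains (Lemma~\ref{lemma:fixing-fixed-fractional}); this is $O(m)$ and is absorbed by $MF(n,m)$. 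Multiplying by $O(n)$ outer iterations and adding the preprocessing yields
\begin{equation*}
O\!\left(nT\log\!\tfrac{b_{i,1}+e_{i,1}}{\delta}\right) + O\!\left(n\log\!\tfrac{b_{i,1}+e_{i,1}}{\delta}\cdot MF(n,m)\right) = O\!\left(n\log\!\tfrac{b_{i,1}+e_{i,1}}{\delta}(T + MF(n,m))\right),
\end{equation*}
as claimed.

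The main obstacle is the iteration-count argument: I must verify that \textsc{Fixing-the-Rates} as implemented via the residual-graph reachability test really does fix at least one active rate per iteration whenever not all rates are fixed. This reduces to a max-flow optimality statement (no augmenting path of positive residual capacity to the sink from \emph{every} still-active node, for the $\lambda^k$ chosen to be maximal), which is exactly what Lemma~\ref{lemma:fixing-fixed-fractional} provides in the $(\Leftarrow)$ direction; the remaining care is just to note maximality of $\lambda^k$ forces the existence of at least one such saturated node.
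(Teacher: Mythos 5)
Your proposal is correct and follows essentially the same decomposition as the paper's proof: precompute $\Delta b_i$ via Proposition~\ref{prop:determining-delta-bi}, bound the per-iteration cost by the binary search over $\lambda^k$ times one feasible-flow solve plus an $O(m)$ residual-graph reachability check, and multiply by $O(n)$ iterations. The only addition is your explicit justification that maximality of $\lambda^k$ together with Lemma~\ref{lemma:fixing-fixed-fractional} forces at least one rate to be fixed per iteration, a step the paper asserts without elaboration; note also that in your third step the interval should be written $[0,\lambda_{\max}^k]\subseteq[0,(b_{i,1}+e_{i,1})/c_{\text{st}}]$ with $\delta$ entering only as the search precision, though the resulting logarithmic factor is as you state.
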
\iffullresults
\begin{proof}
From Proposition \ref{prop:determining-delta-bi}, determining the values of $\Delta b_i$ for $i\in V\backslash \{s\}$ can be implemented in time $O(nT\log(\frac{b_{i, 1}+e_{i, 1}}{\delta}))$.

Running time of an iteration of \textsc{Water-filling-Frame\-work} is determined by the running times of \textsc{Maximizing-the-Rates} and \textsc{Fixing-the-Rates}. Each iteration of the binary search in \textsc{Maximizing-the-Rates} constructs and solves a feasible flow problem, which is dominated by the time required for running a max-flow algorithm that solves feasible flow problem on the graph $G$. Therefore, \textsc{Maximizing-the-Rates} can be implemented in time $O(\log(\frac{b_{i, 1}+e_{i, 1}}{\delta}) MF(n, m))$, where $MF(n, m)$ denotes the running time of a max-flow algorithm.

\textsc{Fixing-the-Rates} constructs a residual graph $G^{r, k}$ and runs a breadth-first search on this graph, which can be implemented in time $O(n+m)$ ($=O(MF(n, m))$ for all the existing max-flow algorithms).

Every iteration of \textsc{Water-filling-Framework} fixes at least one of the rates $\lambda_i$, $i\in V\backslash\{s\}$, which implies that there can be at most $n$ iterations.

Therefore, the total running time is
\begin{equation*}O( n\log(\frac{b_{i, 1}+e_{i, 1}}{\delta}) (T + MF(n, m))).
\end{equation*}
\end{proof}\fi

\section{Determining a Routing}\label{section:hardness}
In this section we demonstrate that solving \textsc{P-Unsplit\-table-Find} and \textsc{P-Tree-Find} is NP-hard for both problems. Moreover, we show that it is NP-hard to obtain an approximation ratio better than $\Omega(\log n)$ for \textsc{P-Tree-Find}. For \textsc{P-Unsplittable-Find}, we design an efficient combinatorial algorithm for a relaxed version of this problem--it determines a time-invariable unsplittable routing that maximizes the minimum rate.

\subsection{Unsplittable Routing}
\begin{lemma}\label{lemma:unsplittable--hardness}
\textsc{P-Unsplittable-Find} is NP-hard.
\end{lemma}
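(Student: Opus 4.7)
The plan is to establish NP-hardness by a direct reduction from the max-min fair unsplittable routing problem of \cite{fairness-in-routing}, which was already stated in Section \ref{section:related work} to be a special case. First, I restrict \textsc{P-Unsplittable-Find} to single-slot instances with $T=1$ and $c_{\text{st}}=c_{\text{rt}}=c$. Under this restriction, the battery-nonnegativity constraint (\ref{battery_evolution}) at every node $i$ collapses to $c\cdot(\lambda_{i,1}+f_{i,1}^{\Sigma})\leq b_{i,1}+e_{i,1}$, which is exactly a node capacity $u_i=(b_{i,1}+e_{i,1})/c$ on the total flow passing through $i$. The battery-upper-bound clause in (\ref{battery_evolution}) is vacuous because there is only one slot.

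Next, I would argue that on the restricted instances, \textsc{P-Unsplittable-Find} coincides with the problem studied in \cite{fairness-in-routing}: we must pick one $s$-path for each source and produce a lexicographically maximum rate vector subject to node capacities. Since \cite{fairness-in-routing} formulates the problem with edge capacities rather than node capacities, I would invoke the textbook node-splitting transformation (cf.\ \cite{network_flows}) to move between the two models in polynomial time, noting that this transformation preserves unsplittable routings and the induced rate vectors, hence also the lexicographic order.

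With the restricted special case identified as the problem of \cite{fairness-in-routing}, NP-hardness of the special case is inherited, and therefore NP-hardness of the general \textsc{P-Unsplittable-Find} follows, since a polynomial-time algorithm for the latter would solve the former verbatim (with an additional easy encoding of $b_{i,1}$, $e_{i,1}$, $c$ from the given node capacities $u_i$).

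The main (minor) obstacle will be making the correspondence between rate vectors tight: I need to verify that the lexicographic maximization of the $\lambda_i$'s in our setup is exactly the fairness objective in \cite{fairness-in-routing}, and that the node-splitting reduction respects both the unsplittable structure (each source still uses one path) and the equivalence of capacities after scaling by $1/c$. Both are routine once the node/edge capacity dictionary is stated explicitly, so no novel construction is needed beyond pointing to \cite{fairness-in-routing} and \cite{network_flows}.
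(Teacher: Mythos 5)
Your high-level plan---use the known NP-hardness of the max-min fair unsplittable routing problem of \cite{fairness-in-routing} via the $T=1$, $c_{\text{st}}=c_{\text{rt}}$ restriction and the standard node-splitting dictionary---is the same observation the paper already makes in its related-work section. However, the paper's actual proof deliberately does \emph{not} treat \cite{fairness-in-routing} as a black box: it re-runs the reduction from \textsc{P-Non-uniform-Load-Balancing} directly inside the energy-harvesting model, building an explicit instance (jobs $J_i$ with $b_{J_i}=r_i$, machines $M_j$ with $b_{M_j}=2$, $T=1$, $c_{\text{st}}=c_{\text{rt}}=1$) and arguing about its lex-max rate vector. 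The two routes are morally equivalent, but the paper's choice is not gratuitous: the explicit construction sidesteps a subtlety your ``routine'' step glosses over.

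The potential gap is the source set. In \textsc{P-Unsplittable-Find} \emph{every} node $i$ carries a sensing rate $\lambda_i$ that enters the lexicographic objective; in the edge-capacitated formulation of \cite{fairness-in-routing} one usually has a designated set of source commodities, and intermediate nodes are relay-only. When you try to embed a \cite{fairness-in-routing} instance into ours, you must assign an energy budget to every non-source node, and then that node's own $\lambda_i$ shows up in the sorted vector and can steer the choice of unsplittable routing. Declaring ``the two problems coincide'' therefore needs an argument that either (a) the NP-hard instances of \cite{fairness-in-routing} already have all nodes as sources (which happens to be the case for the load-balancing reduction they use, and the paper exploits exactly that by constructing such an instance directly), or (b) large capacities on non-source nodes do not perturb the lex-max outcome --- a claim you would have to prove, not merely assert. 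Your proposal names this as a ``minor obstacle'' but does not resolve it; the paper's self-contained reduction is precisely what resolves it. So the idea is right and matches the paper's intent, but to be a complete proof you would either have to reproduce the explicit \textsc{P-Non-uniform-Load-Balancing} reduction (as the paper does) or prove the invariance of the lex-max under the all-nodes-are-sources lifting.
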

\begin{proof}
The proof of NP-hardness for \textsc{P-Unsplittable-Find} is a simple extension of the proof of NP-hardness for max-min fair unsplittable routing provided in\cite{fairness-in-routing}. We use the same reduction as in \cite{fairness-in-routing}, derived from the non-uniform load balancing problem \cite{lenstra1990approximation}. From \cite{lenstra1990approximation, fairness-in-routing}, the following problem is NP-hard:\\
 \textsc{P-Non-uniform-Load-Balancing}: Let $J=\{J_1,...,J_k\}$ be a set of jobs, and $M=\{M_1,...,M_n\}$ be a set of machines. Each job $J_i$ has a time requirement $r_i\in\{1/2, 1\}$, and the sum of all the job requirements is equal to $n$: $\sum_{i=1}^k r_i=n$. Each job $J_i\in J$ can be run only on a subset of the machines $S_i\subset M$. Is there an assignment of jobs to machines, such that the sum requirement of jobs assigned to each machine $M_j$ equals 1?
 
  For a given instance of \textsc{P-Non-uniform-Load-Balancing} we construct an instance of \textsc{P-Unsplittable-Find} as follows (Fig.~\ref{fig:unsplittable--hardness}). Let $T=1$, and $c_{\text{st}}=c_{\text{rt}}=1$. Create a node $J_i$ for each job $J_i\in J$, a node $M_j$ for each machine $M_j\in M$, and add an edge $(J_i, M_j)$ if $M_j\in S_i$. Connect all the nodes $M_j\in M$ to the sink. Let available energies at the nodes be $b_{J_i}=r_i$, $b_{M_j}=2$. 
   \begin{figure}[h]
\centering
\includegraphics[width=0.6\linewidth]{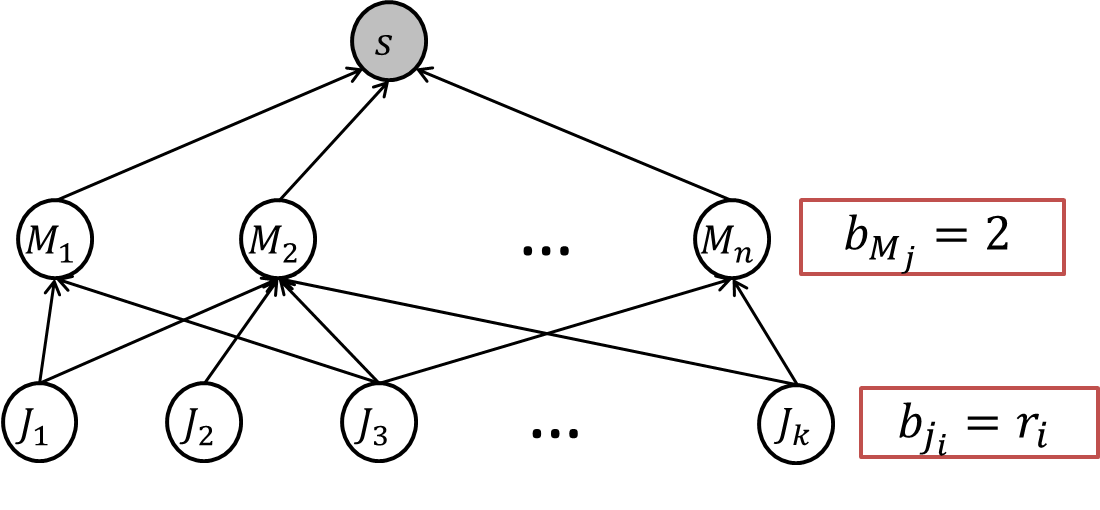}\vspace{-10pt}
\caption{A reduction from \textsc{P-Non-uniform-Load-Balancing} for proving NP-hardness of \textsc{P-Unsplittable-Find}. Jobs are represented by nodes $J_i$, machines by nodes $M_j$, and there is an edge from $J_i$ to $M_j$ if job $J_i$ can be executed on machine $M_j$. Each job $J_i$ has time requirement $r_i\in\{1/2, 1\}$, and $\sum_{i=1}^k J_i=n$. Available energies at the nodes are shown in the boxes next to the nodes. If at the optimum of \textsc{P-Unsplittable-Find} $\lambda_{J_i}=r_i$ and $\lambda_{M_j}=1$, then there is an assignment of jobs to the machines such that the sum requirement of jobs assigned to each machine equals 1.}
\label{fig:unsplittable--hardness}
\end{figure}

Suppose that the instance of \textsc{P-Non-uniform-Load-Ba\-lancing} is a "yes" instance, i.e., there is an assignment of jobs to machines such that the sum requirement of jobs assigned to each machine equals 1. Observe the following rate assignment: $\lambda^*=\{\lambda_{J_i}=r_i, \lambda_{M_j}=1\}$. This rate assignment is feasible only for the unsplittable routing in which $M_j$'s descendants are the jobs assigned to $M_j$ in the solution for \textsc{P-Non-uniform-Load-Balancing}. Moreover, as in this rate assignment all the nodes spend all their available energies and since $\sum_{i=1}^k b_{J_i}=\sum_{i=1}^k{r_i}=n$, it is not hard to see that this is the lexicographically maximum rate assignment that can be achieved for any instance of \textsc{P-Non-uniform-Load-Balancing}. If the instance of \textsc{P-Non-uniform-Load-Balancing} is a "no" instance, then \textsc{P-Unsplittable-Find} at the optimum necessarily produces a rate assignment that is lexicographically smaller than $\lambda^*$.

Therefore, if \textsc{P-Unsplittable-Find} can be solved in polynomial time, then \textsc{P-Non-uniform-Load-Balancing} can also be solved in polynomial time.
  \end{proof}
  As the proof of Lemma \ref{lemma:unsplittable--hardness} is constructed for $T=1$, it follows that \textsc{P-Unsplittable-Find} is NP-hard for general $T$, in either time-variable or time-invariable setting.

On the other hand, determining a time-invariable unsplittable routing that guarantees the maximum value of the minimum sensing rate over all time-invariable unsplittable routings is solvable in polynomial time, and we provide a combinatorial algorithm that solves it below. 

We first observe that in any time-invariable unsplittable routing, if all the nodes are assigned the same sensing rate $\lambda$, then every node $i$ spends a fixed amount of energy $\Delta b_i$ per time slot equal to the energy spent for sensing and sending own flow and for forwarding the flow coming from the descendant nodes: 
$\Delta b_i = \lambda\left( c_{\text{st}} + c_{\text{rt}}D_{i, t}  \right)$.

The next property we use follows from the integrality of the max flow problem with integral capacities (see, e.g., \cite{network_flows}). This property was stated as a theorem in \cite{Kleinberg96} for single-source unsplittable flows, and we repeat it here for the equivalent single-sink unsplittable flow problem:
\begin{theorem}\label{th-integrality-of-flow}
\emph{\cite{Kleinberg96}} Let $G=(N, E)$ be a given graph with the predetermined sink node $s$. If the supplies of all the nodes in the network are from the set $\{0, \lambda\}$, $\lambda>0$, and the capacities of all the edges/nodes are integral multiples of $\lambda$, then: if there is a fractional flow of value $f$, there is an unsplittable flow of value at least $f$. Moreover, this unsplittable flow can be found in polynomial time.
\end{theorem}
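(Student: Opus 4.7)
The plan is to reduce the statement to the classical integrality of maximum flow in networks with integer capacities. First I would rescale the instance by dividing all supplies and all edge/node capacities by $\lambda$; after the rescaling every supply lies in $\{0,1\}$ and every capacity is a nonnegative integer, and a fractional sink flow of value $f$ in the original network becomes a fractional flow of value $f/\lambda$ in the scaled one.

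Next, to cast the problem as a standard single-source/single-sink max flow, I would introduce a super-source $s^{*}$ together with an edge $(s^{*},i)$ of capacity $1$ for each node $i$ whose scaled supply is $1$, and handle node capacities by the usual node-splitting reduction (replace each capacitated node by an in-copy and out-copy joined by an edge of that capacity). Under this transformation, the existence of a fractional sink flow of value $f/\lambda$ in the scaled network is equivalent to the existence of a fractional $s^{*}$-to-$s$ flow of the same value. Since all capacities are now integers, the integrality theorem for maximum flow guarantees an integer $s^{*}$-to-$s$ flow of value at least $f/\lambda$, and any polynomial integral max-flow algorithm (for example Dinic's or Orlin's strongly polynomial method) produces such a flow in polynomial time.

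Finally, I would apply flow decomposition to this integer flow, writing it as a disjoint collection of $s^{*}$-to-$s$ paths together with possibly some cycles; cycles carry no net supply and can be stripped off edge-by-edge without decreasing the value at $s$ or violating non-negativity. Because every edge leaving $s^{*}$ has unit capacity, each supply node appears as the second vertex of at most one path in the decomposition, so its entire unit of supply is routed along a single path to $s$. Rescaling by $\lambda$ and deleting $s^{*}$ converts these paths into an unsplittable routing in the original graph with total sink value at least $f$. The only subtlety is the bookkeeping for node capacities, which disappears under the standard split-node reduction, so no ingredient beyond classical max-flow integrality is needed.
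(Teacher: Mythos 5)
Your proof is correct and follows the same argument the paper sketches in the Note accompanying the theorem: after rescaling by $\lambda$, introducing a unit-capacity super-source edge for each supply node, and splitting capacitated nodes, the result is a direct consequence of max-flow integrality. The flow-decomposition step you spell out makes explicit why the resulting integral flow routes each supply node's unit along a single path, a point the paper (and its citation to Kleinberg) leaves implicit.
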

\begin{note} For the setting of Theorem \ref{th-integrality-of-flow}, any augmenting-path or push-relabel based max flow algorithm produces a flow that is unsplittable, as a consequence of the integrality of the solution produced by these algorithms. We will assume that the used max-flow algorithm has this property.
\end{note}

The last property we need is that our problem can be formulated in the setting of Theorem \ref{th-integrality-of-flow}. We observe that for a given sensing rate $\lambda$, each node spends $c_{\text{st}}\lambda$ units of energy for sensing, whereas the remaining energy can be used for routing the flow originating at other nodes. Therefore, for a given $\lambda$, we can set the supply of each node $i$ to $\lambda$, set its capacity to $u_i=({\Delta b_i-c_{\text{st}}\lambda})/{c_{\text{rt}}}$ (making sure that $\Delta b_i-c_{\text{st}}\lambda\geq0$), and observe the problem as the feasible flow problem. For any feasible \emph{unsplittable} flow solution with all the supplies equal to $\lambda$, we have that flow through every edge/node equals the sum flow of all the routing paths that contain that edge/node. As every path carries a flow of value $\lambda$, the flow through every edge/node is an integral multiple of $\lambda$. Therefore, to verify whether it is feasible to have a sensing rate of $\lambda$ at each node, it is enough to down-round all the nodes\rq{} capacities to the nearest multiple of $\lambda$: $u_i=\lambda\cdot\left\lfloor({\Delta b_i-c_{\text{st}}\lambda})/({c_{\text{rt}}\lambda})\right\rfloor$, and apply the Theorem \ref{th-integrality-of-flow}.

An easy upper bound for $\lambda$ is 
$\lambda_{\max}=\min_i{\Delta b_i}/{c_{\text{st}}}$, 
which follows from the battery nonnegativity constraint. The algorithm becomes clear now: 
 \begin{algorithm}
\caption{\textsc{Maxmin-Unsplittable-Routing}($G, b, e$)}
\begin{algorithmic}[1]
\State Perform a binary search for $\lambda\in[0, \lambda_{\max}]$.
\State For each $\lambda$ chosen by the binary search set node supplies to $\lambda$ and node capacities to $u_i=\lambda\cdot\left\lfloor{(\Delta b_i-c_{\text{st}}\lambda})/({c_{\text{rt}}\lambda})\right\rfloor$. Solve feasible flow problem.
\State Return the maximum feasible $\lambda$.
\end{algorithmic}
\end{algorithm}

\begin{lemma}\label{lemma:unsplittable-time}
The running time of \textsc{Maxmin-Unsplittable-Routing} is 
$O(\log(\min_i({b_{i, 1}+e_{i, 1}})/({c_{\text{st}}\delta}))(MF(n+1, m)))$,
where $MF(n, m)$ is the running time of a max-flow algorithm on an input graph with $n$ nodes and $m$ edges.
\end{lemma}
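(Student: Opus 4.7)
The plan is to bound separately the number of iterations of the binary search and the cost of one iteration, then multiply.

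First I would bound the search range. By definition $\lambda_{\max}=\min_i \Delta b_i/c_{\text{st}}$, and Proposition~\ref{prop:determining-delta-bi} (together with the battery evolution equation written as $b_{i,2}=\min\{B,\,b_{i,1}+e_{i,1}-\Delta b_i\}\ge 0$) implies $\Delta b_i\le b_{i,1}+e_{i,1}$. Hence $\lambda_{\max}\le \min_i(b_{i,1}+e_{i,1})/c_{\text{st}}$. Since the precision of the input is $\delta$, a standard binary search on the interval $[0,\lambda_{\max}]$ terminates after $O(\log(\lambda_{\max}/\delta))=O\!\left(\log\left(\min_i(b_{i,1}+e_{i,1})/(c_{\text{st}}\delta)\right)\right)$ iterations.

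Next I would bound the cost of a single binary-search iteration. For a fixed candidate $\lambda$, computing the supply value $\lambda$ at every node and the rounded capacity $u_i=\lambda\cdot\lfloor(\Delta b_i-c_{\text{st}}\lambda)/(c_{\text{rt}}\lambda)\rfloor$ takes $O(n)$ time, and the standard node-splitting transformation that turns node capacities into edge capacities produces a graph with $2(n+1)$ nodes and $m+n+1$ edges, which is $O(n+m)$ and subsumed by any max-flow running time. By Theorem~\ref{th-integrality-of-flow}, feasibility of the unsplittable flow with per-node supply $\lambda$ is equivalent to feasibility of the fractional flow on this transformed graph, so it can be decided by running a single max-flow computation (checking whether the max flow saturates all supplies). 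This dominates the per-iteration cost at $MF(n+1,m)$.

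Multiplying the two bounds gives the claimed running time
\[
O\!\left(\log\!\left(\min_i\frac{b_{i,1}+e_{i,1}}{c_{\text{st}}\delta}\right)\cdot MF(n+1,m)\right).
\]
There is no real obstacle here: correctness of the search is already guaranteed by Theorem~\ref{th-integrality-of-flow}, and the only thing to verify carefully is that the range of the binary search admits the stated logarithmic bound, which follows from $\Delta b_i\le b_{i,1}+e_{i,1}$ as noted above.
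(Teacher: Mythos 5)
Your proof is correct and follows the natural argument that the paper's algorithm description implies: bound the number of binary-search iterations by $O\bigl(\log(\lambda_{\max}/\delta)\bigr)$ using $\Delta b_i \le b_{i,1}+e_{i,1}$ (which follows from $b_{i,2}=\min\{B,\,b_{i,1}+e_{i,1}-\Delta b_i\}\ge 0$), and observe that each iteration reduces to one feasible-flow / max-flow computation on the node-split graph, whose size is within a constant factor of $(n+1,m)$. Two small points worth keeping in mind: the precomputation of $\Delta b_i$ via Proposition~\ref{prop:determining-delta-bi} costs an additional $O(nT\log(\cdot))$ that the lemma (and your proof) treat as outside the per-iteration loop, and one should note that feasibility is monotone in $\lambda$ (a routing feasible at rate $\lambda$ remains feasible at any $\lambda'<\lambda$), which is what licenses the binary search; both are minor and do not affect the conclusion.
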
 

\subsection{Routing Tree}

If it was possible to find the (either time variable or time-invariable) max-min fair routing tree in polynomial time for any time horizon $T$, then the same result would follow for $T=1$. It follows that if \textsc{P-Tree-Find} NP-hard for $T=1$, it is also NP-hard for any $T>1$. Therefore, we restrict our attention to $T=1$. 
\iffullresults

Assume w.l.o.g. $e_{i, 1}=0$ $\forall i\in V\backslash\{s\}$.
Let $\mathcal{T}$ denote a routing tree on the given graph $G$, and $D^{\mathcal{T}}_{i}$ denote the number of descendants of a node $i$ in the routing tree $\mathcal{T}$. Maximization of the common rate $\lambda_i=\lambda$ over all routing trees can be stated as:
 \begin{equation}
\max_{\mathcal{T}}\min_{i\in N} {b_i}/({c_{\text{st}}+ c_{\text{rt}}D^{\mathcal{T}}_{i}}) \label{MLT}
\end{equation}
This problem is equivalent to maximizing the network lifetime for $\lambda_i=1$ $\forall i\in V\backslash\{s\}$ as studied in \cite{infocom2005}. This problem, which we call \textsc{P-Maximum-Lifetime-Tree}, was proved to 
be NP-hard in \cite{infocom2005} using a reduction from the \textsc{Set-Cover} problem \cite{Kar72}. The instance used in \cite{infocom2005} for showing the NP-hardness of the problem has the property that the equivalent problem of finding a tree with the lexicographically maximum rate assignment, \textsc{P-Tree-Find}, is such that at the optimum $\lambda_1=\lambda_2=...=\lambda_n=\lambda$. Therefore, \textsc{P-Tree-Find} is also NP-hard. 
\else
In such a setting, determining a tree with the maximum value of the minimum sensing rate is equivalent to the maximum lifetime tree problem from \cite{infocom2005}. The instance used in \cite{infocom2005} for showing the NP-hardness of the problem has the property that on that instance, at the optimum, \textsc{P-Tree-Find} produces  $\lambda_1=\lambda_2=...=\lambda_n=\lambda$. Therefore, \textsc{P-Tree-Find} is also NP-hard. 
\fi

\iffullresults
\else
We state the following lower bound result without a proof, and instead provide it in \cite{mobihoc-tech-rep}.
\fi
\iffullresults
We will strengthen the hardness result here and show that the lower bound on the approximation ratio for the \textsc{P-Tree-Find} problem is $\Omega(\log n)$. Notice that because we are using the instance for which at the optimum $\lambda_i=\lambda$ $\forall i$, the meaning of the approximation ratio is clear. In general, the optimal routing tree can have a rate assignment with distinct values of the rates, in which case we would need to consider an approximation to a vector $\{\lambda_{i}\}_{i\in\{1,...,n\}}$. However, we note that for any reasonable definition of approximation (e.g., element-wise or prefix-sum as in \cite{fairness-in-routing}) our result for the lower bound is still valid. As for the instance we use \textsc{P-Tree-Find} is equivalent to the \textsc{P-Maximum-Lifetime-Tree} problem, the lower bound applies to both problems.

We extend the reduction from the \textsc{Set-Cover} problem used in \cite{infocom2005} to prove the lower bound on the approximation ratio. In the \textsc{Set-Cover} problem, we are given elements $1, 2,..., n^*$ and sets $S_1, S_2, ..., S_m\subset \{1,2,..., n^*\}$. The goal is to determine the minimum number of sets from $S_1,...,S_m$ that cover all the elements $\{1,...,n^{*}\}$. Alternatively, the problem can be recast as a decision problem that determines whether there is a set cover of size $k$ or not. Then the minimum set cover can be determined by finding the smallest $k$ for which the answer is "yes". 

Suppose that there exists an approximation algorithm that solves \textsc{P-Tree-Find} (or \textsc{P-Maximum-Lifetime-Tree}) with the approximation ratio $r$. 
For a given instance of \textsc{Set-Cover}, construct an instance of \textsc{P-Tree-Find} as in Fig.~\ref{app-ratio-tree} and denote it by $G$. This reduction is similar to the reduction used in \cite{infocom2005}, with modifications being made by adding line-topology graphs, and by modifying the node capacities appropriately to limit the size of the solution to the corresponding \textsc{Set-Cover} problem. Let $l^x$ denote a directed graph with line topology of size $x$. Assume that all the nodes in any $l^x$ have capacities that are non-constraining. 
By the same observations as in the proof of NP-completeness of \textsc{P-Maximum-Lifetime-Tree} \cite{infocom2005}, if there is a routing tree that achieves $\lambda=1$, then there is a set cover of size $k$ for the given input instance of \textsc{Set-Cover}.

 \begin{figure}[h]
\centering
\includegraphics[width=0.7\linewidth]{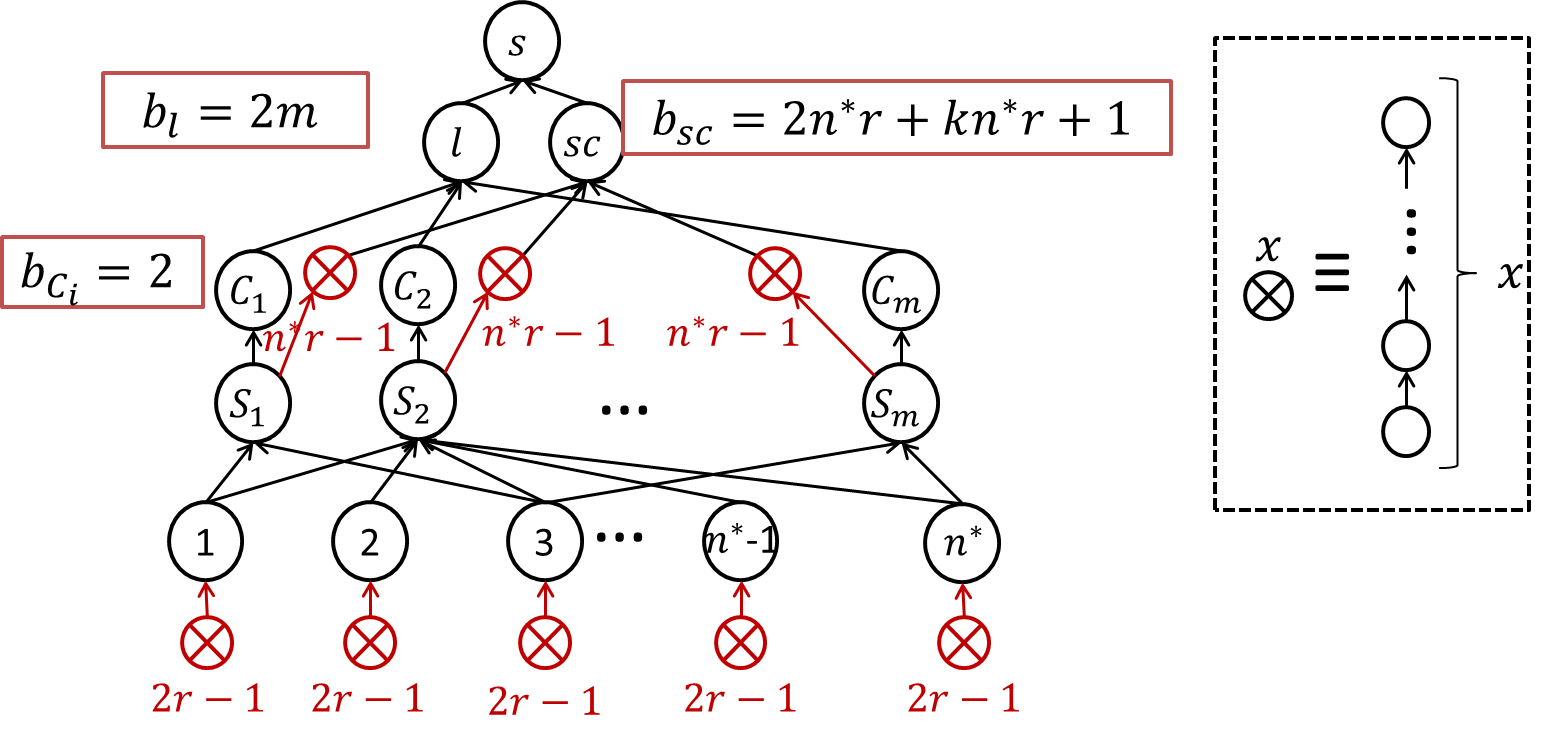}\vspace{-10pt}
\caption{A lower bound on the approximation ratio for \textsc{P-Tree-Find}. Nodes $1,2,...,n^*$ correspond to the elements, whereas nodes $S_1, S_2,..., S_m$ correspond to the sets in the \textsc{Set-Cover} problem. An element node $i$ is connected to a set node $S_j$ if in the \textsc{Set-Cover} problem $i\in S_j$. If there is a set cover of size $k$, then at $\lambda=1$ all the non-set-cover nodes are connected to the tree rooted at the node $l$, whereas all the set cover nodes and all the element nodes are in the tree rooted at $sc$. The line-topology graphs represented by crossed circles are added to limit the size of an approximate solution to the \textsc{Set-Cover} problem.}\vspace{-10pt}
\label{app-ratio-tree}
\end{figure}

Now observe a solution that an approximation algorithm with the ratio $r$ would produce, that is, an algorithm for which $\frac{1}{r}\leq\lambda\leq 1$ when $\lambda_{\text{OPT}}=1$. 

\begin{lemma}\label{leaf-nodes}
In any routing tree for which $\frac{1}{r}\leq\lambda\leq 1$, each node $C_j$ can have at most one descendant.
\end{lemma}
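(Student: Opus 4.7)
The plan is to argue by contradiction: suppose some $C_j$ has at least two descendants in a routing tree $\mathcal{T}$ that achieves $\lambda \geq 1/r$, and derive a contradiction from the rate bound at $C_j$.

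By equation (\ref{MLT}), in any routing tree the common rate satisfies $\lambda \leq b_{C_j}/(c_{\text{st}} + c_{\text{rt}} D^{\mathcal{T}}_{C_j})$ at every node. Substituting $D^{\mathcal{T}}_{C_j} \geq 2$ gives $\lambda \leq b_{C_j}/(c_{\text{st}} + 2 c_{\text{rt}})$. The first step of the proof is therefore to read off from Figure \ref{app-ratio-tree} the precise battery $b_{C_j}$ and cost parameters $c_{\text{st}}, c_{\text{rt}}$ assigned to $C_j$ in the reduction. By design, the construction calibrates $b_{C_j}$ so that $b_{C_j}/(c_{\text{st}} + c_{\text{rt}}) \geq 1$, allowing $\lambda = 1$ at $C_j$ whenever $C_j$ has at most one descendant (mirroring a ``yes'' instance of \textsc{Set-Cover}), while $b_{C_j}/(c_{\text{st}} + 2 c_{\text{rt}}) < 1/r$, so two or more descendants cannot be sustained by any $r$-approximate solution.

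Plugging these inequalities into the rate bound immediately yields $\lambda < 1/r$ whenever $D^{\mathcal{T}}_{C_j} \geq 2$, contradicting the standing assumption $\lambda \geq 1/r$. Hence every $C_j$ has at most one descendant in $\mathcal{T}$, which is the claim of the lemma.

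The main obstacle is the numerical calibration of the battery and cost parameters. Since $r = \Omega(\log n)$ grows with the total number of nodes $n$---inflated by the line-topology gadgets $l^x$---the gap between $b_{C_j}/(c_{\text{st}} + c_{\text{rt}})$ and $b_{C_j}/(c_{\text{st}} + 2 c_{\text{rt}})$ must be large enough to absorb a factor of $r$. This is exactly the purpose of the $l^x$ gadgets in the reduction: by inflating $n$ (and thereby $r$) while simultaneously enlarging $c_{\text{rt}}$ relative to $c_{\text{st}}$ (or equivalently shrinking $b_{C_j}$ in appropriate units), the ratio between the one-descendant and two-descendant bounds can be made to straddle $1/r$. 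Once this calibration is in place, the deduction above is immediate.
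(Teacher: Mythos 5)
Your approach has a genuine gap. You bound the descendant count only by $D^{\mathcal{T}}_{C_j}\geq 2$, but with the reduction's fixed costs $c_{\text{st}}=c_{\text{rt}}=1$ and $b_{C_j}=2$, that yields $\lambda\leq 2/3$ — nowhere near a contradiction with $\lambda\geq 1/r$ when $r=\Omega(\log n)$ is large. You then propose to rescue the argument by ``calibrating'' the parameters so that $b_{C_j}/(c_{\text{st}}+c_{\text{rt}})\geq 1$ while $b_{C_j}/(c_{\text{st}}+2c_{\text{rt}})<1/r$, but those two conditions are mutually inconsistent for large $r$: the first forces $b_{C_j}\geq c_{\text{st}}+c_{\text{rt}}$, while the second forces $b_{C_j}<(c_{\text{st}}+2c_{\text{rt}})/r$, which shrinks to $0$. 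No choice of fixed battery/cost constants can straddle an arbitrarily small $1/r$ gap between ``one descendant'' and ``two descendants.''

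The idea you are missing is the actual role of the line-topology gadgets $l^x$: they do not modulate costs, they inflate the \emph{descendant count}. If $C_j$ has more than one descendant, its second neighbor in the tree must be an element node, and since the $l^{2r}$ chain attached to that element node has no other way to reach the sink, the entire chain is pulled into $C_j$'s subtree. Hence $D^{\mathcal{T}}_{C_j}\geq 2r+1$, and the energy consumption at $C_j$ is $\lambda(1+D^{\mathcal{T}}_{C_j})\geq (2r+2)/r>2=b_{C_j}$, which is the contradiction. Your proposal identifies the right overall contradiction shape but skips precisely this combinatorial inflation step, which is the substance of the lemma; without it, $D_{C_j}\geq 2$ gives no useful bound.
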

\begin{proof}
Suppose that there is some routing tree $T$ in which some $C_j$, $j=\{1,...,m\}$ has more than 1 descendants. Then $C_j$ must have at least one element node as its descendant. But if $C_j$ has an element node as its descendant, then the line-topology graph connected to that element node must also be in $C_j$'s descendant list, because $T$ must contain all the nodes, and a line-topology graph connected to the element node has no other neighbors. Therefore, $C_j$ has at least $2r+1$ descendants. If $\lambda\geq\frac{1}{r}$, then the energy consumption at node $C_j$ is $\frac{2r+2}{r}>2$. But the capacity of the node $C_j$ is 2, which is strictly less than the energy consumption; therefore, a contradiction.
\end{proof}

Lemma \ref{leaf-nodes} implies that if there is a routing tree that achieves $\frac{1}{r}\leq\lambda\leq 1$, then all the element nodes will be connected to the tree rooted at $sc$ through the set nodes they belong to. Therefore, the subtree rooted at $sc$ will correspond to a set cover. The next question to be asked is how large can this set cover be (as compared to $k$)? The next lemma deals with this question.

\begin{lemma}\label{sc-size}
If there is a routing tree $\mathcal{T}$ that achieves $\frac{1}{r}\leq\lambda\leq 1$, then the subtree rooted at $sc$ in $\mathcal{T}$ contains at most $p\leq3rk$ nodes.
\end{lemma}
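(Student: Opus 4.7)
The plan is to bound $p$ by applying the energy conservation constraint at the root $sc$ of the subtree. Since every node in the $sc$-subtree routes its entire flow through $sc$, and each of the $p-1$ descendants of $sc$ sends $\lambda$ units of flow, the total per-slot energy consumption at $sc$ satisfies
\[
\lambda\bigl(c_{\text{st}} + c_{\text{rt}}(p-1)\bigr) \leq b_{sc},
\]
where $b_{sc}$ is the capacity assigned to $sc$ in the reduction depicted in Fig.~\ref{app-ratio-tree}. Substituting $\lambda \geq 1/r$ and rearranging gives
\[
p \leq 1 + \frac{r\, b_{sc} - c_{\text{st}}}{c_{\text{rt}}}.
\]

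The construction sets $b_{sc}$ proportionally to $k$: large enough to admit exactly $k$ chosen set-nodes together with all the elements they cover at $\lambda=1$ (so that a yes-instance of \textsc{Set-Cover} with cover size $k$ yields a feasible routing tree with $\lambda=1$), but no larger. With the normalized costs of the reduction (e.g.\ $c_{\text{st}}=c_{\text{rt}}=1$ and $b_{sc}$ set so that $sc$ supports on the order of $3k$ units of flow at $\lambda=1$), the bound above simplifies to $p \leq 3rk$.

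The main technical point is to justify the constant $3$, which requires an accounting of how many nodes can be ``packed'' into the $sc$-subtree at rate $\lambda \geq 1/r$. Two potential sources of extra descendants must be ruled out: the $C_j$ gadgets and the attached line-topology subgraphs. The first is handled by Lemma~\ref{leaf-nodes}, which shows that each $C_j$ contributes at most one descendant whenever $\lambda \geq 1/r$; the second is handled by the fact that the line-topology graphs attach only through the $C_j$'s (and analogous gadgets), so they cannot inflate the $sc$-subtree without first violating a $C_j$'s budget. Given these two restrictions, the subtree size is governed by the energy budget at $sc$, and the algebra above delivers $p \leq 3rk$.

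The hard part is checking the constants carefully: the exact value of $b_{sc}$ and the multiplicities introduced by the gadget attachments determine the ``$3$'' in the bound. Once these are verified from the reduction, the lemma follows directly from the single inequality at $sc$ combined with Lemma~\ref{leaf-nodes}.
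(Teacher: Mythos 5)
There is a real gap here, and it stems from a misreading of what $p$ counts. In the paper's proof (and in its subsequent use in Lemma~\ref{r-3r}), $p$ is the \emph{number of set nodes} $C_j$ connected into the subtree rooted at $sc$, not the total number of nodes in that subtree. The lemma statement is admittedly terse, but the surrounding argument is unambiguous: the goal is to show that a near-optimal tree induces a \emph{set cover} of size at most $3rk$. Your proposal treats $p$ as the total subtree size; that version of the claim is actually false, since the $sc$-subtree must contain every element node together with its attached line gadget (on the order of $2n^*r$ nodes), which already exceeds $3rk$ for modest $k$.

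This misreading then forces an inconsistency in your accounting of $b_{sc}$. You first (correctly) say that $b_{sc}$ is set large enough to admit $k$ set nodes together with \emph{all} the elements at $\lambda=1$, but then assert that $b_{sc}$ supports only ``on the order of $3k$ units of flow'' — these two statements contradict each other, since routing all elements through $sc$ already requires flow of order $n^*r$. The second (incorrect) value of $b_{sc}$ is precisely what makes your inequality $p \le 1 + (r\, b_{sc} - c_{\text{st}})/c_{\text{rt}}$ collapse to $p \le 3rk$; in effect the constant $3$ is assumed rather than derived. The paper's argument instead sets $b_{sc} = 2n^*r + kn^*r + 1$, counts the descendants of $sc$ as $(2+p)n^*r$ \emph{parametrized by the number $p$ of set nodes} (using Lemma~\ref{leaf-nodes} to ensure each $C_j$ brings at most one extra descendant, and the gadget sizes to make each attached set node contribute exactly $n^*r$ descendants), writes the energy constraint $\bigl((2+p)n^*r + 1\bigr)\lambda \le (2+k)n^*r + 1$, substitutes $\lambda \ge 1/r$, and only then uses $k\ge 1$ to absorb the additive term and get $p \le (2+k)r \le 3rk$. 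Your high-level instinct — bound $p$ via the energy budget at $sc$ and $\lambda \ge 1/r$ — is the right one, but the object being bounded and the size of $b_{sc}$ must be corrected for the algebra to be a proof rather than a restatement of the target.
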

\begin{proof}
Let $\mathcal{T}$ be a routing tree that achieves $\frac{1}{r}\leq\lambda\leq 1$.

The capacity of the node $sc$ determines the number of the set nodes that can be connected to $sc$. As all the element nodes (and line-topology graphs connected to them) are in the subtree rooted at $sc$, when there are $p$ set nodes connected to $sc$, $sc$ has $2n^*r+pn^*r$ descendants. As each node has $\frac{1}{r}\leq\lambda\leq 1$ sensing rate, the energy consumption at the node $sc$ is $e_{sc}=(2n^*r+pn^*r+1)\lambda$. For the solution to be feasible, it must be $e_{sc}\leq b_{sc}$. Therefore:
\begin{align*}
&(2n^*r+pn^*r+1)\lambda\leq2n^*r+kn^*r+1
\end{align*}
\begin{align*}
\Leftrightarrow\>\> p\leq&\frac{1}{\lambda}\cdot \frac{2n^*r+kn^*r+1}{n^*r}-\frac{2n^*r+1}{n^*r}\\
=&\frac{1}{\lambda}\left(2+k+\frac{1}{n^*r}\right)-2-\frac{1}{n^*r}
\end{align*}
As $\lambda\geq\frac{1}{r}$:
$
p\leq (2+k)r+\frac{1}{n^*}-2-\frac{1}{n^*r}\leq (2+k)r\leq k\cdot3r
$,
where the last inequality comes from $k\geq 1$.
\end{proof}

The last lemma implies that if we knew how to solve \textsc{P-Tree-Find} in polynomial time with the approximation ratio $r$, then for an instance of \textsc{Set-Cover} we could run this algorithm for $k=\{1,2,...,m-1\}$ (verifying whether $k=m$ is a set cover is trivial) and find a $3r$-approximation for the minimum set cover, which is stated in the following lemma.
\begin{lemma}\label{r-3r}
If there is a polynomial-time $r$-approximation algorithm for \textsc{P-Tree-Find}, then there is a polynomial-time $3r$-approximation algorithm for \textsc{Set-Cover}.
\end{lemma}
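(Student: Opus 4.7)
The plan is to convert any polynomial-time $r$-approximation algorithm $\mathcal{A}$ for \textsc{P-Tree-Find} into a $3r$-approximation for \textsc{Set-Cover} by using $\mathcal{A}$ as a subroutine on a family of instances indexed by a guessed cover size. Given a \textsc{Set-Cover} instance with (unknown) optimal cover size $k^{*}$, for each $k \in \{1, 2, \ldots, m-1\}$ I would construct the graph $G_k$ from the reduction in Fig.~\ref{app-ratio-tree} (with $k$ determining the capacity of the node $sc$), invoke $\mathcal{A}$ on $G_k$, and extract from the returned tree $\mathcal{T}_k$ the set of set-nodes appearing in the subtree rooted at $sc$. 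I would then test whether these set-nodes cover every element node $\{1,\ldots,n^{*}\}$, and output the smallest valid cover found as $k$ ranges over $\{1,\ldots,m-1\}$.

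Correctness rests on two observations. First, when $k \geq k^{*}$, the instance $G_k$ admits a routing tree with $\lambda_{\text{OPT}} = 1$ by the reduction of \cite{infocom2005}, so $\mathcal{A}$ must return a tree $\mathcal{T}_k$ with $\lambda \geq 1/r$. By Lemma \ref{leaf-nodes}, no node $C_j$ in $\mathcal{T}_k$ has more than one descendant, which forces every element node (together with its attached line-topology graph) to lie in the subtree rooted at $sc$; hence the set-nodes appearing in that subtree jointly cover every element. By Lemma \ref{sc-size}, at most $3rk$ set-nodes appear in this subtree, so the extracted cover has size at most $3rk$, and in particular at most $3rk^{*}$ when $k = k^{*}$.

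Second, for any $k < k^{*}$ the extraction step either fails because $\lambda < 1/r$ (so Lemmas \ref{leaf-nodes} and \ref{sc-size} are not in force) or because the recovered set-nodes fail to cover all elements; the verification step then simply discards the candidate and continues. Because a valid cover is guaranteed to be produced by the time the iteration reaches $k = k^{*}$, the procedure terminates with some cover of size at most $3rk^{*}$. The total running time is polynomial: at most $m$ invocations of $\mathcal{A}$ on instances of size polynomial in the input, each followed by a linear-time subtree traversal and cover-check.

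The main (and only) conceptual hurdle is certifying that when $\mathcal{A}$ achieves $\lambda \geq 1/r$, the subtree rooted at $sc$ yields a valid cover of size at most $3rk$; this is exactly what Lemmas \ref{leaf-nodes} and \ref{sc-size} deliver, so no further technical work is required beyond assembling the iteration and the verification step. Combined with the known $\Omega(\log n^{*})$ lower bound on the approximability of \textsc{Set-Cover}, this immediately yields the $\Omega(\log n)$ inapproximability bound for \textsc{P-Tree-Find} claimed at the start of the section.
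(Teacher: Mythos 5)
Your proof is correct and follows essentially the same route as the paper: iterate over guessed cover sizes $k$, invoke the $r$-approximation on the instance $G_k$ of Fig.~\ref{app-ratio-tree}, and use Lemmas~\ref{leaf-nodes} and~\ref{sc-size} to extract a set cover of size at most $3rk$ whenever $\lambda \geq 1/r$, which is guaranteed to happen by $k = k^*$. The only cosmetic difference is that the paper keys its argument on $k_m$, the smallest $k$ with $\lambda \geq 1/r$, and argues $k_m \leq k^*$, whereas you iterate all $k$ and return the smallest valid cover found; both formulations deliver the same $3r$ bound.
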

\begin{proof}
Suppose that there was an algorithm that solves \textsc{P-Tree-Find} in polynomial time with some approximation ratio $r$. For a given instance of \textsc{Set-Cover} construct an instance of \textsc{P-Tree-Find} as in Fig.~\ref{app-ratio-tree}. Solve (approximately) \textsc{P-Tree-Find} for $k\in\{1,..., m-1\}$. In all the solutions, it must be $\lambda\leq 1$. Let $k_m$ denote the minimum $k\in\{1,..., m-1\}$ for which $\lambda\geq\frac{1}{r}$. Then the minimum set cover size for the input instance of \textsc{Set-Cover} is $k^*\geq k_m$, otherwise there would be some other $k_m\rq{}<k_m$ for which $\lambda\geq\frac{1}{r}$. From Lemmas \ref{leaf-nodes} and \ref{sc-size}, the solution to the constructed instance of \textsc{P-Tree-Find} corresponds to a set cover of size $p\leq 3r\cdot k_m$ for the input instance. But this implies $p\leq 3r\cdot k^*$, and, therefore, the algorithm provides a $3r$-approximation to the \textsc{Set-Cover}.
\end{proof}\fi
\begin{theorem}\label{thm:tree-lower-bound}
The lower bound on the approximation ratio of \textsc{P-Tree-Find} is $\Omega(\log n)$.
\end{theorem}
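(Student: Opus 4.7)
The plan is to obtain the $\Omega(\log n)$ lower bound by giving an approximation-preserving reduction from \textsc{Set-Cover}, whose well-known inapproximability (under standard complexity assumptions) is a $(1-o(1))\ln n$ factor. Concretely, I will build on the reduction from \textsc{Set-Cover} to \textsc{P-Maximum-Lifetime-Tree} used in \cite{infocom2005}, and argue that not only is exact solvability NP-hard but so is any approximation better than a logarithmic factor. The instance produced by the reduction has the special property that the optimal tree satisfies $\lambda_1 = \lambda_2 = \cdots = \lambda_n = \lambda_{\text{OPT}} = 1$, so on this family the meaning of an ``$r$-approximation'' is unambiguous and coincides across the standard element-wise, prefix-sum, and max-min definitions; thus the lower bound applies to \textsc{P-Tree-Find} as stated.

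First I would augment the reduction of \cite{infocom2005}. Given a \textsc{Set-Cover} instance with elements $1,\dots,n^*$, sets $S_1,\dots,S_m$, and a target size $k$, I build a graph containing: element nodes $1,\dots,n^*$; set nodes $S_1,\dots,S_m$; a ``cover root'' $sc$ connected to all $S_j$; an auxiliary root $l$ connected to all set nodes; and, crucially, long line-topology gadgets of length $2r$ hung off each element node to inflate descendant counts. The node capacities $b_v$ are chosen so that a tree of value $\lambda = 1$ exists iff there is a cover of size $k$: intuitively, $b_{sc}$ is set to $2n^*r + kn^*r + 1$ so that $sc$ can accommodate exactly $k$ set-subtrees together with all element gadgets.

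Next I would prove the two structural lemmas that drive the approximation gap. Lemma A (``leaf-like sets''): in any routing tree achieving $\lambda \ge 1/r$, every set node $S_j$ can have at most one descendant, because otherwise it inherits an entire length-$2r$ line gadget and its energy demand exceeds its capacity. This forces all element gadgets to be routed through $sc$ via some subset of the set nodes, so the set nodes lying in the $sc$-subtree form a valid set cover. Lemma B (``cover blowup''): the capacity budget at $sc$ forces the number $p$ of set nodes attached to $sc$ to satisfy $p \le (2+k)r - 2 - \tfrac{1}{n^*r} + \tfrac{1}{n^*} \le 3rk$, by straightforward algebra on $(2n^*r + pn^*r + 1)\lambda \le b_{sc}$ with $\lambda \ge 1/r$. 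Combining these, an $r$-approximation for \textsc{P-Tree-Find} would in polynomial time (by sweeping $k = 1,\dots,m-1$ and taking the smallest $k$ that yields $\lambda \ge 1/r$) yield a $3r$-approximation for \textsc{Set-Cover}. Invoking the $\Omega(\log n^*)$ hardness of \textsc{Set-Cover} and observing that $n^* = \Theta(n)$ in the constructed graph, we conclude $r = \Omega(\log n)$.

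The main obstacle is the design of the line-topology gadgets and the precise choice of $b_{sc}$: they must simultaneously (i) kill the possibility of an approximate solution ``cheating'' by routing element nodes through $l$ instead of $sc$ (Lemma A), and (ii) tightly bound the cover size extracted from any approximate tree (Lemma B). Both bounds depend on the value $r$ of the hypothetical approximation, so the gadget sizes are parameterized by $r$; I would need to verify that for every integer $r = O(\log n)$ this construction still has polynomial size, which holds since the gadgets contribute $O(rn^*)$ extra nodes. The remaining verifications (feasibility of the YES-instance tree, and that the constructed $\lambda \ge 1/r$ instance is polynomially checkable) are routine.
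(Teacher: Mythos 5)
Your proposal is essentially the paper's proof: the same gadget-augmented reduction from \textsc{Set-Cover} (extending \cite{infocom2005}), the same two structural lemmas (your Lemma~A and Lemma~B are exactly the paper's Lemmas~\ref{leaf-nodes} and~\ref{sc-size}), and the same $r \mapsto 3r$ approximation transfer of Lemma~\ref{r-3r}. The one slip is the assertion that ``$n^* = \Theta(n)$'': the constructed graph has $n = 2rn^* + mrn^* + 3$ nodes with $m = \mathrm{poly}(n^*)$, so $n^*$ is only \emph{polynomially} related to $n$, not linearly; the paper instead bounds $n^* \ge (n/r)^{1/c'}$ and solves the resulting implicit inequality $r \ge \tfrac{c}{3c'}(\log n - \log r)$ to conclude $r = \Omega(\log n)$, though your conclusion still follows since $\log n^* = \Theta(\log n)$ in any case.
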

\iffullresults
\begin{proof}
The lower bound on the approximation ratio of \textsc{Set-Cover} was shown to be $\Omega(\log n)$ in \cite{yannakakis}.

The proof for the lower bound on the approximation ratio given in \cite{yannakakis} was derived assuming a polynomial relation between $n^*$ and $m$. Therefore, the lower bound of $\Omega(\log n^*)$ holds for $m={n^*}^{c^*}$, where $c^* \in \mathbb{R}$ is some positive constant. Assume that $n^*\geq3$. The graph given for an instance of \textsc{Set-Cover} (as in Fig.~\ref{app-ratio-tree}) contains
$
n=2rn^*+mrn^*+3
\leq r{n^*}^{c'}
$
nodes, for some other constant $c\rq{}>1$. Therefore: $
n^*\geq\sqrt[c']{\dfrac{n}{r}}.
$
As $r\geq \dfrac{1}{3}c \log n^*$, it follows that:
\begin{gather*}
r\geq \frac{1}{3}c\log\sqrt[c\rq{}]{\frac{n}{r}}
=\frac{c}{3c'}(\log n - \log r)\\
\Leftrightarrow \>\> \frac{c}{3c'}\log r + r \geq \frac{c}{3c'} \log n
\Rightarrow r \geq c''\log n,
\end{gather*} 
for some $c''\in \mathbb{R}$.
\end{proof}\fi

\section{Conclusions and Future Work}\label{section:conclusion}
\begin{table*}[t!]
\small
\caption{Running times of the algorithms for the \protect\textsc{Water-filling-Framework} implementation.}
\centering
\begin{tabular}{|c|m{0.3\linewidth}| m{0.16\linewidth}|m{0.3\linewidth}|}
\hline
&\multicolumn{1}{c|}{\multirow{2}{*}{\textsc{Maximizing-the-Rates}}} & \multicolumn{1}{c|}{\multirow{2}{*}{\textsc{Fixing-the-Rates}}}& \multicolumn{1}{c|}{\multirow{2}{*}{\textbf{Total}}}\\[10pt]
\hline
 \textsc{P-Unsplittable-Find} & $O(nT\log( \frac{B+\max_{i, t} e_{i, t}}{\delta c_{\text{st}}}))$ & $O(mT)$ & $O(nT(nT\log(\frac{B+\max_{i, t} e_{i, t}}{(\delta c_{\text{st}})} )+mT))$\\
 \textsc{P-Fixed-Fractional} & $O( n\log(\frac{b_{i, 1}+e_{i, 1}}{\delta}) (T + MF(n, m)))$ & $O(m)$ & $O( n\log(\frac{b_{i, 1}+e_{i, 1}}{\delta}) (T + MF(n, m)))$\\
 \textsc{P-Fractional} & $\tilde{O}({T^2}\epsilon^{-2}\cdot(nT+MCF(n, m)))$ & $LP(mT, nT)$ & $\tilde{O}(nT(T^2\epsilon^{-2}\cdot (nT+MCF(n, m) + LP(mT, nT)))$\\
\hline
\end{tabular}\vspace{-10pt}
\label{table:running-times}
\end{table*}

This \iffullresults paper \else paper \fi presents a comprehensive algorithmic study of the max-min fair rate assignment and routing problems in energy harvesting networks with predictable energy profile. We develop algorithms for the \textsc{Water-filling-Framework} implementation under various routing types. The running times of the developed algorithms are summarized in Table~\ref{table:running-times}. The algorithms provide important insights into the structure of the problems, and can serve as benchmarks for evaluating distributed and approximate algorithms possibly designed for unpredictable energy profiles.

The results reveal interesting trade-offs between different routing types. While we provide an efficient algorithm that solves the rate assignment problem in any time variable or time-invariable unsplittable routing or a routing tree, we also show that determining a routing with the lexicographically maximum rate assignment for any of these settings is NP-hard. On the positive side, we are able to construct a combinatorial algorithm that determines a time-invariable unsplittable routing which maximizes the minimum sensing rate assigned to any node in any time slot.

 Fractional time-variable routing provides the best rate assignment (in terms of lexicographical maximization), and both the routing and the rate assignment are determined jointly by one algorithm. However, as demonstrated in Section \ref{section:fractional}, the problem is unlikely to be solved optimally without the use of linear programming, incurring a high running time. While we provide an FPTAS for this problem, reducing the algorithm running time by a factor of $O(nT)$ (as compared to the framework of \cite{Radunovic2007, sensnet--lexicographic, OSU--lexicographic}), the proposed algorithm still requires solving $O(nT)$ linear programs.
 
If fractional routing is restricted to be time-invariable and with constant rates, the problem can be solved by a combinatorial algorithm, which we provide in Section \ref{section:fixed fractional}. However, as discussed in the introduction, constant sensing rates often result in the underutilization of the available energy.

There are several directions for future work. For example, extending the model to incorporate the energy consumption due to the control messages exchange would provide a more realistic setting. Moreover, designing algorithms for unpredictable energy profiles that can be implemented in an online and/or distributed manner is of high practical significance.

\section{Acknowledgements}
{This research was supported in part by NSF grants CCF-1349602, CCF-09-64497, and CNS-10-54856.
The authors are grateful to Prof. Mihalis Yannakakis for useful discussions.}

\vspace{-5pt}
\bibliographystyle{abbrv}
{
\iffullresults
\else
\scriptsize
\fi
\bibliography{references}

\begin{thebibliography}{10}

\bibitem{network_flows}
R.~K. Ahuja, T.~L. Magnanti, and J.~B. Orlin.
\newblock {\em Network flows: theory, algorithms, and applications}.
\newblock Prentice-Hall, Inc., Upper Saddle River, NJ, USA, 1993.

\bibitem{uysal2013finite}
B.~Bacinoglu and E.~Uysal-Biyikoglu.
\newblock {Finite-horizon Online Transmission Rate and Power Adaptation on a
  Communication Link with Markovian Energy Harvesting}.
\newblock {\em CoRR}, abs/1305.4558, 2013.

\bibitem{Bertsekas_Gallager}
D.~Bertsekas and R.~Gallager.
\newblock {\em Data networks (2nd ed.)}.
\newblock Prentice-Hall, Inc., Upper Saddle River, NJ, USA, 1992.

\bibitem{dohler-link}
P.~Blasco, D.~Gunduz, and M.~Dohler.
\newblock A learning theoretic approach to energy harvesting communication
  system optimization.
\newblock In {\em Proc. IEEE Globecom Workshops'12}, 2012.

\bibitem{infocom2005}
C.~Buragohain, D.~Agrawal, and S.~Suri.
\newblock Power aware routing for sensor databases.
\newblock In {\em Proc. IEEE INFOCOM'05}, 2005.

\bibitem{chang2004maximum}
J.-H. Chang and L.~Tassiulas.
\newblock Maximum lifetime routing in wireless sensor networks.
\newblock {\em IEEE/ACM Trans. Netw.}, 12(4):609--619, 2004.

\bibitem{charny1995congestion}
A.~Charny, D.~D. Clark, and R.~Jain.
\newblock Congestion control with explicit rate indication.
\newblock In {\em Proc. IEEE ICC'95}, 1995.

\bibitem{sensnet--lexicographic}
S.~Chen, Y.~Fang, and Y.~Xia.
\newblock Lexicographic maxmin fairness for data collection in wireless sensor
  networks.
\newblock {\em IEEE Trans. Mobile Comput.}, 6(7):762--776, July 2007.

\bibitem{chen2011finite}
S.~Chen, P.~Sinha, N.~Shroff, and C.~Joo.
\newblock Finite-horizon energy allocation and routing scheme in rechargeable
  sensor networks.
\newblock In {\em Proc. IEEE INFOCOM'11}, 2011.

\bibitem{chen2012simple}
S.~Chen, P.~Sinha, N.~Shroff, and C.~Joo.
\newblock A simple asymptotically optimal energy allocation and routing scheme
  in rechargeable sensor networks.
\newblock In {\em Proc. IEEE INFOCOM'12}, 2012.

\bibitem{debruin2013monjolo}
S.~DeBruin, B.~Campbell, and P.~Dutta.
\newblock Monjolo: an energy-harvesting energy meter architecture.
\newblock In {\em ACM SenSys'13}, 2013.

\bibitem{gatzianas}
M.~Gatzianas, L.~Georgiadis, and L.~Tassiulas.
\newblock Control of wireless networks with rechargeable batteries.
\newblock {\em IEEE Trans. Wireless Commun.}, 9(2):581--593, 2010.

\bibitem{maria-wiopt2011}
M.~Gorlatova, A.~Bernstein, and G.~Zussman.
\newblock Performance evaluation of resource allocation policies for energy
  harvesting devices.
\newblock In {\em Proc. WiOpt'11}, 2011.

\bibitem{gorlatova2009challenge}
M.~Gorlatova, P.~Kinget, I.~Kymissis, D.~Rubenstein, X.~Wang, and G.~Zussman.
\newblock Challenge: ultra-low-power energy-harvesting active networked tags
  {(EnHANTs)}.
\newblock In {\em Proc. ACM MobiCom'09}, 2009.

\bibitem{rob-experimental}
M.~Gorlatova, R.~Margolies, J.~Sarik, G.~Stanje, J.~Zhu, B.~Vigraham,
  M.~Szczodrak, L.~Carloni, P.~Kinget, I.~Kymissis, and G.~Zussman.
\newblock Energy harvesting active networked tags ({EnHANTs}): Prototyping and
  experimentation.
\newblock Technical Report 2012-07-27, Columbia University, July 2012.

\bibitem{maria-infocom2011}
M.~Gorlatova, A.~Wallwater, and G.~Zussman.
\newblock Networking low-power energy harvesting devices: Measurements and
  algorithms.
\newblock {\em IEEE Trans. Mobile Comput.}, 12(9):1853--1865, 2013.

\bibitem{gurakan2013energy}
B.~Gurakan, O.~Ozel, J.~Yang, and S.~Ulukus.
\newblock Energy cooperation in energy harvesting two-way communications.
\newblock In {\em Proc. IEEE ICC'13}, 2013.

\bibitem{neely}
L.~Huang and M.~Neely.
\newblock Utility optimal scheduling in energy-harvesting networks.
\newblock {\em IEEE/ACM Trans. Netw.}, 21(4):1117--1130, 2013.

\bibitem{Kar72}
R.~Karp.
\newblock Reducibility among combinatorial problems.
\newblock In R.~Miller and J.~Thatcher, editors, {\em Complexity of Computer
  Computations}, pages 85--103. Plenum Press, 1972.

\bibitem{Kleinberg96}
J.~Kleinberg.
\newblock Single-source unsplittable flow.
\newblock In {\em Proc. IEEE FOCS'96}, 1996.

\bibitem{fairness-in-routing}
J.~Kleinberg, Y.~Rabani, and E.~Tardos.
\newblock Fairness in routing and load balancing.
\newblock In {\em Proc. IEEE FOCS'99}, 1999.

\bibitem{lenstra1990approximation}
J.~Lenstra, D.~Shmoys, and E.~Tardos.
\newblock Approximation algorithms for scheduling unrelated parallel machines.
\newblock {\em Mathematical programming}, 46(1-3):259--271, 1990.

\bibitem{shroff-srikant}
L.~Lin, N.~Shroff, and R.~Srikant.
\newblock Asymptotically optimal energy-aware routing for multihop wireless
  networks with renewable energy sources.
\newblock {\em IEEE/ACM Trans. Netw.}, 15(5):1021--1034, 2007.

\bibitem{OSU--lexicographic}
R.-S. Liu, K.-W. Fan, Z.~Zheng, and P.~Sinha.
\newblock Perpetual and fair data collection for environmental energy
  harvesting sensor networks.
\newblock {\em IEEE/ACM Trans. Netw.}, 19(4):947--960, Aug. 2011.

\bibitem{yannakakis}
C.~Lund and M.~Yannakakis.
\newblock On the hardness of approximating minimization problems.
\newblock {\em J. ACM}, 41(5):960--981, Sept. 1994.

\bibitem{madan2006distributed}
R.~Madan and S.~Lall.
\newblock Distributed algorithms for maximum lifetime routing in wireless
  sensor networks.
\newblock {\em IEEE Trans. Wireless Commun.}, 5(8):2185--2193, 2006.

\bibitem{mao-koksal-shroff}
Z.~Mao, C.~Koksal, and N.~Shroff.
\newblock Near optimal power and rate control of multi-hop sensor networks with
  energy replenishment: Basic limitations with finite energy and data storage.
\newblock {\em IEEE Trans. Autom. Control}, 57(4):815--829, 2012.

\bibitem{megiddo}
N.~Megiddo.
\newblock Optimal flows in networks with multiple sources and sinks.
\newblock {\em Mathematical Programming}, 7(1):97--107, 1974.

\bibitem{ozel2011transmission}
O.~Ozel, K.~Tutuncuoglu, J.~Yang, S.~Ulukus, and A.~Yener.
\newblock Transmission with energy harvesting nodes in fading wireless
  channels: Optimal policies.
\newblock {\em IEEE J. Sel. Areas Commun.}, 29(8):1732--1743, 2011.

\bibitem{Plotkin1995}
S.~Plotkin, D.~Shmoys, and E.~Tardos.
\newblock Fast approximation algorithms for fractional packing and covering
  problems.
\newblock {\em Math. of O.R.}, 20(2):257--301, 1995.

\bibitem{Radunovic2007}
B.~Radunovi\'{c} and J.-Y.~L. Boudec.
\newblock A unified framework for max-min and min-max fairness with
  applications.
\newblock {\em IEEE/ACM Trans. Netw.}, 15(5):1073--1083, Oct. 2007.

\bibitem{sarkar-kar}
S.~Sarkar, M.~Khouzani, and K.~Kar.
\newblock Optimal routing and scheduling in multihop wireless renewable energy
  networks.
\newblock {\em IEEE Trans. Autom. Control}, 58(7):1792--1798, 2013.

\bibitem{Sarkar-Tassiulas}
S.~Sarkar and L.~Tassiulas.
\newblock Fair allocation of discrete bandwidth layers in multicast networks.
\newblock In {\em Proc. IEEE INFOCOM'00}, 2000.

\bibitem{node-koksal}
R.~Srivastava and C.~Koksal.
\newblock Basic performance limits and tradeoffs in energy-harvesting sensor
  nodes with finite data and energy storage.
\newblock {\em IEEE/ACM Trans. Netw.}, 21(4):1049--1062, 2013.

\end{thebibliography}
}

\end{document}